\documentclass[10pt]{article}

\pdfoutput=1
\usepackage[pdftex]{color}
\usepackage{amssymb}
\usepackage{amsthm} 
\usepackage{amsmath}
\usepackage{latexsym}
\usepackage{amscd}
\usepackage{graphicx}
 \usepackage{longtable}
 \usepackage{array}
\usepackage{amssymb}
\usepackage[pdftex, colorlinks=true, citecolor=green]{hyperref}
\usepackage{lscape}
\usepackage{setspace}
\usepackage{multirow}
\usepackage{wrapfig}

\newcommand{\ben}{\begin{equation}}     
\newcommand{\eeqn}{\end{equation}}
\newcommand{\bey}{\begin{eqnarray}}
\newcommand{\eey}{\end{eqnarray}}

\newtheorem{thm}{Theorem}[section]

\newtheorem{rem}[thm]{Remark}
\usepackage{tikz}
\usetikzlibrary{arrows.meta, positioning, calc, fit, backgrounds}
\usepackage{booktabs}
\usepackage{xcolor}

\definecolor{jfSlate}{HTML}{2F3E46}
\definecolor{jfInk}{HTML}{1B262C}
\definecolor{jfExc}{HTML}{3D7E7A}
\definecolor{jfInh}{HTML}{9B4A3A}
\definecolor{jfDelay}{HTML}{6C7A89}
\definecolor{jfFillA}{HTML}{F2EFEA}
\definecolor{jfFillB}{HTML}{E8EEF1}
\definecolor{jfFillC}{HTML}{EDE7DF}

\tikzset{
  jfnode/.style={draw=jfSlate, line width=0.5pt, rounded corners=2pt,
    inner sep=5pt, align=center, font=\small, text=jfInk,
    minimum height=11mm, minimum width=26mm},
  jfcortex/.style={jfnode, fill=jfFillA},
  jfthal/.style  ={jfnode, fill=jfFillB},
  jfdelay/.style ={jfnode, fill=jfFillC, font=\footnotesize, align=center,
                   minimum width=84mm, minimum height=15mm},
  jfgroup/.style ={draw=jfSlate!40, line width=0.4pt, rounded corners=4pt,
                   dashed, inner sep=4mm},
  jflabel/.style ={font=\scriptsize, text=jfInk, inner sep=1pt,
                   fill=white, fill opacity=0.85, text opacity=1},
  jfexc/.style   ={->, >={Stealth[length=2.2mm,width=1.6mm]},
                   draw=jfExc, line width=0.6pt},
  jfinh/.style   ={->, >={Stealth[length=2.2mm,width=1.6mm,open]},
                   draw=jfInh, line width=0.6pt},
  jfdelarrow/.style={->, >={Stealth[length=2mm,width=1.4mm]},
                   draw=jfDelay, line width=0.5pt, dashed},
}
\begin{document}

\begin{flushleft}
{\Large
\textbf{A distributed-delay Wilson--Cowan model of sleep-related rhythms in the corticothalamic system}
}

Eva Kaslik$^1$, Anca R\v{a}dulescu$^2$, Anca Stanoev$^1$

\emph{Department of Computer Science, West University of Timi\c{s}oara}

\emph{Department of Mathematics, SUNY New Paltz}

\end{flushleft}
\begin{abstract}
\noindent The corticothalamic circuit supports rhythms with timescales that
differ by orders of magnitude: sleep spindles, the sigma-band events of
non-rapid-eye-movement (NREM) sleep, and infra-slow fluctuations near
0.02~Hz that organize when spindles occur. Because the anatomy is the same
in both cases, architecture alone cannot determine which rhythm the circuit
expresses. We ask whether the temporal structure of the circuit's own
feedback can. In a four-population Wilson--Cowan model comprising cortical
excitatory and inhibitory populations, thalamic relay cells, and the
thalamic reticular nucleus (TRN), we first establish how connectivity
controls access to oscillatory behavior, and then introduce temporal
coupling as either a weak Gamma distributed delay or a discrete delay.
 
We investigate three distinct connectivity levels: recurrent cortical excitation gates whether the circuit can oscillate at all, the reciprocal relay--TRN pair determines where the oscillation lies and how it is configured, sustained, and terminated, and reticular self-inhibition limits its extent. We then examine how these connectivity-dependent regimes are affected by delayed coupling. Although delay does not change the equilibria themselves, it can substantially alter their stability and the organization of the resulting oscillatory dynamics. Under weak Gamma integration, short delays support spindle-compatible oscillations in the sigma band, while longer delays give rise to a much slower regime near 0.02~Hz. The discrete-delay formulation produces a qualitatively different and more complex bifurcation structure. Together, these results show that the dynamics of the corticothalamic circuit depend not only on its connectivity, but also on the temporal organization of interactions within the circuit.

\end{abstract}

\section{Introduction}
 
Sleep is not a uniform state, but is organized into recurring stages with distinct patterns of brain activity. Much of sleep is spent in non-rapid-eye-movement (NREM) sleep, which ranges from lighter to deeper stages. One of its most characteristic features is the sleep spindle: a brief, waxing-and-waning burst of neural activity, typically lasting on the order of a second and oscillating in the sigma range of roughly 11--16~Hz. Spindles are especially prominent during the intermediate NREM stage known as N2~\cite{fernandez2020sleep,nir2011regional,mak2017coordination}. They arise within the corticothalamic circuit formed by thalamic relay neurons, the thalamic reticular nucleus (TRN), and cortical excitatory and inhibitory populations. Within this circuit, the reciprocal relay-TRN interaction provides a rhythm-generating core, while corticothalamic feedback helps shape the timing and spatial organization of individual spindle events~\cite{krosigk1993cellular,halassa2011selective,pinault2004thalamic,bastuji2020local}.

The same circuit also participates in activity on a dramatically slower timescale. During NREM sleep, an infra-slow fluctuation near 0.02~Hz modulates \emph{when} spindles occur, grouping them into clusters and contributing to alternating periods of greater sleep continuity and greater fragility~\cite{lecci2017coordinated,lazar2019infraslow,watson2018cognitive,champetier2023age}. This fluctuation is not simply a spindle slowed down. Rather, it reflects a different level of temporal organization, associated with the longer-timescale structure and maintenance of sleep. Thus, the same corticothalamic circuit is associated with rhythms whose characteristic timescales differ by several orders of magnitude.

A first place to look for the origin of these different behaviors is the architecture of the circuit itself. The pattern and strength of connections between neural populations strongly constrain the collective dynamics that a network can support~\cite{destexhe2009wilson,harris2015neocortical}. In the corticothalamic system, these connections also play distinct roles: relay and reticular populations are central to rhythm generation, while cortical excitatory and inhibitory populations influence how that rhythm is expressed, amplified, and organized in time~\cite{pinault2004thalamic,niethard2018cortical,zhang2004cortical}. At the same time, connectivity alone cannot explain the coexistence of spindle and infra-slow dynamics, since the underlying anatomical circuit is the same in both cases. The question is therefore what other properties of the circuit can shift its dynamics between these very different timescales without changing its basic wiring. This question also has clinical relevance: spindle expression is altered in schizophrenia~\cite{manoach2016reduced,manoach2019abnormal}, reorganized by interictal activity in focal epilepsy~\cite{kramer2021focal}, and reduced in Alzheimer's disease and amnestic mild cognitive impairment in association with cognitive decline~\cite{gorgoni2016parietal,weng2020sleep}, while reduced spindle density in Parkinson's disease has been linked to later dementia~\cite{latreille2015sleep}.

A natural candidate is the timescale over which influences on the corticothalamic system are integrated. Neural activity is shaped by processes operating over a wide range of timescales, from axonal conduction, synaptic transmission, and dendritic integration to recurrent network dynamics and slower neuromodulatory or hormonal effects. At the population level, the combined influence of such processes can introduce a dependence on past activity that is spread over time rather than concentrated at a single instant. How that past activity is weighted may itself matter dynamically. Most delayed neural-population models represent this temporal dependence by a single fixed lag, largely for mathematical and computational convenience. Yet there is little reason to expect the effective temporal profile of biological interactions to be universal across networks, physiological states, or functional contexts. Our previous analyses of Wilson--Cowan systems with distributed delays have shown that the shape of the delay kernel, and not only its characteristic timescale, can qualitatively alter the accessible dynamics and the transitions between them~\cite{KaslikEtAl2022,KaslikEtAl2024}. This motivates examining temporal integration alongside connectivity as a factor shaping the dynamics of the corticothalamic system.

Existing models of thalamocortical sleep rhythms fall broadly into two
families, and neither has been used to ask this question. Biophysically
detailed models resolve the intrinsic currents and synaptic kinetics that
generate spindles and slow-wave activity, including the low-threshold calcium
current and rebound bursting of TRN neurons and the gap-junctional coupling
between them~\cite{destexhe1994model,golomb1994synchronization,bazhenov2002model,krishnan2016cellular}. These models reproduce the rhythms faithfully, but their dimensionality makes broad, systematic exploration of the dynamics across parameter space difficult. Mean-field
and neural-mass corticothalamic models occupy the complementary position:
they are low-dimensional enough for stability and continuation analysis and
have reproduced NREM-sleep spectra and their responses to
perturbation~\cite{SchellenbergerCosta2016,Jajcay2022,robinson2002dynamics}, and delayed
corticothalamic feedback has been shown to organize their stability
boundaries~\cite{roberts2008modeling}. Related work has also shown that
thalamocortical mean-field models can sit close to multiple coexisting
regimes, with transitions into pathological synchrony organized by the same
parameters as normal activity~\cite{suffczynski2004dynamics,Sheeba2008}. These
models, however, either omit delay or represent it as a single characteristic
lag. The consequence is that connectivity and timing have largely been
studied separately, and whether temporal dispersion, acting jointly with
connectivity, can select among the timescales that one circuit is able to
support has not been addressed directly.
 
Here we examine that question in a four-population Wilson--Cowan
model~\cite{WilsonCowan1972} of the corticothalamic circuit, comprising cortical
pyramidal and inhibitory populations, thalamic relay cells, and the TRN, with
connectivity calibrated to the anatomical and modeling literature. We compare
three temporal formulations: instantaneous coupling, a weak Gamma distributed delay kernel, and a
discrete delay. Equilibria, limit cycles, and their bifurcations are obtained
by numerical continuation, and are complemented by analytical results
governing stability at small and large values of the delay. The model was tuned to produce oscillatory regimes consistent with both the experimentally observed spindle frequency range and the characteristic relative recruitment patterns of the four populations. More details are presented in Section~\ref{sec:rhythms}.
 
Our central goal is to investigate how connectivity and temporal integration play complementary roles in shaping the dynamics of the corticothalamic circuit. We focus on several connectivity features that have been implicated in spindle dynamics: recurrent cortical excitation and corticothalamic drive, the reciprocal interaction between thalamic relay and reticular populations, and interactions within the TRN. We ask whether these different levels of connectivity control distinct aspects of oscillatory behavior, including access to oscillations, their structure and persistence, and the conditions under which they emerge or disappear. We then investigate how these connectivity-dependent effects are modified by delayed coupling, and whether the form of temporal integration -- weak Gamma distributed or discrete -- changes the dynamical regimes available to the circuit. Finally, we ask to what extent connectivity and delay, separately and in combination, can account for the experimentally observed spindle-band regime and for the much slower infra-slow organization near 0.02~Hz within the same corticothalamic architecture.

\section{Modeling the CTRC loop and generation of sleep rhythms}
\label{sec:model}

The coexistence of spindle-band and infra-slow rhythms described above makes NREM sleep a particularly rich setting for investigating dynamics across widely separated timescales. The thalamic reticular nucleus (TRN) stands within a broader corticothalamic network, interacting reciprocally with thalamic relay populations and with cortical excitatory and inhibitory populations, and it is within this coupled setting that sleep-related dynamics emerge. Our modeling framework is therefore designed to examine how distinct regimes are generated within this circuit, and what factors allow transitions from one to another. In the following sections, we first discuss the functional oscillatory regimes of interest and the criteria by which the model is assessed against them, then summarize the connectivity architecture that forms the biological backdrop for our model, and finally state the equations and parameters themselves.

\subsection{CTRC rhythms across population timescales}
\label{sec:rhythms}

\noindent \textbf{Spindle rhythms.} To calibrate the four-node Wilson--Cowan network, we distinguish between the temporal frequency of the spindle rhythm and the level of activity expressed by each population during that rhythm. Human depth-electrode, intracranial EEG and LFP recordings consistently identify spindle oscillations in the approximately 10--16~Hz range~\cite{mak2017coordination,bastuji2020local,nir2011regional}. We therefore
require a spindle-compatible periodic solution to oscillate on this temporal scale. The four Wilson--Cowan populations share this common oscillatory period; differences among populations are  represented instead through their activity levels, which reflect the substantially different degrees to which the corresponding neuronal populations are recruited during individual spindle cycles.

Where direct human population recordings are available, we use them to establish the spindle-frequency range. Because population-resolved human spiking data are more limited, we supplement these observations with rodent and feline single-unit and population studies describing cycle-by-cycle
participation. The resulting Wilson--Cowan activity ranges should therefore be interpreted as Hz-equivalent mean-field targets: they preserve the experimentally observed distinction between strongly recruited populations and populations that participate only sparsely in the same spindle-frequency network oscillation, rather than representing literal firing rates of every individual neuron.

\vspace{2mm}
\noindent \emph{\textbf{Thalamic reticular population.}} The TRN is a central component of the spindle-generating thalamic circuit.
Optogenetic activation of TRN PV neurons can initiate thalamic bursts and cortical spindles \cite{halassa2011selective}, while classical intracellular recordings demonstrate intrinsic and network-supported TRN rhythmicity in the spindle-frequency range \cite{steriade1987deafferented}. Natural-sleep recordings further show that reticular neurons are strongly, but not uniformly, recruited across spindle cycles. Barth\'o et al.~\cite{bartho2014ongoing}
reported nRT participation probabilities ranging from below approximately
40\% to about 60\%, with firing declining markedly toward spindle termination
and bursts containing approximately 3--5 spikes. The lower-recruitment portions
of this trajectory therefore support an activity scale of only a few Hz,
motivating $WC_{\min}\approx3$--$5$~Hz, whereas the strong recruitment of TRN
neurons, which can participate on successive spindle cycles, supports an upper
population-activity scale approaching the spindle-cycle rate. We consequently
take $WC_{\min}\approx3$--$5$~Hz and $WC_{\max}\approx10$--$15$~Hz for the
reticular population.

\vspace{2mm}
\noindent \emph{\textbf{Thalamic relay population.}} Human thalamic recordings demonstrate a clear spindle-band population rhythm in the 10-16~Hz range \cite{mak2017coordination,bastuji2020local}.
Individual thalamocortical relay cells, however, participate much more sparsely in that population rhythm. Animal recordings show extensive cycle-skipping~\cite{pinault2003cellular}; in natural sleep, Barth\'o et al.~(2014) measured TC-cell participation probabilities of approximately 35-45\% across spindle cycles. Applied to a 10-15~Hz spindle rhythm, this degree of participation corresponds to approximately 3.5-7 recruited
cycle-events per second, providing a direct scale for the upper end of relay population activity and motivating $WC_{\max}\approx4$-$6$~Hz. The intermittent, cycle-skipping character of relay-cell recruitment also supports
a population-activity trough close to zero, motivating $WC_{\min}\approx0$-$1$~Hz. We therefore take $WC_{\min}\approx0$-$1$~Hz and $WC_{\max}\approx4$--$6$~Hz for the relay
population.

\vspace{2mm}
\noindent \emph{\textbf{Cortical pyramidal population.}} Cortical EEG and intracranial recordings show the same spindle-band temporal rhythm observed in thalamus~\cite{nir2011regional,mak2017coordination}, but individual pyramidal neurons participate only sparsely. Recordings during
natural sleep show low average firing and substantial cycle skipping among
pyramidal neurons~\cite{averkin2016identified}; Averkin et al., for example, found that superficial pyramidal cells fired on only a small fraction of spindle cycles, with spindle-phase firing rates near 1--2~Hz in the more strongly modulated cells. Other cell-resolved recordings report mean pyramidal firing around 3-4~Hz during local spindle oscillations~\cite{hartwich2009distinct}. Together, these observations motivate a
low-activity range of $WC_{\min}\approx1$--$2$~Hz and a spindle-recruited
upper range of $WC_{\max}\approx3$--$5$~Hz. Thus cortical pyramidal activity remains sparse even while the population signal oscillates robustly in the spindle band.

\vspace{2mm}
\noindent \emph{\textbf{Cortical inhibitory population.}} Cortical inhibitory neurons are recruited considerably more strongly during spindles than neighboring pyramidal neurons. Human and animal recordings show
strong spindle phase-locking of inhibitory activity~\cite{nir2011regional,niethard2018cortical,averkin2016identified}, with
fast-spiking and PV-dominated populations showing particularly strong
spindle-related recruitment. At the same time, cortical inhibitory populations
are heterogeneous: recordings during local spindle oscillations show firing rates ranging from only a few Hz in more weakly recruited interneuron classes to approximately 10-17~Hz in strongly recruited PV basket cells~\cite{hartwich2009distinct}. We therefore use $WC_{\min}\approx1$-$2$~Hz to represent the low-activity end of the
coarse-grained inhibitory population and $WC_{\max}\approx10$-$15$~Hz to represent the strongly recruited, spindle-locked end. The broad range reflects the markedly greater and more heterogeneous recruitment of cortical inhibitory cells relative to pyramidal neurons.

\vspace{2mm}
\noindent These ranges were used to assess the physiological plausibility of the dynamical regimes produced by the chosen parameter sets. Thus, model validation did not consist of imposing an independent recruitment constraint after matching the spindle frequency. Rather, spindle-compatible behavior was identified by considering both the common spindle-band oscillation and the population-activity scales expected from the experimentally observed differential recruitment of the four populations.\\

\noindent \textbf{Infra-slow fluctuations.} In addition to spindle-band activity, non-rapid-eye-movement sleep exhibits a much slower temporal organization on the infra-slow timescale. In particular, several studies have identified an approximately 0.02 Hz rhythm during NREM sleep, corresponding to a period on the order of 50 seconds, which modulates spindle occurrence and helps organize sleep microstructure rather than constituting a spindle rhythm itself \cite{lecci2017coordinated,watson2018cognitive,lazar2019infraslow}. This infra-slow fluctuation has been linked to the clustering and spacing of faster spindle events, to alternating periods of greater sleep continuity versus fragility, and more broadly to the temporal organization of NREM sleep \cite{lecci2017coordinated,watson2018cognitive,lazar2019infraslow,champetier2023age}. These observations suggest that infra-slow fluctuations play an important functional role in sleep not by replacing faster rhythms such as spindles, but by regulating when such rhythms occur and how they are distributed over longer timescales. In this sense, infra-slow activity appears to provide a broader temporal scaffold for sleep microarchitecture, contributing to sleep maintenance and to the internal organization of NREM sleep. In the present work, we therefore treat infra-slow activity not at the level of detailed population-by-population firing statistics, but as a biologically observed slower timescale relevant to the slower oscillatory regime of the model. This regime is consequently assessed on the frequency constraint alone: population-resolved firing statistics comparable to those available for individual spindle cycles have not been reported for the infra-slow rhythm, so imposing a recruitment constraint there would not be evidence-based. Accordingly, when the Wilson--Cowan dynamics exhibit a slow oscillatory mode in the infra-slow range, we interpret this not as a literal single-cell firing rate, but as a network-level organizing regime associated with infra-slow sleep organization.

\subsection{Description of CTRC neural circuitry}
\label{sec:circuitry}

The connectivity architecture between the four representative neural populations in our system is summarized below, and illustrated in Figure~\ref{fig:architecture}. The corresponding coupling weights and ranges used in our analysis of the model are included in Table~\ref{tab:coupling_refs}, and define the connectivity matrix introduced in Section~\ref{sec:equations}. While not directly empirically driven, the coupling baseline values and ratios were based on existing literature on the CTRC circuit, also included in Table~\ref{tab:coupling_refs}.

\vspace{2mm}
\noindent \emph{\textbf{Corticothalamic projections and collaterals to TRN.}} Layer-VI pyramidal neurons send descending axons not only to thalamic relay nuclei but also branch collaterals into the TRN, providing the principal excitatory (glutamate-mediated) drive that paces TRN burst-firing and helps initiate each spindle cycle~\cite{zhang2004cortical}. Additional studies confirm that TRN cells receive glutamatergic corticothalamic collateral synapses and are closely coupled to their neighbors by GABA\textsubscript{A} receptor-containing inhibitory synapses~\cite{zhang2004corticothalamic}. No converse projection from the TRN to the cortex has been found to be relevant to spindle dynamics~\cite{pinault2004thalamic}. These pathways are represented by the cortico-reticular collateral $w_{RP}$, and by the absence of any $R \to P$ or $R \to I$ entry in the connectivity matrix.

\vspace{2mm}
\noindent \emph{\textbf{Long-range projections between cortex and thalamic relay.}} Relay neurons in sensory (ventrobasal, lateral geniculate) and associative thalamic nuclei issue axon branches en route to cortex that activate TRN cells, closing the relay--TRN--relay inhibitory loop that sustains spindle oscillations. Both these projections are glutamatergic. Thalamocortical axon terminals express vesicular glutamate transporter 2 (VGLUT2), driving excitatory postsynaptic responses in both cortical pyramidal cells and interneurons~\cite{zhang2004corticothalamic}. Conversely, there is a massive, glutamatergic, feedback projection to the thalamus, primarily from the layer 6 corticothalamic projection neurons, both to the specific thalamic relay nucleus that provides its principal input and to higher order thalamic nuclei~\cite{thomson2010neocortical}. Layer 6 pyramidal neurons, in turn, receive extrinsic excitatory input, from the thalamus and from other cortical regions, as further detailed below. These pathways are represented by the ascending weights $w_{PT}$ and $w_{IT}$ and by the descending corticothalamic drive $w_{TP}$.

\vspace{2mm}
\noindent \emph{\textbf{Long-range projections between thalamic relay and TRN.}} Every TRN neuron is GABAergic and projects back onto the relay cells that innervate cortex~\cite{pinault2004thalamic}. Their rhythmic inhibitory postsynaptic potentials drive rebound bursts in these relay neurons, which then re-excite the TRN, generating the spindle-band oscillation. Early intracellular recordings in thalamic slices established that TRN inhibition is both necessary and sufficient for spindle-like oscillations~\cite{krosigk1993cellular}. In turn, the glutamatergic input to TRN depolarizes TRN neurons and helps initiate the inhibitory-rebound cycles underlying spindles, although activation of metabotropic glutamate receptors can, in some settings, produce a slower inhibitory response through potassium conductances~\cite{krosigk1993cellular}. Deafferentation and in vivo recording studies demonstrated that interrupting these projections abolishes spindle rhythms~\cite{steriade1987deafferented}. The spread of inhibition through the TRN as more and more TRN cells are excited by the collaterals of thalamocortical fibers may be responsible for the shortening of the burst discharges~\cite{zhang2004corticothalamic}. Conversely, studies in rodent and other species found the sheet of GABAergic neurons in the TRN to form the principal source of inhibition to the relay neurons of most dorsal thalamic nuclei~\cite{pinault2004thalamic}. There is no evidence of similar interconnections within the thalamic relay, or of significant effects of one relay cell on another, each relay cell acting as an essentially independent link to cortex~\cite{sherman2016thalamus}. This reciprocal pair is represented by the excitatory return $w_{RT}$ and the inhibitory return $w_{TR}$, with no relay self-coupling.

\begin{figure}[htbp]
    \centering
    \scalebox{0.88}{
    \begin{tikzpicture}[node distance=18mm and 36mm]
        \node[jfcortex] (P) {Cortical pyramidal\\$P(t)$};
        \node[jfcortex, right=of P] (I) {Cortical interneurons\\$I(t)$};
        \node[jfthal, below=30mm of P] (T) {Thalamic relay\\$T(t)$};
        \node[jfthal, below=30mm of I] (R) {Reticular (TRN)\\$R(t)$};
        \begin{scope}[on background layer]
            \node[jfgroup, fit=(P)(I)] (cortex) {};
            \node[jfgroup, fit=(T)(R)] (thalamus) {};
        \end{scope}
        \node[font=\scriptsize\itshape, text=jfSlate!80, anchor=south east]
              at (cortex.north east) {\textsc{cortex}};
        \node[font=\scriptsize\itshape, text=jfSlate!80, anchor=south east]
              at (thalamus.north east) {\textsc{thalamus}};
        \draw[jfexc] (P) to[bend left=16]
              node[jflabel,above,pos=0.42,yshift=1pt]{$w_{IP}$} (I);
        \draw[jfinh] (I) to[bend left=14]
              node[jflabel,below,pos=0.50]{$w_{PI}$} (P);
        \draw[jfexc] (P.north west) to[out=120,in=170,looseness=9]
              node[jflabel,left,pos=0.5]{$w_{PP}$} (P.west);
        \draw[jfinh] (I.north east) to[out=60,in=10,looseness=9]
              node[jflabel,right,pos=0.5]{$w_{II}$} (I.east);
        \draw[jfexc] (T) to[bend left=14]
              node[jflabel,above,pos=0.48]{$w_{RT}$} (R);
        \draw[jfinh] (R) to[bend left=14]
              node[jflabel,below,pos=0.50]{$w_{TR}$} (T);
        \draw[jfinh] (R.south east) to[out=-50,in=0,looseness=9]
              node[jflabel,right,pos=0.5]{$w_{RR}$} (R.east);
        \draw[jfexc] ($(P.south)+(-5mm,0)$) --
              node[jflabel,left]{$w_{TP}$} ($(T.north)+(-5mm,0)$);
        \draw[jfexc] ($(T.north)+(5mm,0)$) --
              node[jflabel,right,pos=0.58,xshift=2pt]{$w_{PT}$} ($(P.south)+(5mm,0)$);
        \draw[jfexc] (T.north east) to[out=18,in=235] (I.south);
        \node[jflabel, rotate=38] at ($(I.south)+(-10mm,-7mm)$) {$w_{IT}$};
        \draw[jfexc] ($(P.south east)+(0mm,-1mm)$) to[out=-28,in=150]
              node[jflabel,sloped,above,pos=0.52]{$w_{RP}$}
              ($(R.north west)+(0mm,1mm)$);
    \end{tikzpicture}}
    \caption{\textbf{Architecture of the corticothalamic--reticular Wilson--Cowan model.} Filled teal arrows denote excitatory couplings and hollow brick arrows inhibitory couplings; each weight $w_{XY}$ is the directed connection from population $Y$ to population $X$. Baseline values and exploration ranges are given in Table~\ref{tab:coupling_refs}. The delay is not specific to any one population: every coupling transmits the filtered state $(h_\rho*X)(t)$ rather than the instantaneous state $X(t)$, under either of the two kernels compared in Section~\ref{subsec:delay-models}.}
    \label{fig:architecture}
\end{figure}

\vspace{2mm}
\noindent \emph{\textbf{Intrinsic and electrical coupling within the TRN.}} Intrinsic membrane currents in the TRN, particularly low-threshold T-type calcium channels and hyperpolarization-activated H-currents, are essential for generating the rebound bursts that drive each spindle cycle~\cite{jahnsen1984electrophysiological,huguenard1992simulation}. Moreover, TRN neurons are extensively coupled by gap junctions, forming an electrical syncytium that synchronizes burst timing across the nucleus~\cite{landisman2002electrical,DeleuzeHuguenard2006}. Once initiated, cortical circuits can further amplify and propagate these spindles: intracortical reverberation via cortico-cortical loops sustains and shapes spindle activity beyond the thalamus~\cite{bazhenov2002model}. At the mean-field level, both the chemical inhibition internal to the nucleus and the net population-level consequence of its electrical coupling are absorbed into the single self-coupling $w_{RR}$; this is a deliberate reduction, since gap junctions synchronize rather than inhibit, and a one-variable population description cannot separate the two.

\vspace{2mm}
\noindent \emph{\textbf{Intra-cortical connections~\cite{harris2015neocortical}.}} Pyramidal cells form recurrent connections with local neurons of the same class. In turn, the primary targets of Vip and single-bouquet cells are other interneurons, especially Sst interneurons as well as Pvalb-positive basket cells. The excitatory and inhibitory cortical populations are also interconnected: cortical (Sst and Pvalb) interneurons receive excitatory input from local neurons and in turn inhibit excitatory cortical cells on their dendrites and somata respectively. Let us also remind that the inhibitory interneurons do not send significant long-range projections, including the relay and the TRN in particular~\cite{kepecs2014interneuron}. These interactions are represented by $w_{PP}$, $w_{IP}$, $w_{PI}$ and $w_{II}$, and by the vanishing $I \to T$ and $I \to R$ entries of the connectivity matrix.

\vspace{2mm}
\noindent \emph{\textbf{Brainstem modulatory inputs to TRN.}} Anatomical tracing has mapped dense innervation of the TRN and relay nuclei by cholinergic, noradrenergic and serotonergic systems~\cite{hallanger1987origins,jones2003arousal,mccormick1990noradrenergic}, which dynamically adjust TRN and relay cell excitability and thereby control spindle amplitude and duration in a state-dependent manner~\cite{mccormick1990noradrenergic}. Cholinergic inputs from the basal forebrain likewise modulate the activity of cortical neurons, contributing to the distinct network dynamics of different brain states~\cite{lee2012neuromodulation}. Slow-wave sleep is characterized by a low level of these neuromodulators and by the resulting progressive hyperpolarization of thalamocortical and reticular neurons, which deinactivates a low-threshold Ca\textsuperscript{2+} current and permits the burst-firing mode crucial to spindle generation~\cite{jahnsen1984electrophysiological}. In the present model this state is fixed rather than dynamic: the constant background drive $B$, together with the population-specific offsets $I_T$ and $I_R$, represents the low neuromodulatory tone characteristic of NREM sleep, and sleep--wake state transitions lie outside the present scope.

\subsection{Model equations and parameters}
\label{sec:equations}

Our basic model realizes this connectivity architecture as a system of Wilson--Cowan type equations, in which the four variables ($P$, $I$, $T$ and $R$) represent temporal mean-field activity in our four CTRC populations, respectively: cortical pyramidal cells $P$, cortical interneurons $I$, thalamic relay cells $T$ and reticular thalamic nucleus $R$. These populations integrate the sum of internal and external inputs via sigmoidal functions

\begin{equation}
{\cal S}_{\alpha_X,\theta_X,\beta_X}(u) = \frac{\alpha_X}{1+\exp(-\beta_X[u-\theta_X])} - \frac{\alpha_X}{1+\exp(\beta_X \theta_X)},
\label{eq:sigmoid}
\end{equation}

\noindent where $X \in \{ P,I,T,R \}$ is the respective integrating node. The subtracted constant guarantees ${\cal S}(0)=0$, so that a silent input yields no output; note, however, that it also gives ${\cal S}(u)<0$ for $u<0$, so that the vector field does not confine trajectories to the non-negative orthant. Non-negativity of the four activity variables is therefore an admissibility criterion imposed on solutions rather than a property guaranteed by the equations, and a trajectory leaving the non-negative range, while mathematically well defined, no longer represents a realizable population state; the saturation factors supply the complementary upper bound $X \le 1/r_X$. For every coupling parameter, $w_{XY}>0$ denotes the strength of the directed projection from population $Y$ to population $X$, namely, the first index is the target and the second index is the source. Then we can write the equations governing the behavior of the system, without yet considering distributed delays:

\begin{align}
    \tau \frac{dP}{dt} &= -P + (1 - r_P P) \cdot {\cal S}_{\alpha_P, \theta_P, \beta_P} \left( w_{PP}P - w_{PI}I + w_{PT}T + B + I_P \right) \label{eq:P} \\
    \tau \frac{dI}{dt} &=  -I + (1 - r_I I) \cdot {\cal S}_{\alpha_I, \theta_I, \beta_I} \left( w_{IP}P - w_{II}I + w_{IT}T + B \right) \label{eq:I} \\
    \tau \frac{dT}{dt}   &= -T + (1 - r_T T) \cdot {\cal S}_{\alpha_T, \theta_T, \beta_T} \left( w_{TP}P - w_{TR}R + B + I_T \right) \label{eq:T} \\
    \tau \frac{dR}{dt}   &= -R + (1 - r_R R) \cdot {\cal S}_{\alpha_R, \theta_R, \beta_R} \left( w_{RP}P + w_{RT}T - w_{RR}R + B + I_R \right) \label{eq:R}
\end{align}

\noindent Here $B$ is a common background drive applied to every population, while $I_P$, $I_T$ and $I_R$ are population-specific offsets for the pyramidal, relay and reticular populations; the interneuron population receives only the common drive, equivalently $I_I=0$. At the baseline configuration used throughout, $I_P=0$, so that the pyramidal population likewise receives only the common drive. The excitatory populations $P$ and $T$ share a single response class, ${\cal S}_P \equiv {\cal S}_T \equiv {\cal S}_E$, while the interneuron and reticular populations use ${\cal S}_I$ and ${\cal S}_R$ respectively, the latter with a higher maximal output ($\alpha_R = 75$ against $\alpha_E = \alpha_I = 50$). A single time constant $\tau$ is shared by all four populations, although TRN bursting, relay rebound and cortical integration operate on distinct intrinsic timescales; this is deliberate, since it leaves the mean delay as the only temporal parameter varied and makes the dimensionless ratio $q=\rho/\tau$ of Section~\ref{subsec:linearization} the sole temporal quantity governing spectral stability. Population-specific time constants are deferred to future work.

The values and ranges of the coupling and sigmoidal parameters and of the external inputs were set based on existing modeling literature and empirically demonstrated relationships between their magnitudes. In the tables below, we describe these ranges and the references used.

\begin{table}[htbp]
\centering
\small
\caption{Non-coupling parameters: values, roles and literature support.}
\label{tab:noncoupling_refs}
\vspace{0.15cm}
\begin{tabular}{@{}llll@{}}
\toprule
{\bf Symbol} & {\bf Value} & {\bf Role} & {\bf Reference(s)}\\
\midrule
$(\alpha_E,\theta_E,\beta_E)$ & $(50,\,4.0,\,1.2)$ & Excitatory sigmoid ${\cal S}_E$ ($P$, $T$) & \cite{WilsonCowan1972,Marreiros2008,SchellenbergerCosta2016} \\
$(\alpha_I,\theta_I,\beta_I)$ & $(50,\,3.8,\,2.5)$ & Interneuron sigmoid ${\cal S}_I$ & \cite{WilsonCowan1972,Marreiros2008,SchellenbergerCosta2016} \\
$(\alpha_R,\theta_R,\beta_R)$ & $(75,\,3.8,\,2.5)$ & TRN sigmoid ${\cal S}_R$ (higher output) & \cite{Marreiros2008,SchellenbergerCosta2016,pinault2004thalamic} \\
$\tau$ & $1.0$ & Population time constant ($1$ s) & \cite{WilsonCowan1972} \\
$(r_P,r_I,r_T,r_R)$ & $(0.05,\,0.03,\,0.03,\,0.03)$ & Saturation coefficients & \cite{WilsonCowan1972} \\
$B$ & $0.6$ & Common background drive (all populations) & \cite{WilsonCowan1972,Marreiros2008} \\
$I_P$ & $0$ & Pyramidal-specific offset & --- \\
$I_T$ & $2.8$ & Relay-specific offset & \cite{Cakan2023,SchellenbergerCosta2016} \\
$I_R$ & $1.5$ & Reticular-specific offset & \cite{Cakan2023,SchellenbergerCosta2016} \\
\bottomrule
\end{tabular}
\end{table}

\begin{table}[htbp]
\centering
\small
\caption{Coupling weights and the delay parameter: pathways, baseline values, exploration ranges, functional roles and literature support. Baseline values define the reference configuration; ranges indicate the parameters varied in the continuation and delay studies.}
\label{tab:coupling_refs}
\vspace{0.15cm}
\begin{tabular}{@{}llclll@{}}
\toprule
{\bf Parameter} & {\bf Pathway} & {\bf Baseline} & {\bf Baseline exploration set} & {\bf Interpretation} & {\bf Reference(s)} \\
\midrule
$w_{PP}$ & PYR $\to$ PYR & $1.1$   & $[0.8,\,1.2]$ & Recurrent cortical excitation & \cite{Brunel2000,harris2015neocortical} \\
$w_{PI}$ & IN $\to$ PYR  & $0.4$   & fixed         & Pyramidal inhibition          & \cite{IsaacsonScanziani2011,PotjansDiesmann2014} \\
$w_{PT}$ & TR $\to$ PYR  & $0.4$   & fixed         & Thalamocortical drive         & \cite{Jajcay2022,ReichovaSherman2004} \\
$w_{IP}$ & PYR $\to$ IN  & $0.7$   & fixed         & Interneuron recruitment       & \cite{FreundKatona2007,IsaacsonScanziani2011} \\
$w_{II}$ & IN $\to$ IN   & $0.1$   & fixed         & Interneuron self-inhibition   & \cite{Pfeffer2013} \\
$w_{IT}$ & TR $\to$ IN   & $0.7$   & fixed         & Relay--interneuron drive      & \cite{Cruikshank2007} \\
$w_{TP}$ & PYR $\to$ TR  & $1.25$  & $[1.0,\,1.5]$ & Corticothalamic drive         & \cite{Jajcay2022} \\
$w_{TR}$ & TRN $\to$ TR  & $0.8$   & $[0.6,\,1.0]$ & Reticular inhibition of relay & \cite{SchellenbergerCosta2016,zhang2004corticothalamic} \\
$w_{RP}$ & PYR $\to$ TRN & $0.7$   & fixed         & Cortico-reticular collateral  & \cite{zhang2004cortical} \\
$w_{RT}$ & TR $\to$ TRN  & $0.4$   & $[0.1,\,0.4]$ & Relay--reticular excitation   & \cite{SchellenbergerCosta2016,pinault2004thalamic} \\
$w_{RR}$ & TRN $\to$ TRN & $0.2$   & $[0.1,\,0.3]$ & Reticular self-limitation     & \cite{DeleuzeHuguenard2006,landisman2002electrical} \\
$\rho$   & all couplings & $0$     & $[0,\,5]$     & Mean coupling delay           & --- \\
\bottomrule
\end{tabular}

\end{table}

\noindent We denote
\(
    X=(P,I,T,R)^{\mathsf T}
\),
and define
\[
    \Theta(X)
    =
    \operatorname{diag}
    \bigl(
       1-r_PP,\,
       1-r_II,\,
       1-r_TT,\,
       1-r_RR
    \bigr),
\]
together with the componentwise transfer function
\[
    \mathbf S(z)
    =
    \bigl(
       S_P(z_P),\,
       S_I(z_I),\,
       S_T(z_T),\,
       S_R(z_R)
    \bigr)^{\mathsf T}.
\]
The signed connectivity matrix and the input vector are
\[
C=
\begin{pmatrix}
 w_{PP} & -w_{PI} &  w_{PT} & 0\\
 w_{IP} & -w_{II} &  w_{IT} & 0\\
 w_{TP} & 0       & 0       & -w_{TR}\\
 w_{RP} & 0       & w_{RT}  & -w_{RR}
\end{pmatrix},
\qquad
U=
\begin{pmatrix}
 B+I_P\\
 B\\
 B+I_T\\
 B+I_R
\end{pmatrix}.
\]
The zero entries of $C$ encode the anatomical constraints described in Section~\ref{sec:circuitry}: the absence of TRN projections to cortex, of long-range interneuron projections, and of relay-to-relay coupling. Therefore, the instantaneous model can be written compactly as
\begin{equation}\label{eq.model.no.delay}
    \tau\dot X(t)
    =
    -X(t)
    +
    \Theta(X(t))
    \mathbf S\!\left(CX(t)+U\right).
\end{equation}

\subsection{Temporal coupling formulations}
\label{subsec:delay-models}

In this paper, we also extend the instantaneous model using two complementary delay
formulations  having the same mean-delay
parameter $\rho$: a weak Gamma distributed kernel and a discrete delay. The
weak Gamma kernel is closely related to the exponentially weighted
temporal response underlying classical Wilson--Cowan formulations and is
widely used to represent heterogeneous transmission and integration times
in neural-population models. By contrast, the discrete delay concentrates
the delayed influence at a single time lag and provides a simple,
widely used benchmark. Comparing these formulations allows us to separate
the effects of the characteristic delay from those of temporal dispersion.
The comparison is further motivated by our previous results \cite{KaslikEtAl2022,KaslikEtAl2024}, in which
discrete delays produced richer bifurcation structures and more complex
oscillatory dynamics than Gamma-distributed delays.

We write the general distributed delayed system as
\begin{equation}\label{eq.model.general.delay}
    \tau\dot X(t)
    =
    -X(t)
    +
    \Theta(X(t))
    \mathbf S\!\left(C(h_\rho\ast X)(t)+U\right),
\end{equation}
where $h_\rho$ is the delay kernel with mean delay $\rho$ and $h_\rho\ast X$ represents the convolution.

\subsubsection{weak Gamma distributed delay}

The weak Gamma kernel with mean delay $\rho$ is
\[
    h_\rho(s)
    =
    \frac{1}{\rho}e^{-s/\rho},
    \qquad s\ge0.
\]
Defining the filtered activity
\[
    Y(t)
    =
    \int_0^\infty h_\rho(s)X(t-s)\,ds,
\]
the distributed-delay system is
\begin{equation}\label{eq.model.weak.gamma}
\begin{aligned}
    \tau\dot X(t)
        &=
        -X(t)
        +
        \Theta(X(t))
        \mathbf S\!\left(CY(t)+U\right),\\
    \rho\dot Y(t)
        &=
        X(t)-Y(t).
\end{aligned}
\end{equation}
The second equation represents the linear-chain representation of the
weak Gamma convolution.

\subsubsection{Discrete delay}

For a fixed delay $\rho>0$ (i.e. discrete delay kernel $h_\rho(s)=\delta(s-\rho)$) the corresponding retarded functional
differential equation is
\begin{equation}\label{eq.model.discrete.delay}
    \tau\dot X(t)
    =
    -X(t)
    +
    \Theta(X(t))
    \mathbf S\!\left(CX(t-\rho)+U\right),
\end{equation}
which is associated with a continuous history
\[
    X(\theta)=\varphi(\theta),
    \qquad
    \theta\in[-\rho,0].
\]

\subsection{Linearization  and characteristic equation}
\label{subsec:linearization}

For the models considered in this paper, the location of
the equilibria is independent of both the delay parameter and the choice
of temporal kernel. Indeed, for both the weak Gamma and discrete-delay kernels, the equilibrium equation is
identical to that of the delay-free system:
\[
    -X^\ast
    +
    \Theta(X^\ast)
     \mathbf S(CX^\ast+U)
    =
    0.
\]
Hence, for fixed model parameters, introducing either type of delay does
not change the set of equilibria. It may only change their stability and
the bifurcation structure organized around them.

Considering an equilibrium $X^\ast$, let us denote
\[
    u^\ast=CX^\ast+U,
    \qquad
    \Theta_\ast=\Theta(X^\ast),\qquad 
    \Phi_\ast
    =
    \operatorname{diag}
    \bigl(
       S'_P(u^\ast_P),\,
       S'_I(u^\ast_I),\,
       S'_T(u^\ast_T),\,
       S'_R(u^\ast_R)
    \bigr).
\]
Further denoting
\[
    D=\Theta_\ast^{-1},
    \qquad
    K=\Theta_\ast\Phi_\ast C,
\]
the general linearized equation is
\[
    \tau\dot\xi(t)
    =
    -D\xi(t)
    +
    K\int_0^\infty h_\rho(s)\xi(t-s)\,ds,
\]
and its characteristic equation is
\begin{equation}\label{eq.characteristic-equation}
    \Delta(\lambda;\rho)
    :=
    \det\!\left(
       \tau\lambda I+D-\widehat h_\rho(\lambda)K
    \right)
    =0,
\end{equation}
where $\widehat h_\rho$ represents the Laplace transform of the corresponding delay kernel. More precisely, 
for the two kernels considered here,
\(
    \widehat h_\rho(\lambda)
    =
    (1+\rho\lambda)^{-1}\) for the weak Gamma coupling,
and
\(
    \widehat h_\rho(\lambda)
    =
    e^{-\rho\lambda}
\) for the discrete delay.

In the delay-free case (i.e. $\rho=0$), the characteristic equation \eqref{eq.characteristic-equation} becomes
\begin{equation}\label{eq.char.no.delay}
     \Delta(\lambda;0)
    :=
    \det\!\left(
       \tau\lambda I+D-K
    \right)
    =0,
\end{equation}
and hence, the characteristic roots satisfy $\tau\lambda\in\sigma(K-D)$.

For $\rho>0$, let us assume in general that \(\{h_\rho\}_{\rho>0}\) is a fixed-shape scale family of
normalized delay kernels
\[
    h_\rho(s)
    =
    \frac{1}{\rho}
    h\!\left(\frac{s}{\rho}\right),
    \qquad s\geq 0,\qquad\text{where }\int_0^\infty h(s)\,ds=1.
\]
If, in addition,
\[
    \int_0^\infty s h(s)\,ds=1,
\]
then \(\rho>0\) is the mean delay of \(h_\rho\). Moreover,
\(
    \widehat h_\rho(\lambda)
    =
    \widehat h(\rho\lambda),
\)
and introducing $z=\tau\lambda$ and 
\[
    q=\frac{\rho}{\tau},
\]
the characteristic equation \eqref{eq.characteristic-equation}
can be written as
\begin{equation}\label{eq.scaled-characteristic-equation}
    \widetilde\Delta(z;q)
    :=
    \det\!\left(
       zI
       +
       D
       -
       \widehat h(qz)K
    \right)
    =
    0.
\end{equation}
Consequently, 
the spectral stability of the equilibrium depends on $\tau$ and
$\rho$ only through the dimensionless ratio
$q=\rho/\tau$. It is also important to note that for $q\rightarrow 0$, as $\widehat h(0)=1$, we recover the corresponding delay-free characteristic equation. 

\begin{rem}
    Varying $q$ alone cannot generically create a zero-eigenvalue bifurcation. A saddle-node, transcritical or pitchfork bifurcation can only occur when $\det(D-K)=0$, which gives a boundary determined by the connectivity parameters. If the coupling parameters lie away from this boundary, every generic change of equilibrium stability caused by varying $q$ must occur through a nonzero imaginary pair, hence through a Hopf crossing. More precisely, if the equilibrium is asymptotically stable for the delay-free system, there exists $q_0>0$, such that the equilibrium remains asymptotically stable for any $q\in(0,q_0)$ and any delay kernel $h$ (see Theorem \ref{thm:small-q-stability} in Appendix A).

Now let us assume that the equilibrium is asymptotically stable at \(q=0\) and that a
finite first stability threshold \(q^\ast>0\) exists. If the characteristic
roots reaching the imaginary axis at \(q=q^\ast\) form a simple pair
\[
z(q^\ast)=\pm i\Omega^\ast,
\qquad
\Omega^\ast>0,
\]
and the crossing is transversal, then this first crossing must be
destabilizing:
\[
\left.
\frac{d}{dq}\operatorname{Re}z(q)
\right|_{q=q^\ast}>0.
\]
If, in addition, the standard nonlinear Hopf nondegeneracy conditions hold,
a local branch of periodic solutions bifurcates from the equilibrium. However, it is important to note that
neither stability at $q=0$ nor the first-crossing observation guarantees
that a finite threshold $q^\ast$ exists. For some parameter configurations, the equilibrium may
remain stable for all $q>0$.

A sufficient condition excluding this
possibility is provided by Theorem~\ref{thm:large-delay} (Appendix A). More precisely,
if the limiting characteristic equation
\[
    \Delta_\infty(u)
    =\det\!\bigl(D-\widehat h(u)K\bigr)=0
\]
has a simple root $u_+$ with $\operatorname{Re}u_+>0$, then the
equilibrium is unstable for all sufficiently large $q$. Since it is
stable for sufficiently small $q$, continuity of the characteristic
roots implies the existence of at least one finite critical value
$q^\ast>0$ such that
\(    \widetilde\Delta(i\Omega^\ast;q^\ast)=0
\)
for some $\Omega^\ast>0$.

Furthermore, in the discrete-delay case, where
$\widehat h(qz)=e^{-qz}$, every such critical pair
$(q^\ast,\Omega^\ast)$ generates the infinite sequence
\[
    q_n^\ast
    =q^\ast+\frac{2\pi n}{\Omega^\ast},
    \qquad n\in\mathbb{Z}_+.
\]
Hence, whenever one critical delay exists, there are infinitely many
delay values at which the characteristic equation has imaginary roots.
The later values $q_n^\ast$ may correspond to additional
stability switches and Hopf bifurcations, if the relevant nondegeneracy conditions are fulfilled.
\end{rem}

\section{Modeling results}

We begin by analyzing the stability and bifurcation structure of the delay-free system, both to obtain a broad baseline understanding of the role of the parameters and to calibrate the model. We will then restrict our attention to the key coupling weights and investigate how distributed delays interact with these crucial coupling parameters to shape the system's dynamics. The numerical analysis was performed in MatCont, version 7p4~\cite{Dhooge2008MatCont}, implemented in MATLAB~\cite{MATLAB2024a}.

\subsection{Attractors and transitions for the system with no delays}

\noindent \emph{\textbf{Strength of pyramidal projections $w_{PP}$ and $w_{TP}$.}} We first explored the effect of varying $w_{PP}$ and $w_{TP}$ on the system's temporal dynamics. Increasing $w_{PP}$ strengthens recurrent excitation within the cortical pyramidal population, making cortical activity more self-sustaining; decreasing it weakens this internal drive. Increasing $w_{TP}$ strengthens cortical drive to the thalamic relay population, increasing the influence of cortical activity on relay dynamics. We illustrate this combined dependence in Figure~\ref{wPP_bif}. In each panel (one for each variable $P$, $I$, $T$, and $R$, respectively), we show the dependence on $w_{PP}$ in the form of a bifurcation diagram, with different diagrams corresponding to different fixed values of $w_{TP}$.

The figure identifies $w_{PP}$ as a primary access parameter for oscillatory dynamics. Across all sampled values of $w_{TP}$, increasing recurrent pyramidal excitation first destabilizes the equilibrium through a supercritical Hopf bifurcation (marked with a star), opening a bounded oscillatory window (shown as the shaded region). As $w_{PP}$ continues to increase, the oscillations grow in amplitude, enter and later exit a spindle-compatible window, and eventually terminate at a limit point cycle. The persistence of this same qualitative bifurcation skeleton across different fixed values of $w_{TP}$ shows that the dominant role of $w_{PP}$ is structural rather than just modulatory: it controls whether the cortical component of the CTRC loop becomes sufficiently self-sustained to support rhythmic activity at all. All equilibrium curves additionally presented with a limit point (shown as a colored dot), producing a second stable equilibrium branch and, therefore, a window of hysteresis and bistability in conjunction with the original equilibrium branch and the stable cycle. This branch persists over a wide range of $w_{PP}$ values, and survives as the only attractor for values of $w_{PP}$ beyond the limit point cycle that ends oscillations, but is only biologically viable for intermediate values of $w_{TP}$ (since the $T^*$ component becomes negative for high $w_{TP}$ and exceeds the adopted range for low $w_{TP}$).

Overall, varying corticothalamic drive $w_{TP}$ did not change the underlying bifurcation mechanism, but only shifted the positions of the Hopf point and the LPC, effectively moving the oscillatory window to the right as $w_{TP}$ increased. This suggests that corticothalamic input to the relay population acts primarily as a positioning or enabling parameter, whereas recurrent pyramidal excitation provides the core destabilizing drive. Thus, Figure~\ref{wPP_bif} establishes that recurrent cortical excitation is not simply an amplitude control, but a fundamental determinant of whether the system can access a biologically relevant oscillatory corridor.

\begin{figure}[h!]
\begin{center}
\includegraphics[width=0.9\textwidth]{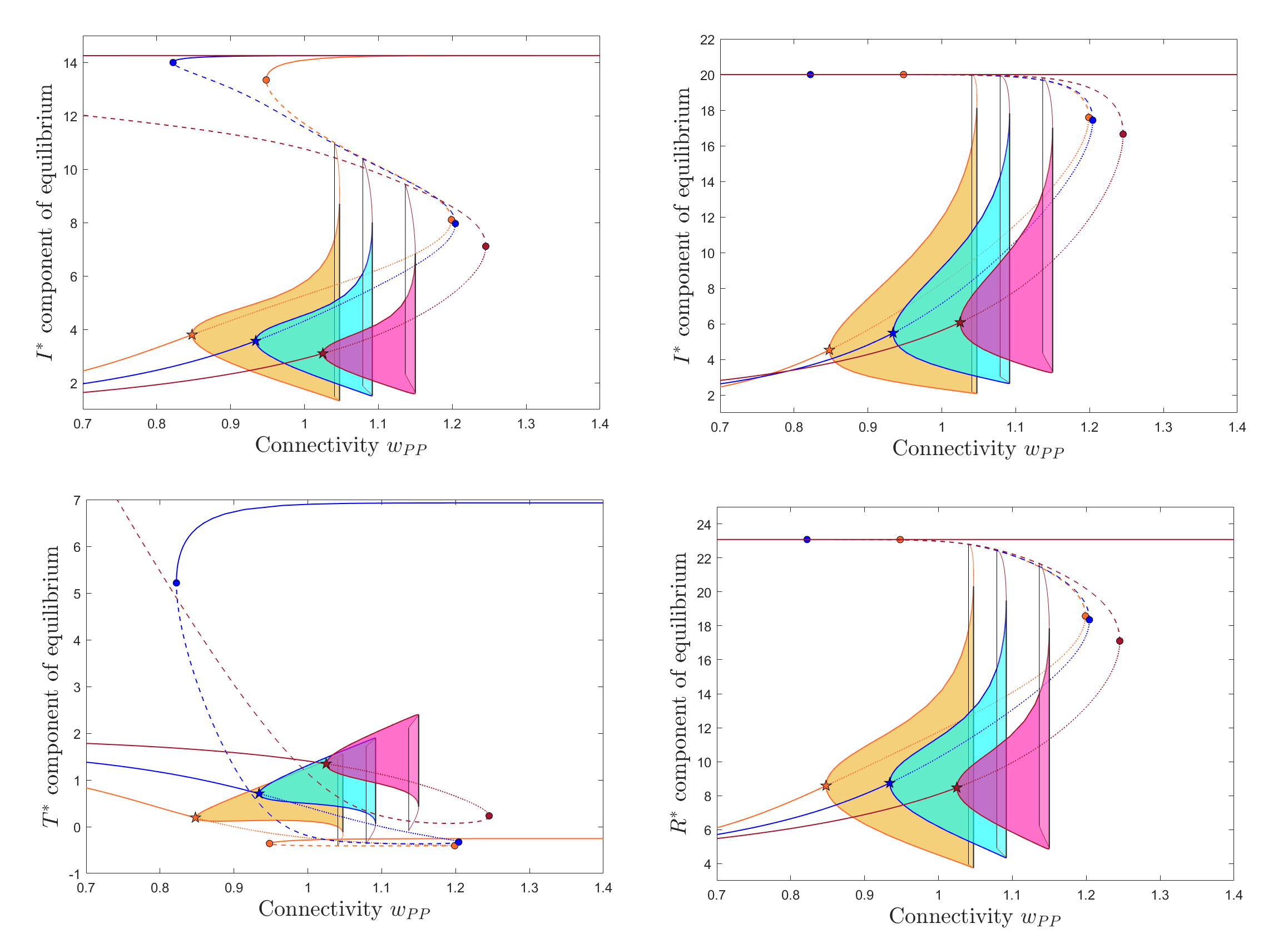}
\end{center}
\caption{\small \emph{{\bf Dependence on $w_{PP}$ for sample $w_{TP}$ values.} The panels show the transitions of the system as $w_{PP}$ is increased, for three different fixed values of $w_{TP}$: $w_{TP} =1$ (orange); $w_{TP}=1.25$ (blue); $w_{TP}=1.5$ (brown). Each panel represents the same bifurcation diagram, from the perspective of a different system component: $P$, $I$, $T$ and $R$ respectively. Stable equilibrium branches are shown as solid curves, unstable branches as dashed and dotted curves (based on the number of unstable directions). Limit points are marked with large dots, and Hopf bifurcations are marked by stars, with the evolution of the corresponding stable cycles shown as shaded areas (with the shade matching the respective equilibrium curve). Limit point cycles are marked by a black bar, where the stable cycle meets the unstable cycle (shown with no color and full transparency). For this simulation, $w_{TR}=0.8$, $w_{RT}=0.4$, $w_{RR}=0.2$, and the other parameters were fixed to their table baseline values.}}
\label{wPP_bif}
\end{figure}

Figure~\ref{wTP_bif} provides a complementary view of how the two pyramidal weights, $w_{PP}$ and $w_{TP}$, cooperate to shape the onset, termination, and placement of oscillatory regimes. In this analysis, we focus on the transitions into and out of oscillations as $w_{TP}$ increases, for several sample values of $w_{PP}$, as specified in the caption. In all three cases, the equilibrium curve exhibits two Hopf points, so that stable oscillatory regimes flank an intermediate range of $w_{TP}$ values for which the equilibrium is locally attracting. However, although both oscillatory windows are mathematically admissible, only the left one falls within the biologically relevant range for $w_{TP}$. Higher values of $w_{PP}$ shift the Hopf entry point of this right branch to the left, but not enough to bring it into a physiologically meaningful range.

More significantly, the location, width, and biological relevance of the left oscillatory window also depend strongly on $w_{PP}$. For smaller values of $w_{PP}$, this window is displaced toward implausible values of the $T$ component or even into the negative $w_{TP}$ range, making it progressively less physiologically meaningful. Thus, the role of $w_{TP}$ cannot be assessed in isolation: its ability to position the system within an admissible oscillatory window depends strongly on the background level of recurrent cortical excitation. Taken together, these results show that viable oscillatory regimes occupy only a relatively narrow and structured region of the $(w_{PP},w_{TP})$ plane. This region narrows further when attention is restricted to spindle-like dynamics.\\

\begin{figure}[h!]
\begin{center}
\includegraphics[width=0.9\textwidth]{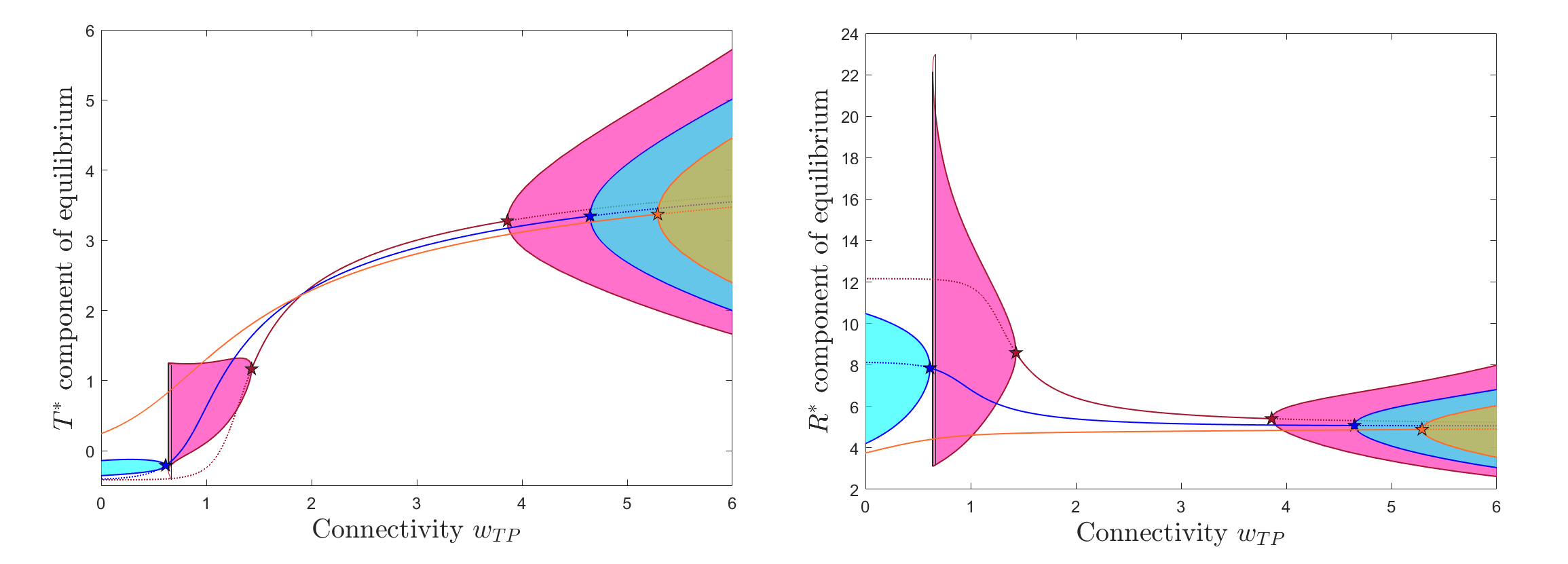}
\end{center}
\caption{\small \emph{{\bf Dependence on $w_{TP}$ for sample $w_{PP}$ values.} The panels show the transitions of the system as $w_{TP}$ is increased, for three different fixed values of $w_{PP}$: $w_{PP} =1$ (brown); $w_{PP}=0.75$ (blue); $w_{PP}=0.5$ (orange). Each panel represents the same bifurcation diagram, from the perspective of two different system components: $T$ and $R$ (left and right panel, respectively). Stable equilibrium branches are shown as solid curves, unstable branches as dotted curves (based on the number of unstable directions). Supercritical Hopf bifurcations are marked by stars, with the evolution of the corresponding stable cycles shown as shaded areas (with the shade matching the respective equilibrium curve). The limit point cycle for $w_{PP}=1$ is marked by a black bar, and the emerging unstable cycle is shown with no color and full transparency. For this simulation, $w_{TR}=0.8$, $w_{RT}=0.4$, $w_{RR}=0.2$, and the other parameters were fixed to their table baseline values.}}
\label{wTP_bif}
\end{figure}

\begin{figure}[h!]
\begin{center}
\includegraphics[width=0.9\textwidth]{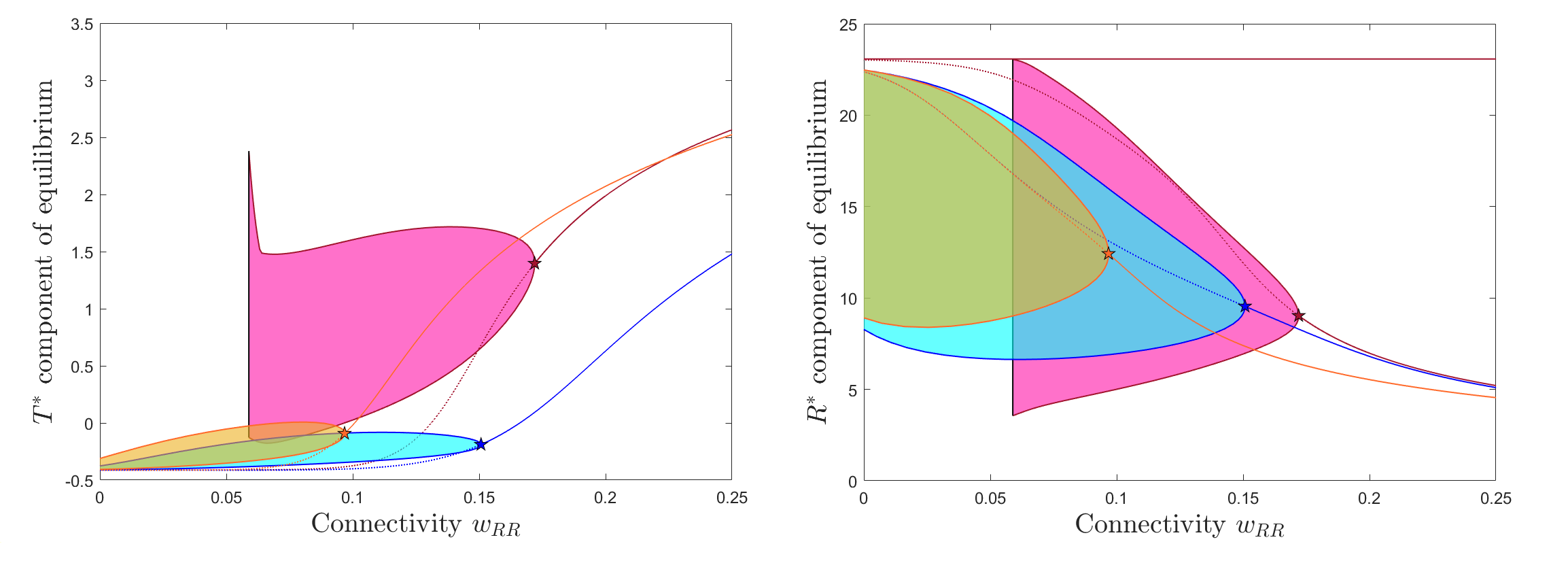}
\end{center}
\caption{\small \emph{{\bf Dependence on $w_{RR}$ for sample $w_{PP}$ and $w_{TP}$ values.} The panels show the transitions of the system as $w_{RR}$ is increased, for three different fixed pairs of $w_{PP}$ and $w_{TP}$: $(w_{PP},w_{TP})=(1,1.75)$ (pink); $(w_{PP},w_{TP})=(0.75,1)$ (blue); $(w_{PP},w_{TP})=(0.75,1.75)$ (orange). Each panel represents the same bifurcation diagram, from the perspective of $T$ and $R$ (left and right panels, respectively). Stable equilibrium branches are shown as solid curves, unstable branches as dashed and dotted curves (based on the number of unstable directions). Limit points are marked with large dots, and Hopf bifurcations are marked by stars, with the evolution of the corresponding stable cycles shown as a shaded area (with the shade matching the respective equilibrium curve). The limit point cycle found for $(w_{PP},w_{TP})=(1,1.75)$ is marked by a black bar. For this simulation, $w_{TR}=0.8$, $w_{RT}=0.4$, and the other parameters were fixed to their table baseline values.}}
\label{wRR_bif}
\end{figure}

\noindent \emph{\textbf{Strength of reticular self-inhibition $w_{RR}$.}} We next investigate how this balance conditions the effect of other coupling weights on the system's dynamics. In particular, we explore the dependence on the strength of reticular self-inhibition $w_{RR}$, reflecting the efficiency of synaptic and electrical interactions within the TRN population. Figure~\ref{wRR_bif} illustrates this dependence for several pairs $(w_{PP},w_{TP})$ chosen from the biologically relevant region suggested by the previous analysis.

The effect of $w_{RR}$ is strongly contingent on the background position of the system in the $(w_{PP},w_{TP})$ plane. In all cases shown, oscillations occur within a window of small values of $w_{RR}$ and disappear as the equilibrium regains stability through a supercritical Hopf bifurcation. However, the location, biological admissibility and geometry of this oscillatory window depend markedly on the chosen $(w_{PP},w_{TP})$ pair -- reinforcing the finding that insufficient $w_{PP}$ leads to biologically inadmissible cycles (negative in the $T^*$ component). Figure~\ref{wRR_bif} therefore supports a potentially hierarchical picture of parameter control: once $(w_{PP},w_{TP})$ place the system near a viable oscillatory regime, $w_{RR}$ governs how that regime is internally structured and how far it can persist. Along these lines, $w_{RR}$ can be interpreted primarily as a self-limitation parameter. Increasing $w_{RR}$ strengthens internal suppression within the reticular population, thereby compressing the oscillatory window, reshaping the associated cycles, and eventually terminating the oscillations altogether. This admissible region narrows further when one restricts attention to particular rhythms, such as spindle-like oscillations.\\

\begin{figure}[h!]
\begin{center}
\includegraphics[width=0.9\textwidth]{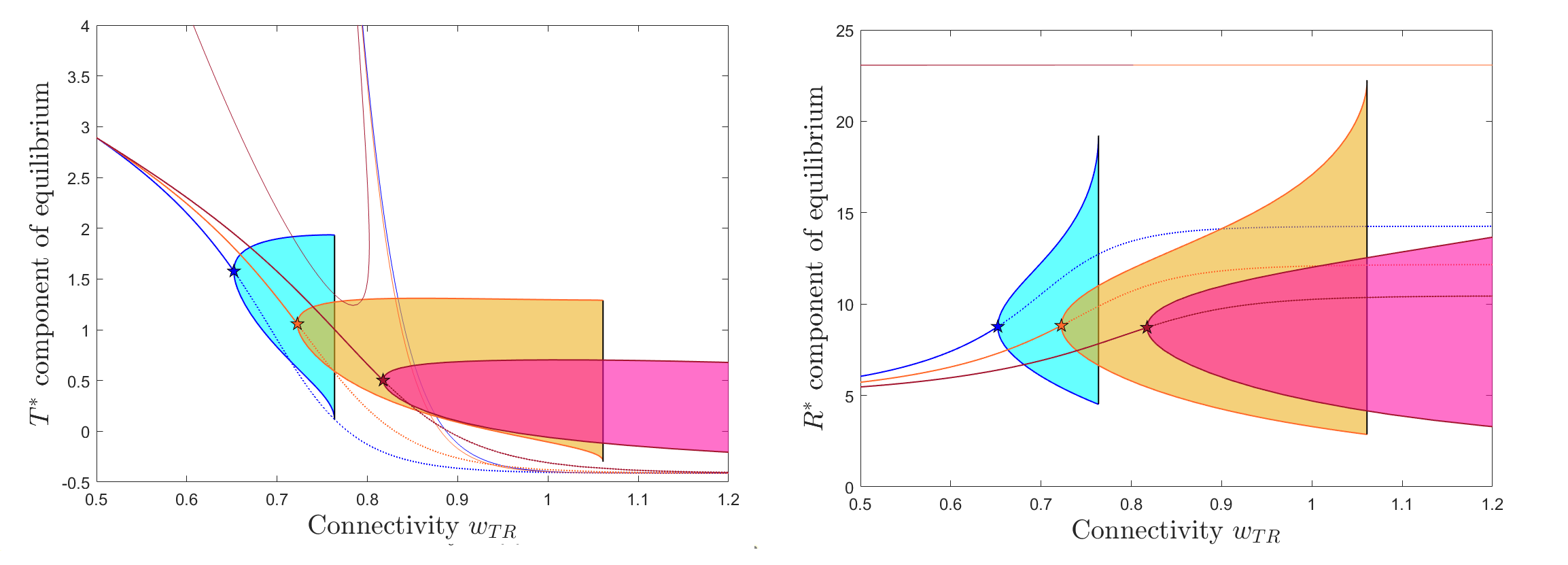}
\end{center}
\caption{\small \emph{{\bf Dependence on $w_{TR}$ for different $w_{PP}$ values.} The panels show the transitions of the system as $w_{TR}$ is increased, for three different fixed values of $w_{PP}$: $w_{PP}=1.1$ (blue); $w_{PP}=1$ (orange); $w_{PP}=0.9$ (brown). Each panel represents the same bifurcation diagram, from the perspective of $T$ and $R$ (left and right panels, respectively). Stable equilibrium branches are shown as solid curves, unstable branches as dashed and dotted curves (based on the number of unstable directions). Limit points are marked with large dots, and Hopf bifurcations are marked by stars, with the evolution of the corresponding stable cycles shown as a shaded area (with the shade matching the respective equilibrium curve). Limit point cycles are marked by a black bar. For this simulation, $w_{RT}=0.4$, $w_{RR}=0.2$ and the other parameters were fixed to their table baseline values.}}
\label{wTR_bif_diff_wPP}
\end{figure}

\noindent \emph{\textbf{Strength of TRN-to-relay coupling $w_{TR}$.}} We simulated the effect of varying $w_{TR}$ on oscillatory dynamics, for a range of recurrent cortical excitation $w_{PP}$ and relay-to-TRN coupling $w_{RT}$. Figure~\ref{wTR_bif_diff_wPP} shows that, when fixing $w_{RT}$, the influence of the TRN-to-relay coupling $w_{TR}$ still depends strongly on the level of recurrent cortical excitation $w_{PP}$. In all three cases, increasing $w_{TR}$ eventually drives the system through a supercritical Hopf threshold (marked by a star), beyond which a stable oscillatory window appears and persists until it terminates at an LPC. However, as expected, the location, width, and biological viability of this oscillatory window vary significantly with $w_{PP}$. For lower values of $w_{PP}$, the oscillatory window is reached earlier in $w_{TR}$ and remains relatively restricted. A second equilibrium branch acts as the system attractor for values of $w_{TR}$ beyond the LPC point marking the end of oscillations; however, this equilibrium quickly becomes biologically unsustainable (negative in the $T^*$ component). As $w_{PP}$ increases, the oscillatory window broadens, but also shifts so that oscillations are physiologically stopped before reaching the LPC, by entering the negative domain for $T$. The level of $w_{PP}$ also controls the oscillation amplitudes and duty cycle (not shown), effectively shaping access to different firing rhythms in the CTRC loop. Overall, this indicates that the same increase in inhibitory reticular drive can have qualitatively different effects depending on how strongly the cortical pyramidal population already supports recurrent excitation. The figure suggests that inhibition from TRN onto relay cells becomes dynamically effective only once the broader corticothalamic loop has already been primed for a specific oscillatory regime. In this sense, $w_{PP}$ can be viewed as acting as a gatekeeper to specific oscillatory regimes, while $w_{TR}$ controls the network's entry into and exit from oscillations within that regime.

\begin{figure}[h!]
\begin{center}
\includegraphics[width=0.9\textwidth]{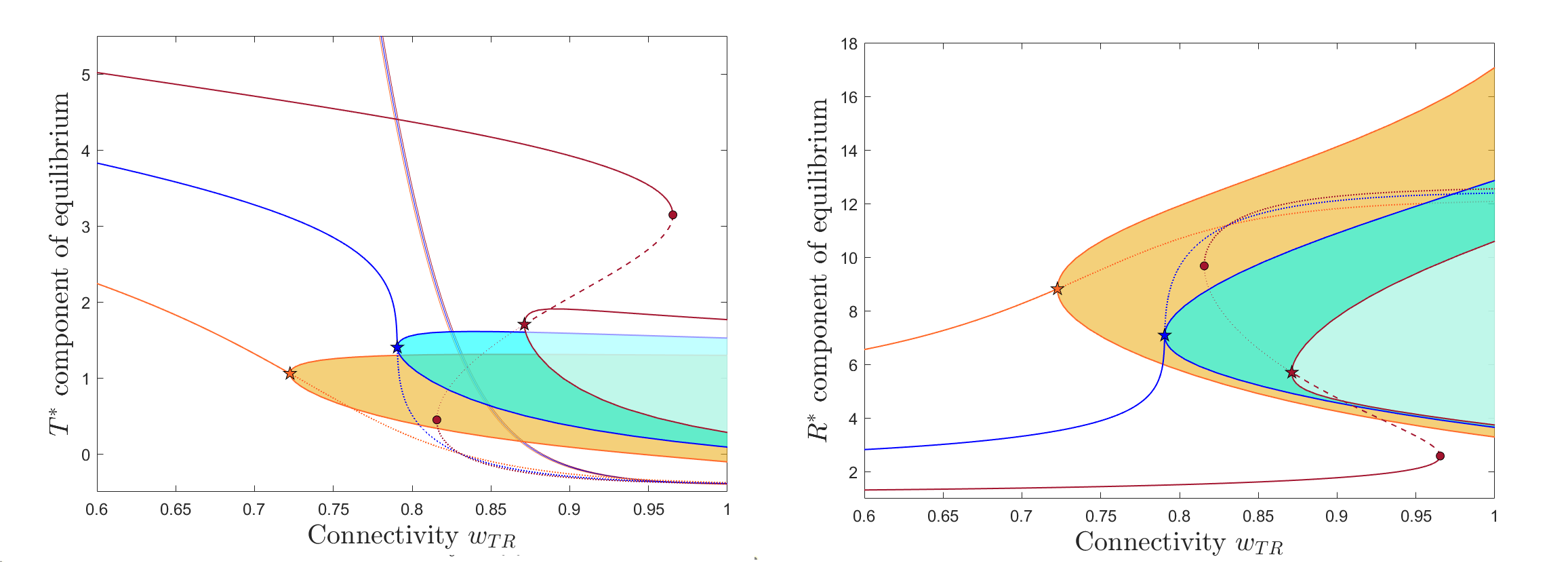}
\end{center}
\caption{\small \emph{{\bf Dependence on $w_{TR}$ for different $w_{RT}$ values.} The panels show the transitions of the system as $w_{TR}$ is increased, for fixed $w_{PP}=1$ and three different fixed values of $w_{RT}$: $w_{RT}=0.4$ (orange); $w_{RT}=0.2$ (blue); $w_{RT}=0.1$ (brown). Each panel represents the same bifurcation diagram, from the perspective of $T$ and $R$ (left and right panels, respectively). Stable equilibrium branches are shown as solid curves, unstable branches as dashed and dotted curves (based on the number of unstable directions). Limit points are marked with large dots, and Hopf bifurcations (both supercritical and subcritical) are marked by stars. The evolution of the stable cycles is shown as a shaded area (with the shade matching the respective equilibrium curve); unstable cycles are shown in no color and full transparency.}}
\label{wTR_bif_wPP_1_diff_wRT}
\end{figure}

\begin{figure}[h!]
\begin{center}
\includegraphics[width=0.9\textwidth]{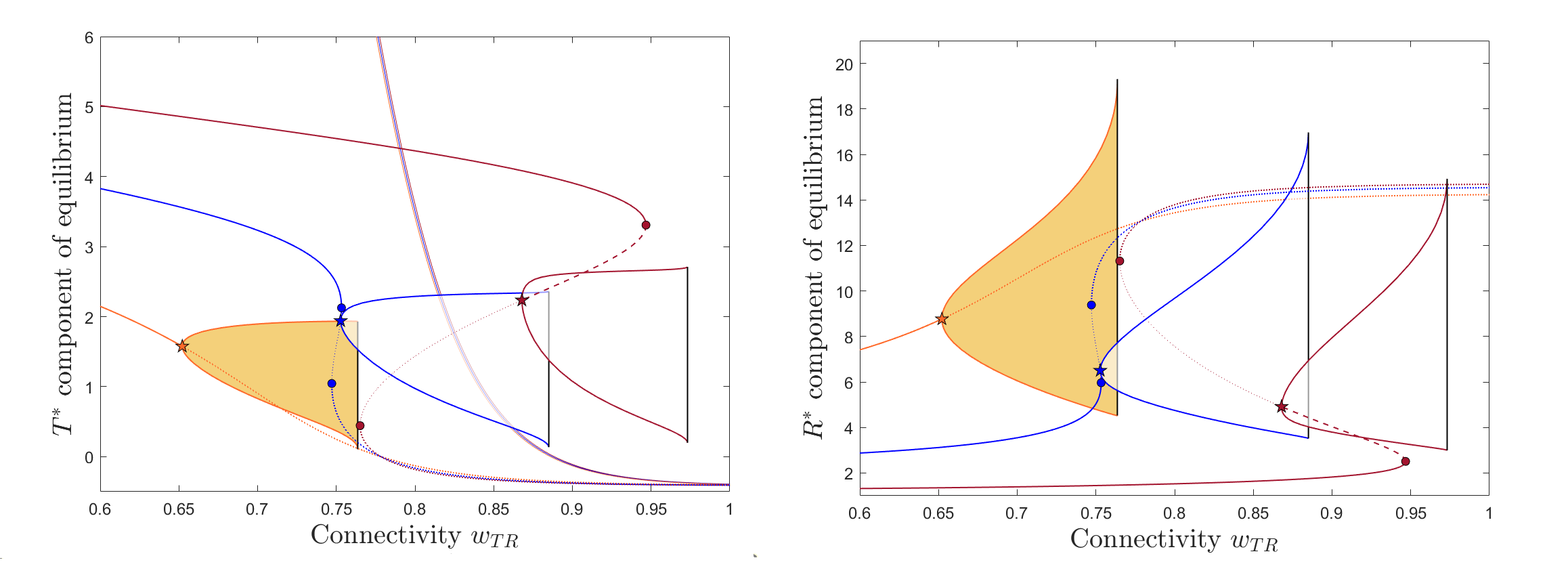}
\end{center}
\caption{\small \emph{{\bf Dependence on $w_{TR}$ for different $w_{RT}$ values.} The panels show the transitions of the system as $w_{TR}$ is increased, for fixed $w_{PP}=1.1$ and three different fixed values of $w_{RT}$: $w_{RT}=0.4$ (orange); $w_{RT}=0.2$ (blue); $w_{RT}=0.1$ (brown). Each panel represents the same bifurcation diagram, from the perspective of $T$ and $R$ (left and right panels, respectively). Stable equilibrium branches are shown as solid curves, unstable branches as dashed and dotted curves (based on the number of unstable directions). Limit points are marked with large dots, and Hopf bifurcations (both supercritical and subcritical) are marked by stars. The evolution of the stable cycles is shown as a shaded area (with the shade matching the respective equilibrium curve); unstable cycles are shown in no color and full transparency. Limit points of cycles are marked by a black bar.}}
\label{wTR_bif_wPP_1_1_diff_wRT}
\end{figure}

We next show how, once $w_{PP}$ grants access to an oscillatory behavior, the pair $(w_{TR},w_{RT})$ can further modulate the detailed structure of the oscillatory window.  Figures~\ref{wTR_bif_wPP_1_diff_wRT} and~\ref{wTR_bif_wPP_1_1_diff_wRT} illustrate the direct dependence on $w_{TR}$ for a range of $w_{RT}$ values, for a lower versus a higher value of cortical excitation ($w_{PP}=1$ and $w_{PP}=1.1$, respectively). In both cases, $w_{RT}$ further gates access to stable oscillations, and controls the onset point, width and geometry of the oscillatory window shown with respect to $w_{TR}$. 

Figure~\ref{wTR_bif_wPP_1_diff_wRT} illustrates this dual control when $w_{PP}=1$. For high $w_{RT}$ ($w_{RT}=0.4$, orange diagram), the system has a stable equilibrium in the biological range for low $w_{TR}$. This equilibrium crosses into stable oscillations at the Hopf bifurcation (orange star), sustained as $w_{TR}$ increases throughout its biological range. This scenario remains qualitatively similar when $w_{RT}$ is decreased to $w_{RT}=0.2$, with the reduced relay-to-TRN excitation producing only a shift to the right on the Hopf bifurcation, and modulation of the oscillation amplitudes. However, if $w_{RT}$ is decreased below a threshold level of relay-to-TRN excitation, the system undergoes a phase transition that presents with the creation of two limit points and hysteresis along the main equilibrium branch, which subsequently changes the nature of the Hopf bifurcation. The cycles born at the Hopf point along the brown equilibrium (representing $w_{RT}=0.1$) are unstable, leaving the system with a bistability window and no access to oscillatory behavior.

Figure~\ref{wTR_bif_wPP_1_1_diff_wRT} confirms that the interpretation emerging from Figure~\ref{wTR_bif_wPP_1_diff_wRT} is robust and not tied to a single cortical operating point. When $w_{PP}$ is increased, the pair $(w_{TR},w_{RT})$ continues to dominate the detailed organization of the oscillatory window in a consistent way, by controlling reticular recruitment and the inhibitory return onto the relay population. On this background, the exact location of the phase transitions and the extent of the oscillatory regimes shift with the cortical background. For example, a visible difference in higher $w_{PP}=1.1$ compared to lower $w_{PP}=1$ is that the cycles born at the Hopf point are unstable for both $w_{RT}=0.2$ and for $w_{RT}=0.1$, hence one does not need to lower $w_{RT}$ as much as in the previous case in order to block stable oscillations for the entire $w_{TR}$ range.

\begin{figure}[h!]
\begin{center}
\includegraphics[width=0.9\textwidth]{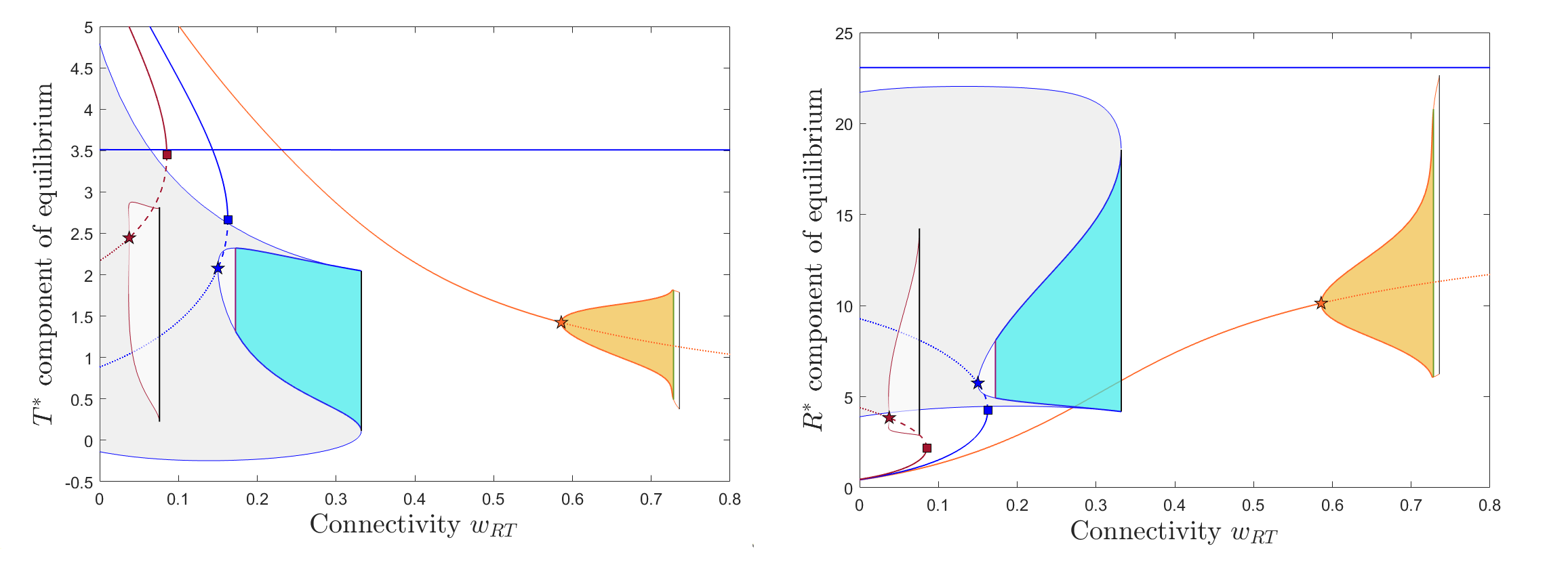}
\includegraphics[width=0.4\textwidth]{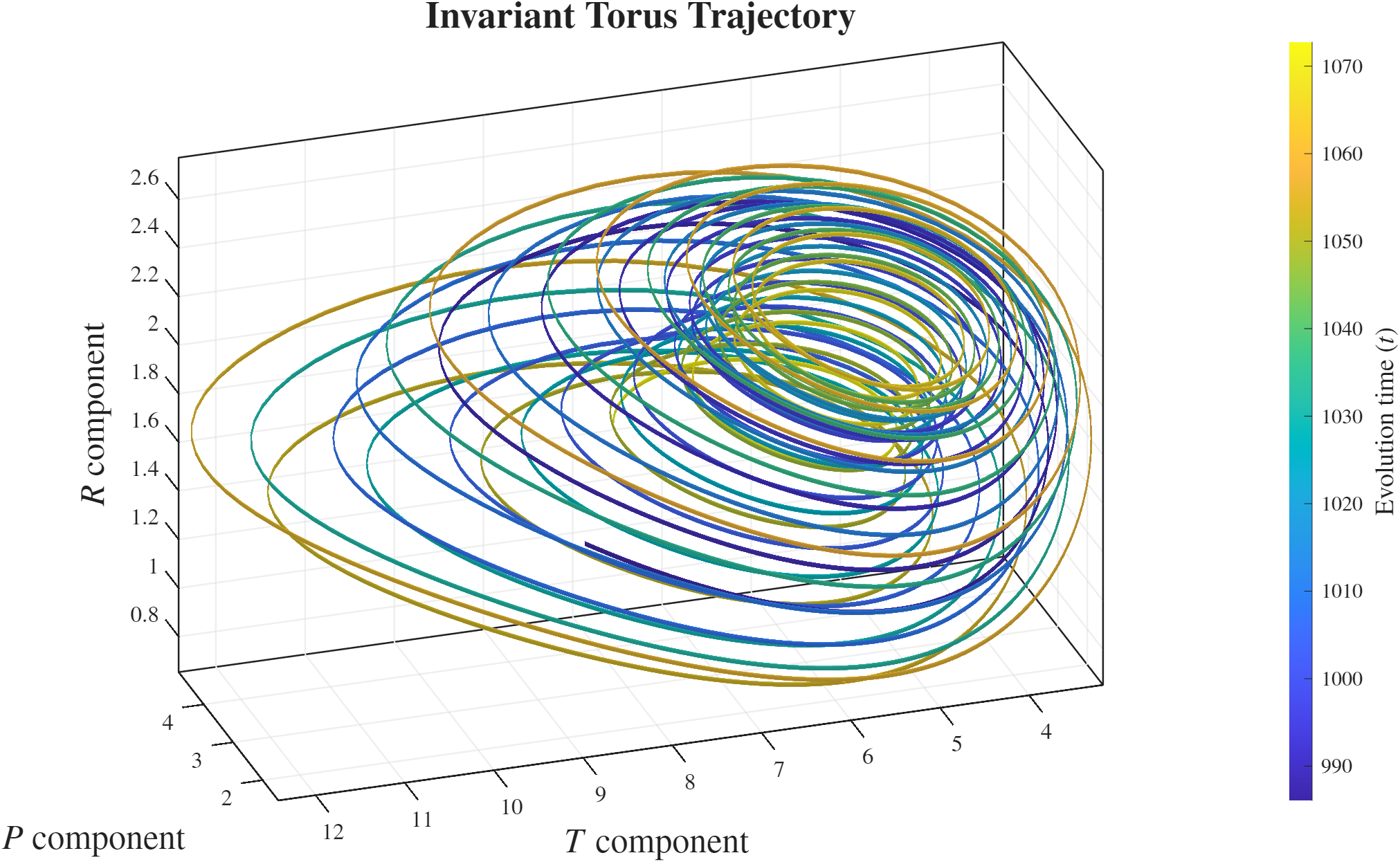}
\end{center}
\caption{\small \emph{{\bf Dependence on $w_{RT}$ for different $w_{TR}$ values.} The top panels show the transitions of the system as $w_{RT}$ is increased, for fixed $w_{PP}=1.1$ and three different fixed values of $w_{TR}$: $w_{TR}=0.6$ (orange); $w_{TR}=0.8$ (blue); $w_{TR}=1$ (brown). Each panel represents the same bifurcation diagram, from the perspective of $T$ and $R$ (left and right panels, respectively). Stable equilibrium branches are shown as solid curves, unstable branches as dashed and dotted curves (based on the number of unstable directions). Limit points are marked with large dots, and Hopf bifurcations (both supercritical and subcritical) are marked by stars. The evolution of the stable cycles is shown as a shaded area (with the shade matching the respective equilibrium curve); unstable cycles are shown in no color and full transparency. Limit points of cycles are marked by a black bar. For $w_{TR}=0.6$, the orange attracting cycle changes stability at a Neimark--Sacker bifurcation, shown as a green vertical bar. For $w_{TR}=0.8$, the unstable cycle born at the Hopf point becomes stable at a Neimark--Sacker bifurcation, shown as a purple vertical bar. The bottom panel shows (in a three-dimensional $(P,T,R)$ slice) an instance of the stable invariant torus that emerges via this bifurcation, with the color evolving with time as shown in the attached gradient bar.}}
\label{wRT_biff_diff_wTR}
\end{figure}

The fine-control pair $(w_{TR},w_{RT})$ acts therefore as a thalamo-reticular ``core:'' once the cortical background has been appropriately tuned, these two parameters can determine how the oscillations are configured, sustained, and terminated. Cortical excitation still matters, but no longer as the main player. Once cortical parameters determine accessibility to a regime, thalamo-reticular couplings determine its detailed expression. This motivated us to center the remainder of the analysis (in the no-delay system, as well as when considering distributed delays) on the relay-TRN coupling pair for fixed baseline values of $w_{PP}$, $w_{TP}$ and $w_{RR}$, rather than on the full parameter space. To reinforce the interplay between $w_{TR}$ and $w_{RT}$ in controlling oscillatory rhythms in the CTRC circuit, we provide a complementary perspective in Figure~\ref{wRT_biff_diff_wTR} by showing the dependence on $w_{RT}$ for fixed values of $w_{TR}$. 
For lower values of $w_{TR}$  (e.g., $w_{TR}=0.6$, orange diagram), the system can only be pushed into stable oscillations by values of $w_{RT}$ which are too high for the biological range, and these oscillations may never reach the amplitudes required for functional relevance. For values of $w_{TR}$ on the high end of the table range (e.g., $w_{TR}=1$, brown diagram) the system cannot enter oscillations at all (the subcritical Hopf bifurcation renders unstable cycles only). Middle range values of $w_{TR}$ open up the possibility of cycling for an adequately tuned window of $w_{RT}$. The blue diagram illustrates this situation for $w_{TR}=0.8$. The Hopf point at $w_{RT} \sim 0.15$ gives birth to an unstable cycle, which later gains stability via a Neimark--Sacker bifurcation at $w_{RT} \sim 0.18$ (shown as a purple vertical bar) before it ends at a LPC bifurcation at $w_{RT}\sim 0.35$. Notice that for this setup, the system has access not only to stable, spindle-like cycles (cyan shaded region), but also to aperiodic oscillations, for values of $w_{RT}$ slightly lower than the NS bifurcation (the insert shows the invariant torus obtained for $w_{RT}=0.17$).

This emphasizes the idea that this fine control is not provided by either coupling in isolation, but by the functional balance within the coupling scheme: relay activity must recruit TRN strongly enough to sustain the oscillation in the right regime, while TRN inhibition must return strongly enough to generate rebound structure without collapsing the dynamics into quiescence. This observation is especially important for the delay analysis in the next section, because delays are expected to have their strongest effect precisely where excitation and inhibition already form a tightly balanced cyclic loop. Figure~\ref{wRT_biff_diff_wTR} therefore motivates the study of introducing distributed delays in the system -- not only physiologically, but also dynamically. Throughout the remainder of the analysis, we focus on the relay–TRN pair, where the main structurally meaningful organization of the oscillatory dynamics is concentrated.

\subsection{Attractors and transitions for  weak Gamma distributed delays}

Having established in Section~\ref{subsec:linearization} that the delay does not
change the equilibrium locations and enters the dimensionless characteristic
equation through the relative delay \(q=\rho/\tau\), we now investigate
numerically how weak Gamma temporal integration changes the stability and
oscillatory dynamics of the system. Since \(\tau=1\) is fixed throughout the
numerical analysis, the numerical values of \(q\) and \(\rho\) coincide. This analysis was also performed in MatCont 7p4~\cite{Dhooge2008MatCont}.

We first focus on the dependence of the dynamics on the thalamo-reticular couplings \(w_{TR}\) and \(w_{RT}\), considering several representative values of the mean delay \(\rho\). To complement the bifurcation analysis with direct numerical illustrations, we explore how the system's dynamic transitions vary across these coupling parameters and over a range of values of \(\rho\). Compared with the delay-free system, \(\rho\) provides an additional control mechanism that shifts the coupling combinations permitting oscillations, influences their onset, termination, and stability, and modifies their amplitude, period, and temporal profile. This allows us to examine the interplay between connectivity strength and delay in shaping temporal rhythms in the system.

\begin{figure}[h!]
\begin{center}
\includegraphics[width=0.5\textwidth]{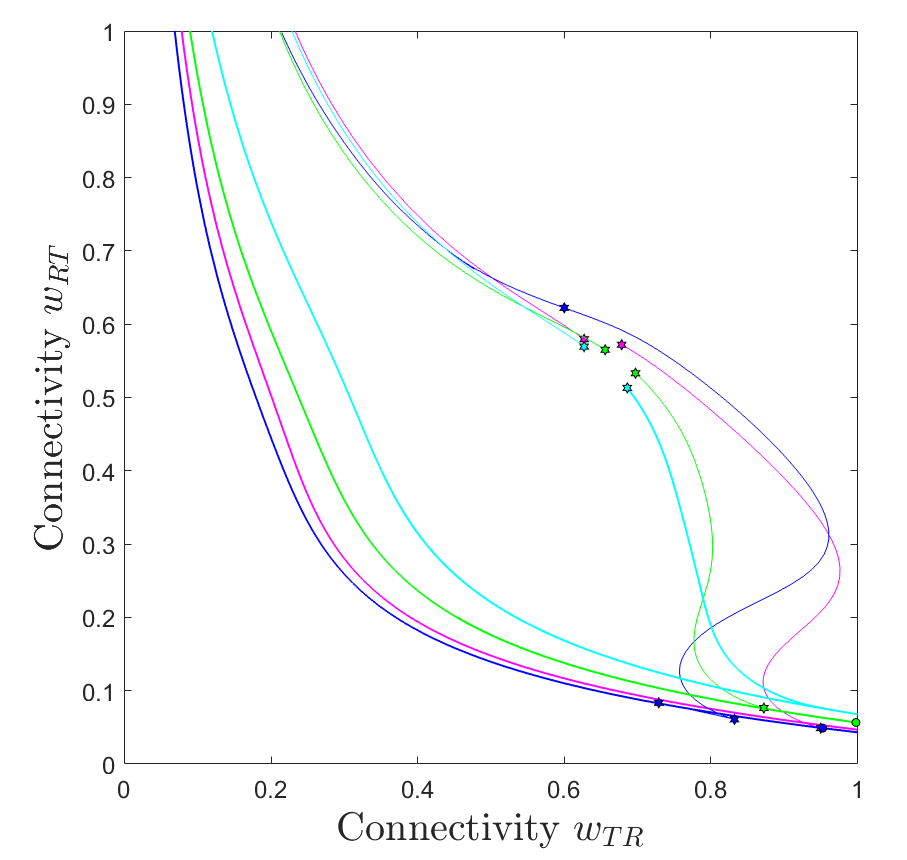}
\end{center}
\caption{\small \emph{{\bf Onset and termination of stable cycles shown in the $(w_{TR},w_{RT})$ parameter plane, for different values of the delay $\rho$}, as follows: $\rho=0.5$ (magenta curves); $\rho=1$ (blue); $\rho=3$ (green); $\rho=5$ (cyan). For each value of $\rho$, the supercritical Hopf curve marking the onset of stable oscillations is shown as a thick curve, and the Limit Point Cycle or Period Doubling curve marking the end of the oscillatory window is shown as a thin curve in the same color. Codimension two bifurcations are marked along these curves as dots in the same color as the corresponding curve.}}
\label{wTR_wRT}
\end{figure}

To begin with, Figure~\ref{wTR_wRT} illustrates the overall deformation of the oscillatory landscape in the $(w_{TR},w_{RT})$ parameter plane -- as the average delay $\rho$ is increased by one order of magnitude within its biological range, from short delays ($\rho=0.5$) to long delays ($\rho=5$). The broad picture shows qualitatively different behavior across connectivity profiles. For high $w_{RT}$ (above $w_{RT} \sim 0.6$), the oscillatory window lies in a lower $w_{TR}$ range, and slightly (yet consistently) shrinks as the delay $\rho$ is increased. For low $w_{RT}$ (below $w_{RT} \sim 0.6$), the oscillatory window shifts overall to higher $w_{TR}$, but the termination curves are no longer monotonically dependent on the delay, and have much more complex dependence on both connectivity and $\rho$ (with the plots visibly crossing and swapping each other, and showing codimension two bifurcation points). To clarify and better understand the underpinnings of these behaviors, we constructed one-parameter cross-sections of this figure, by fixing $w_{TR}$ and $w_{RT}$ to two respective values (one in the respective high range and one in the low range), and plotting the corresponding bifurcation diagram with respect to the remaining parameter, for all four chosen values of $\rho$.

 Figure~\ref{wTR_T_R} first shows this deformation along the $w_{TR}$ direction, for two fixed levels of relay-to-TRN excitation $w_{RT}$. For $w_{RT}=0.4$ (bottom row panels), increasing the average delay produces a regular, monotone contraction of the oscillatory window in terms of $w_{TR}$. As $\rho$ increases through the four values considered, the Hopf onset shifts progressively to the right, from approximately $w_{TR}=0.22$ for $\rho=0.5$ to $w_{TR}=0.27$, $0.31$, and $0.36$ for $\rho=1$, $3$, and $5$, respectively. At the same time, the terminal LPC moves steadily to the left, from approximately $w_{TR}=0.91$ to $0.88$, $0.79$, and $0.75$. Thus, for this higher value of $w_{RT}$, longer delays both postpone the onset of oscillations and advance their termination, leading to a consistent narrowing of the oscillatory window.

In the regime of lower $w_{RT}=0.2$ (top row panels), the dependence on $\rho$ of the upper end of the oscillatory window is no longer monotone, consistently with Figure~\ref{wTR_wRT}. At $\rho=0.5$, the oscillatory window begins near $w_{TR}=0.54$ and terminates near $w_{TR}=0.94$. When the delay is increased to $\rho=1$, the Hopf onset moves sharply to the left, to approximately $w_{TR}=0.37$, while the terminal bifurcation shifts to about $w_{TR}=0.82$, so that the oscillatory window initially broadens rather than contracts. For larger delays, the trend reverses: the Hopf onset moves back to the right, while the termination point continues to move left, respectively.

The temporal panels underneath the bifurcation landscapes clarify that the dynamical consequences of the interplay between the delay and the connectivity parameters go beyond simply shifting the oscillation window. For $\rho=0.1$ and $\rho=0.5$, the sampled parameter values lie close to the oscillation onset, and support more rapid recurrent oscillations ($\sim 0.1-0.2$Hz). By $\rho=2$, the sample point is closer to the right side of the oscillatory window, and the surviving cycles are substantially slower ($\sim 0.06$Hz). Higher values of $w_{TR}$ successively cross into the stable-equilibrium regime. For example, at $\rho=5$, the point $w_{TR}=0.5$ falls below the oscillation offset for $w_{RT}=0.4$, but is outside of the oscillatory window for $w_{RT}=0.2$. In turn, $w_{TR}=0.8$ is beyond the limit point cycle in both cases. This provides a compelling illustration of the coupling-delay interaction: for the same delay and the same value of $w_{TR}$, the rhythm may either be suppressed or persist as a very slow oscillation, depending only on the strength of the excitatory return from the relay population to the TRN. It also shows that the duty cycle changes significantly across the oscillation window, an important idea which we will revisit separately later.

\clearpage
\begin{landscape}
\begin{figure}[h!]
\begin{center}
\includegraphics[width=0.24\linewidth]{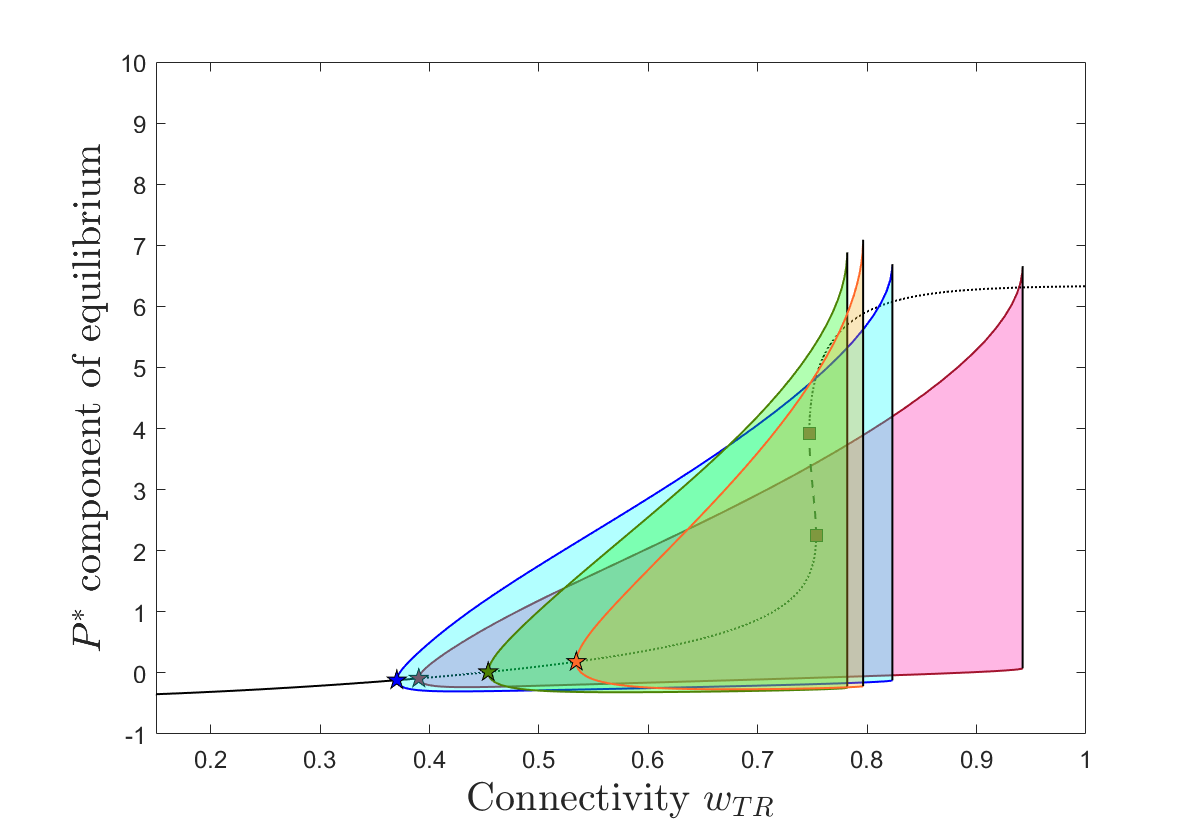}
\includegraphics[width=0.24\linewidth]{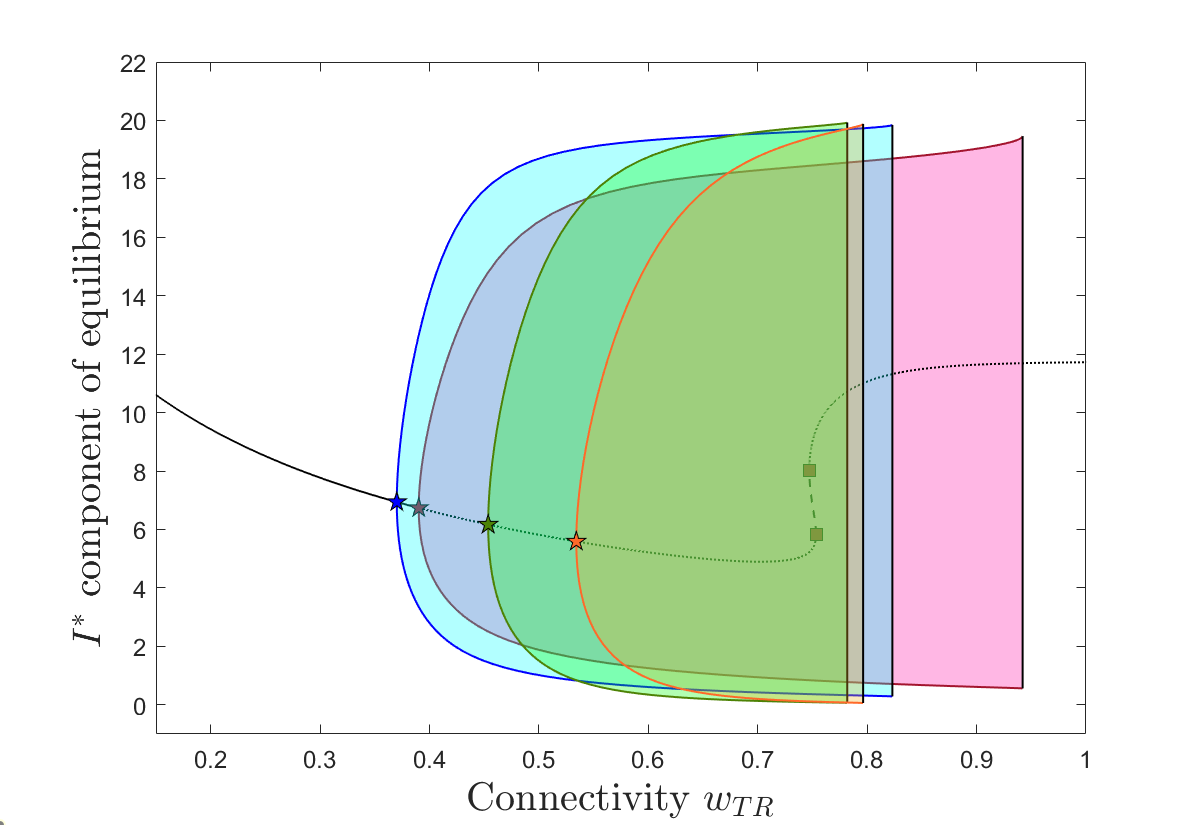}
\includegraphics[width=0.24\linewidth]{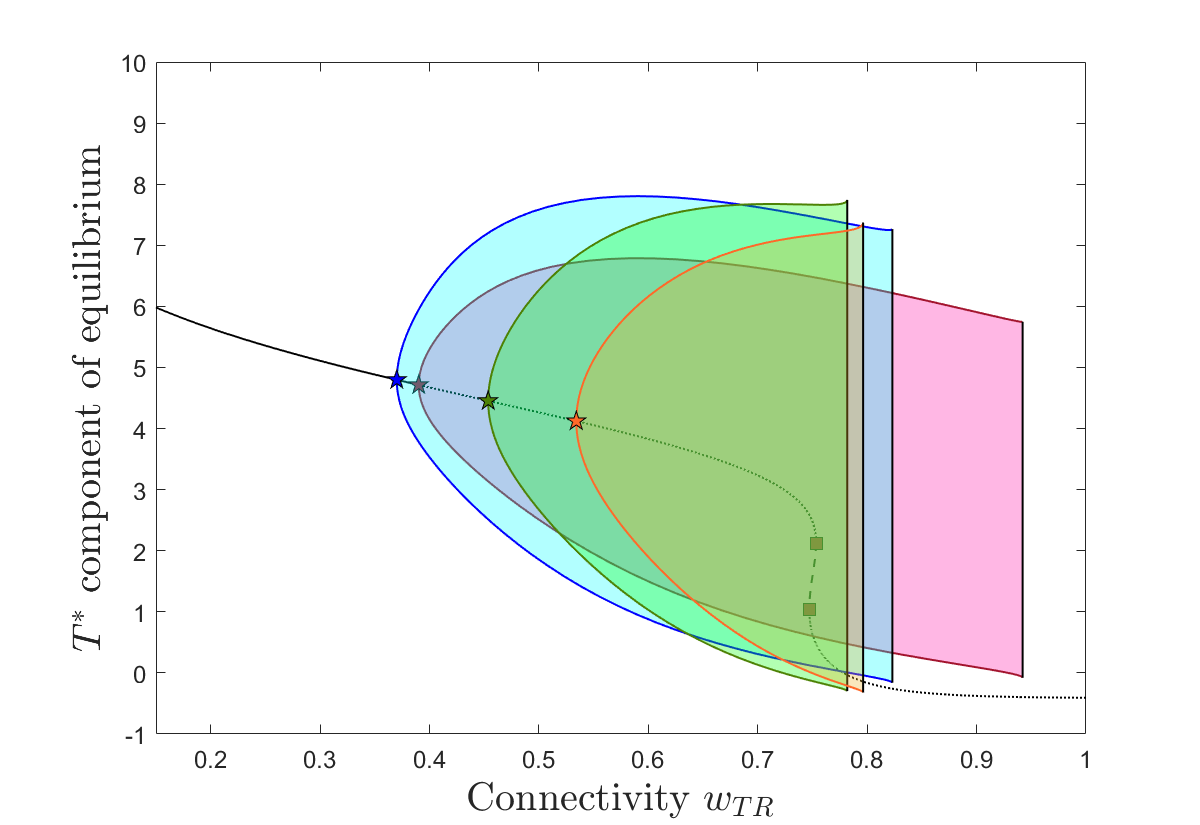}
\includegraphics[width=0.24\linewidth]{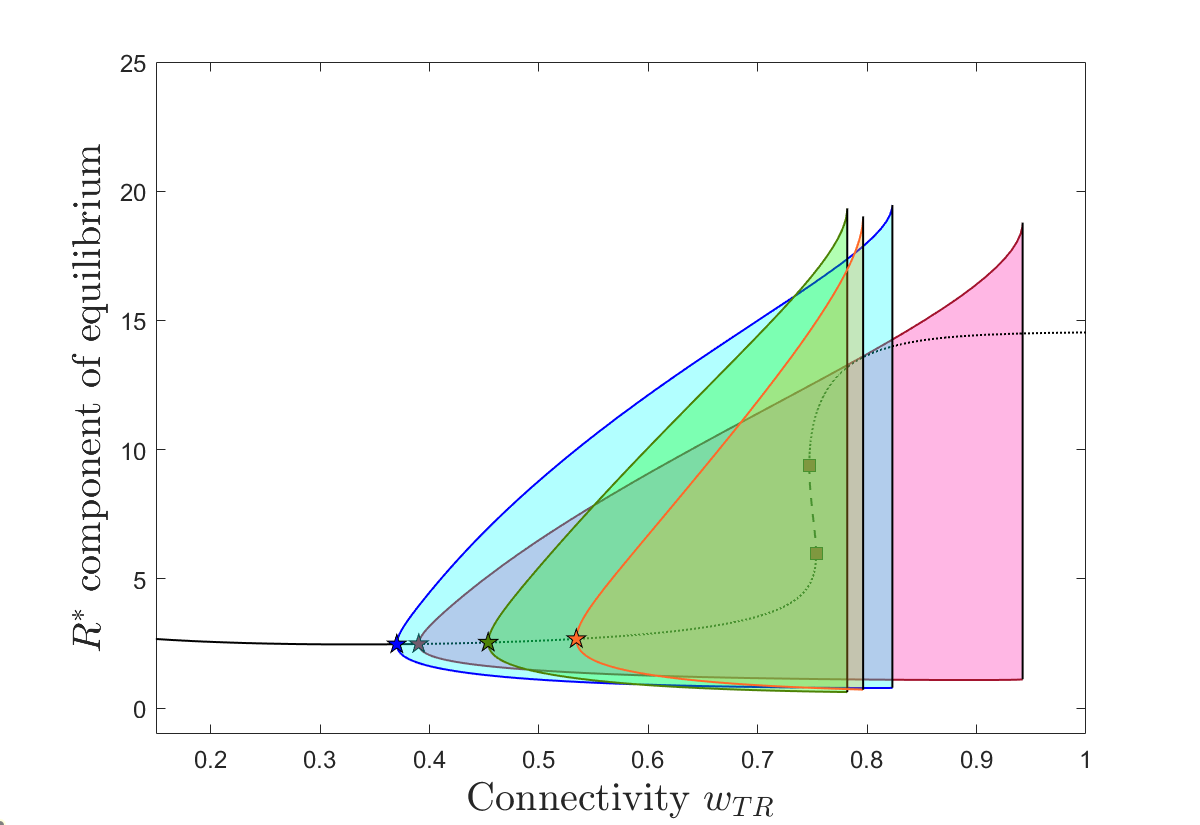}

\includegraphics[width=0.24\linewidth]{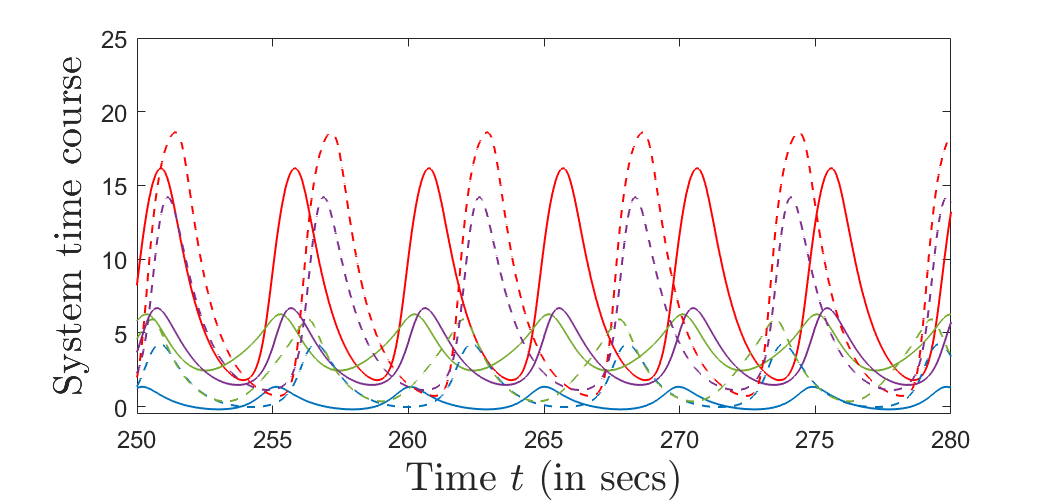}
\includegraphics[width=0.24\linewidth]{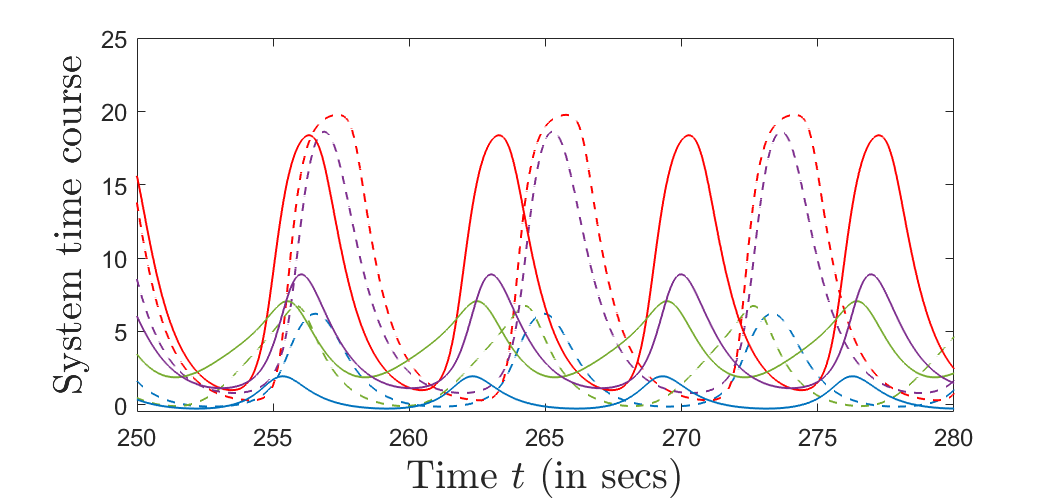}
\includegraphics[width=0.24\linewidth]{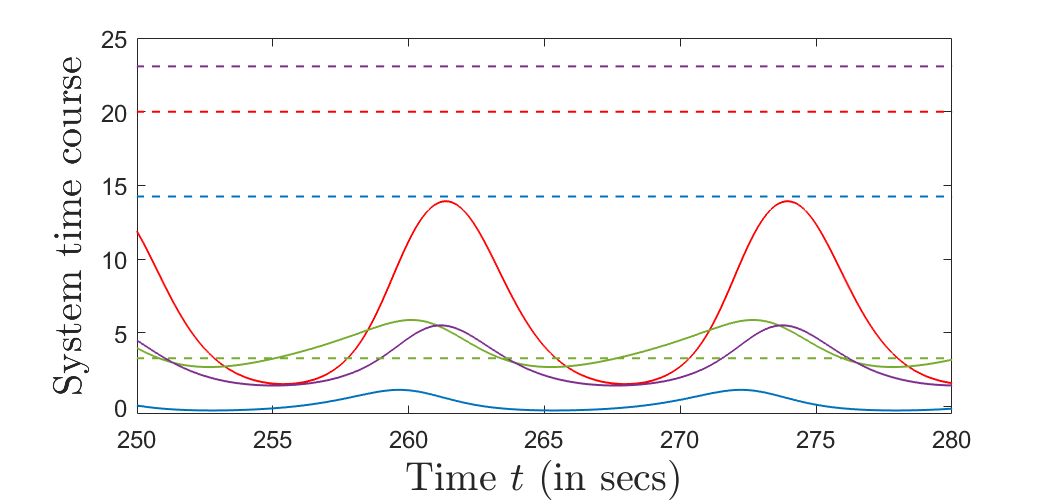}
\includegraphics[width=0.24\linewidth]{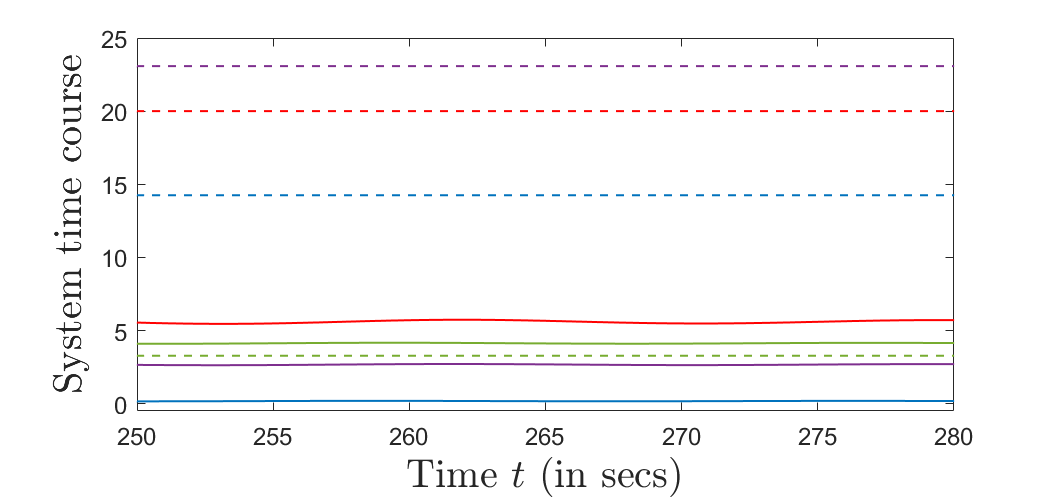}

\vspace{1cm}
\includegraphics[width=0.24\linewidth]{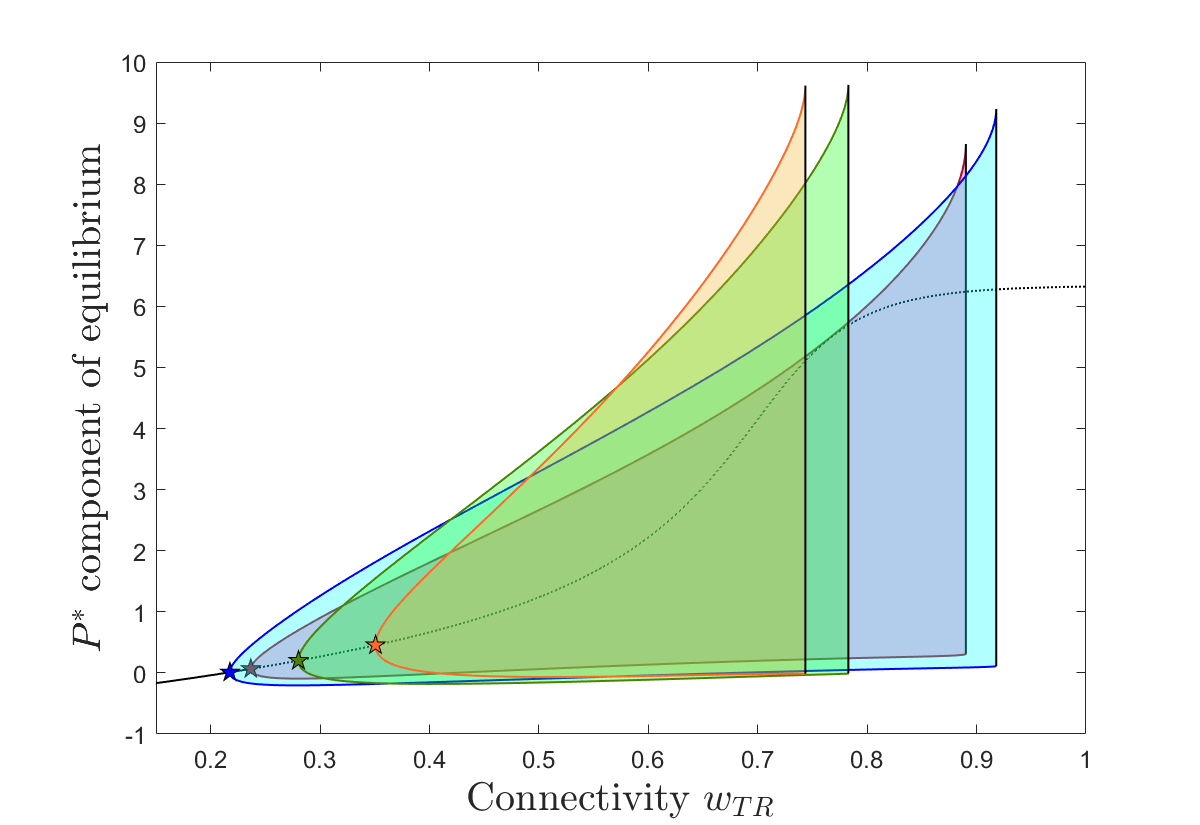}
\includegraphics[width=0.24\linewidth]{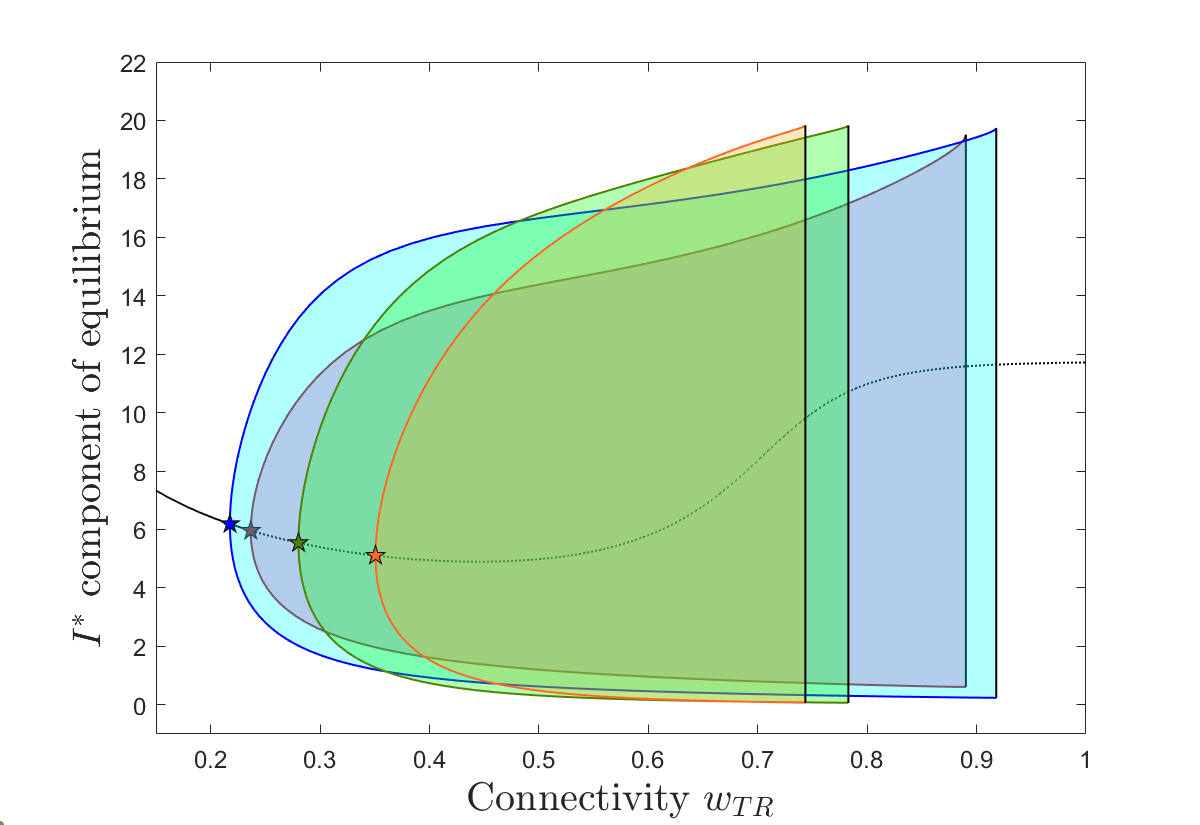}
\includegraphics[width=0.24\linewidth]{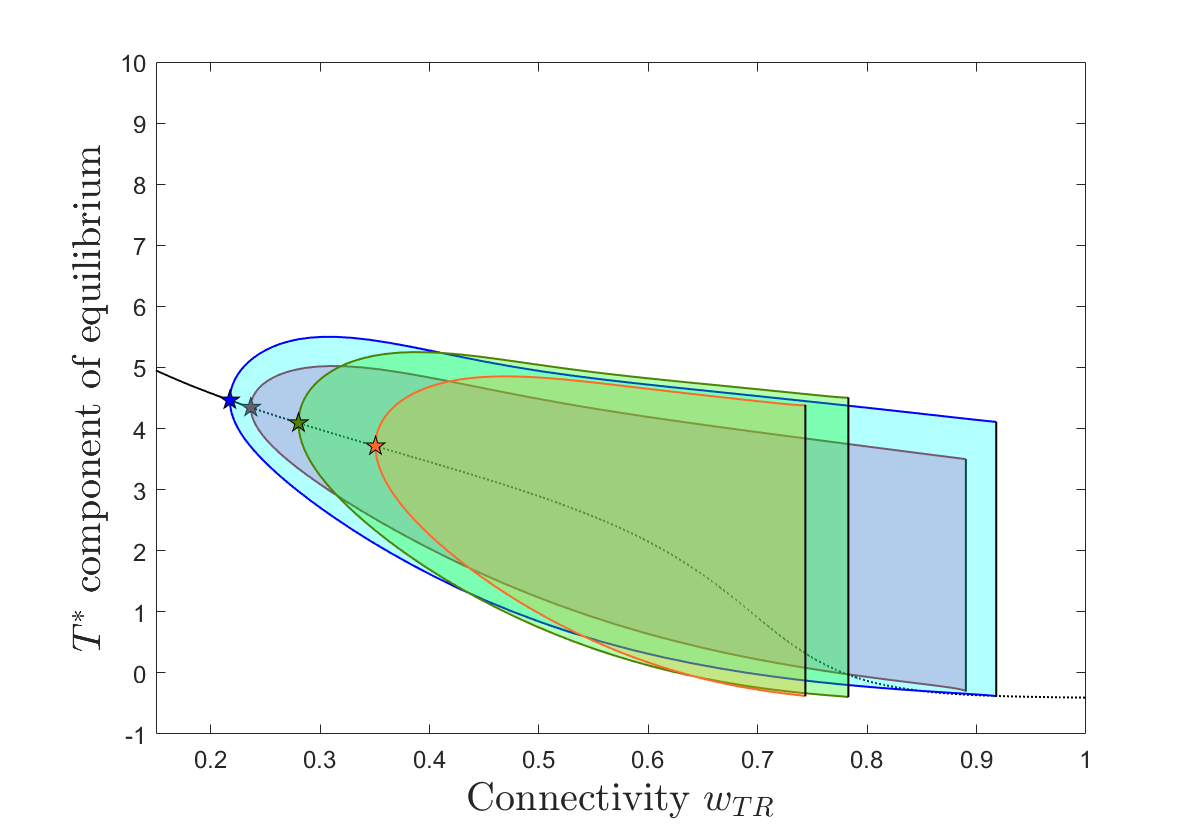}
\includegraphics[width=0.24\linewidth]{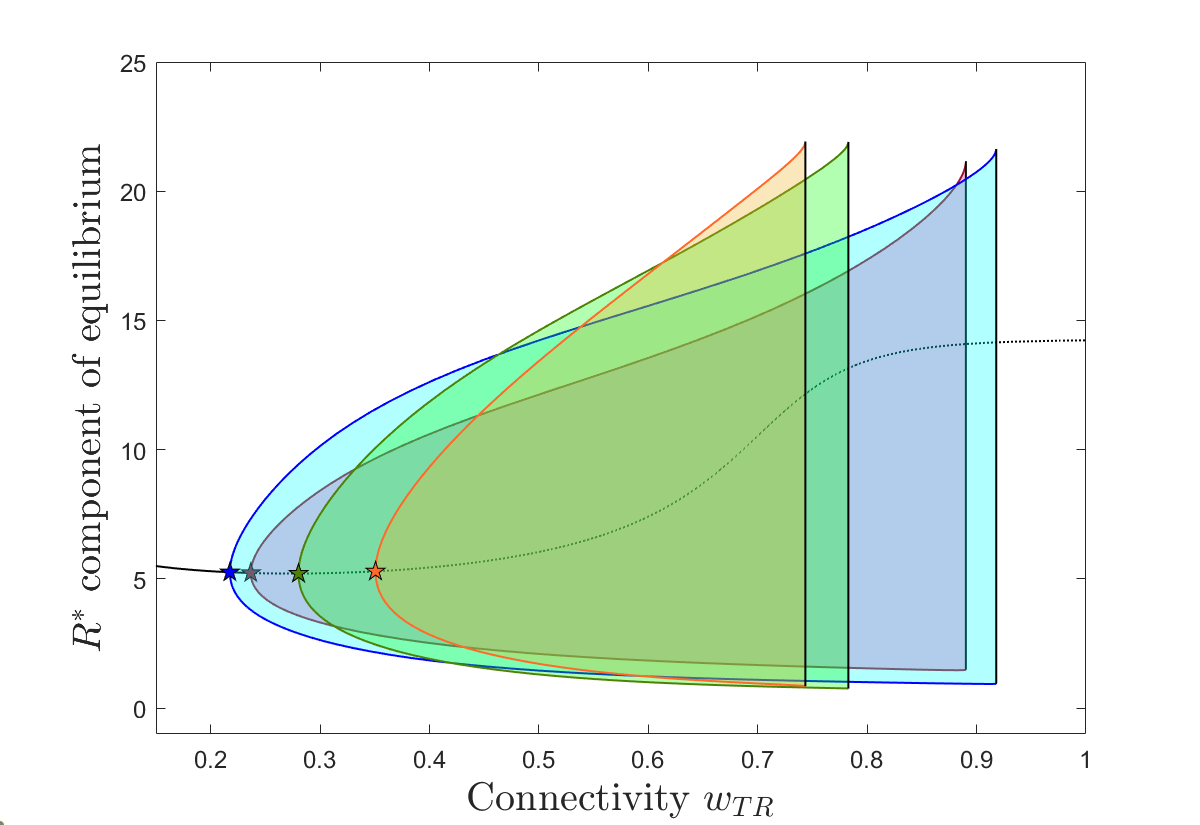}

\includegraphics[width=0.24\linewidth]{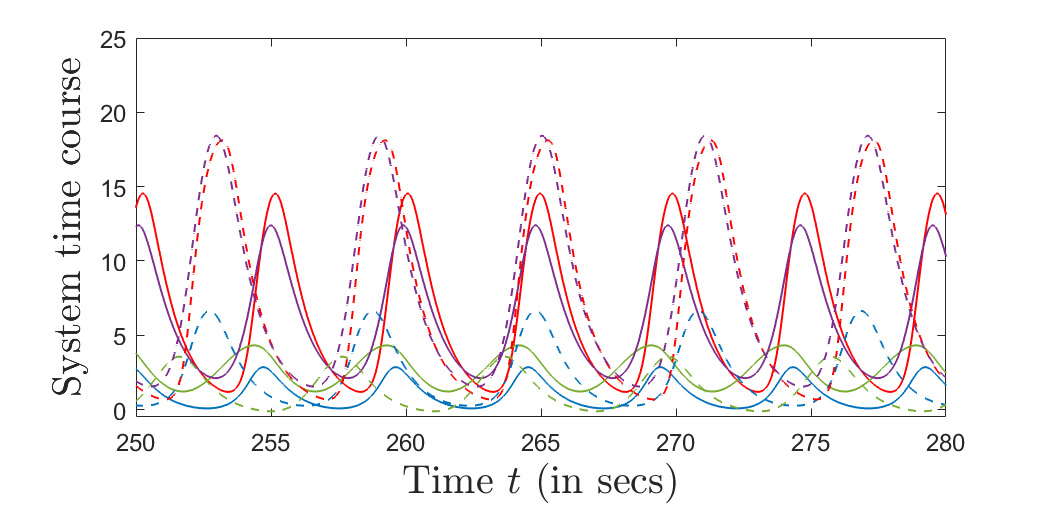}
\includegraphics[width=0.24\linewidth]{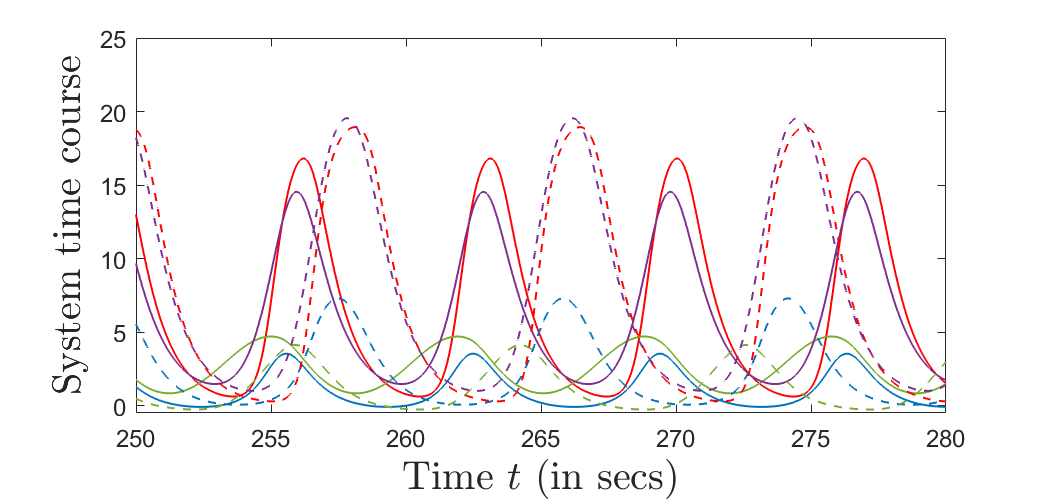}
\includegraphics[width=0.24\linewidth]{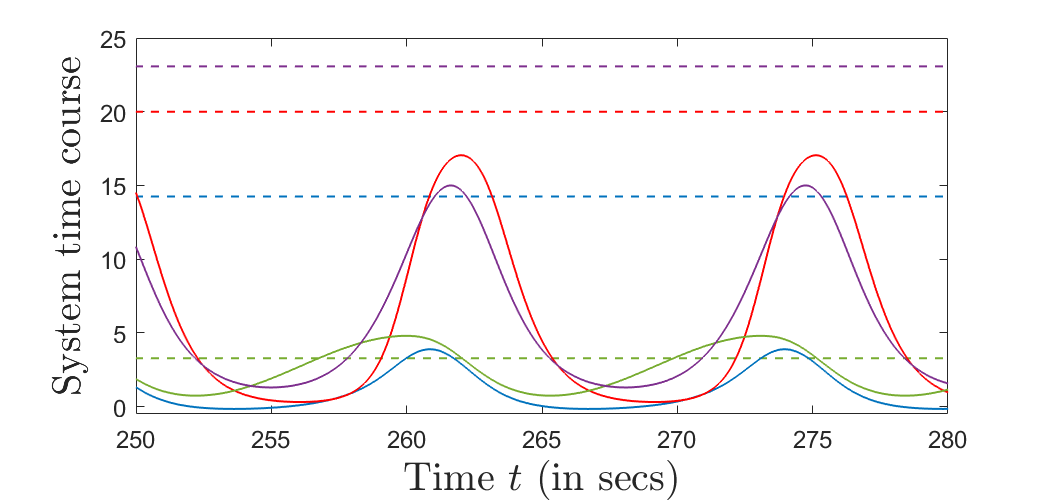}
\includegraphics[width=0.24\linewidth]{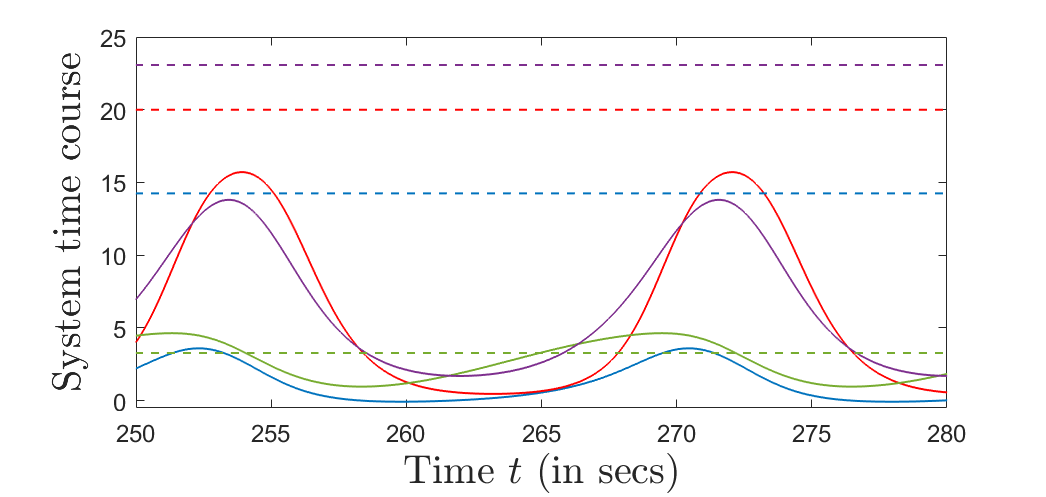}

\end{center}

\caption{\small \emph{{\bf Dependence on $w_{TR}$ for different $w_{RT}$ values and different delays $\rho$.}} {\bf Top:} $w_{RT}=0.2$. {\bf Bottom:} $w_{RT}=0.4$. Different diagrams stand for different average delays: $\rho=0.5$ (pink); $\rho=1$ (blue); $\rho=3$ (green); $\rho=5$ (orange). The temporal panels underneath illustrate a sample solution for the scenario of fixed $w_{RT}=0.2$ (top), and for fixed $w_{RT}=0.4$ (bottom) for $w_{TR}=0.5$ (solid curves) and $w_{TR}=0.8$ (dashed curves), for four different sample delays in each case (left to right): $\rho=0.1$. $\rho=0.5$, $\rho=2$, $\rho=5$. The system components are coded by color as follows: $P$ (blue), $I$ (red); $T$ (green) and $R$ (purple). For these simulations, $w_{PP}=1.1$, $w_{RR}=0.2$ and all other parameters were fixed to their baseline tabulated values.}
\label{wTR_T_R}
\end{figure}
\end{landscape}

\clearpage
\begin{landscape}
\begin{figure}[h!]
\begin{center}
\includegraphics[width=0.24\linewidth]{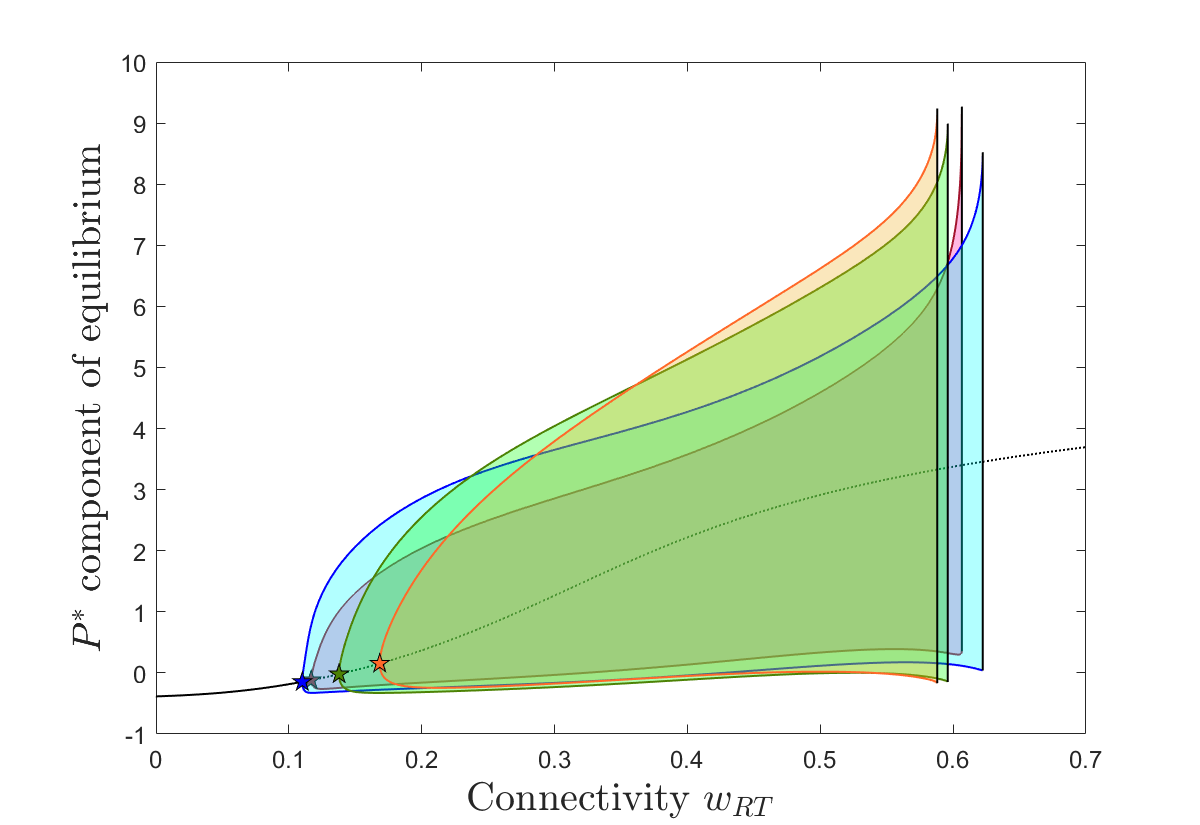}
\includegraphics[width=0.24\linewidth]{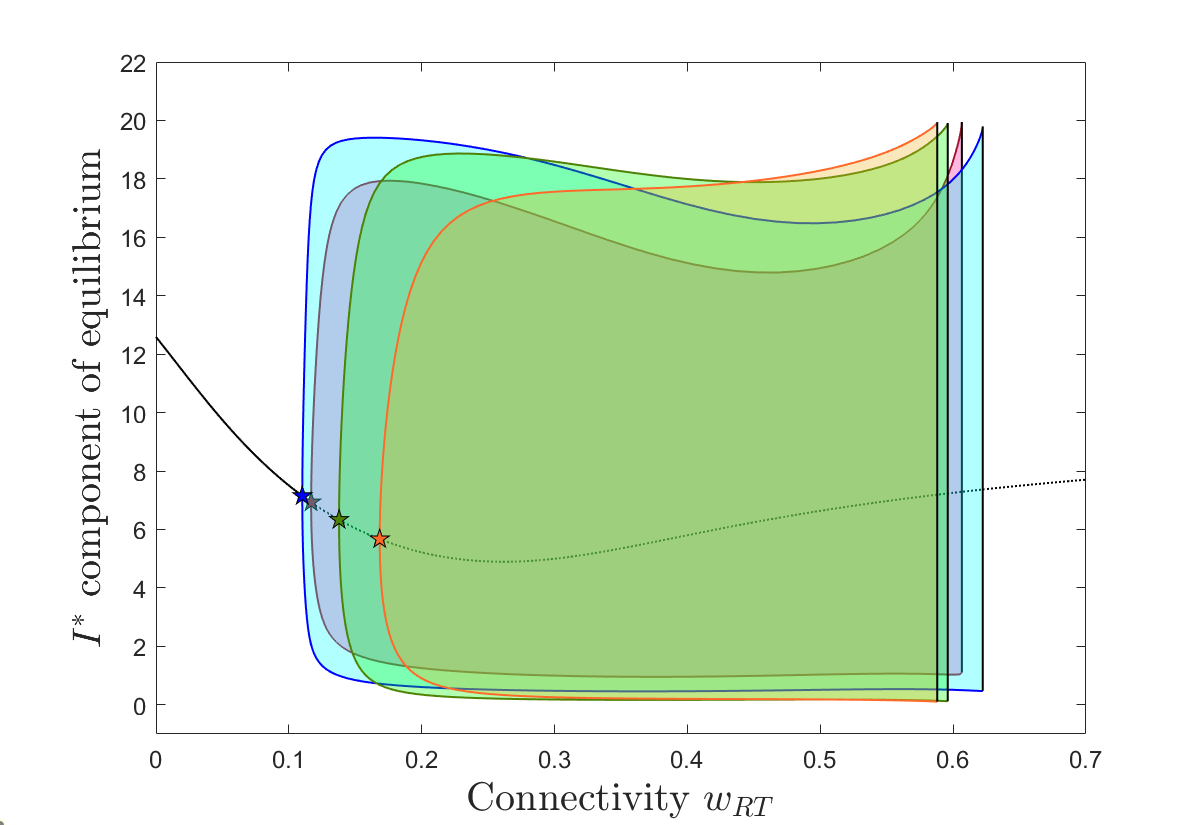}
\includegraphics[width=0.24\linewidth]{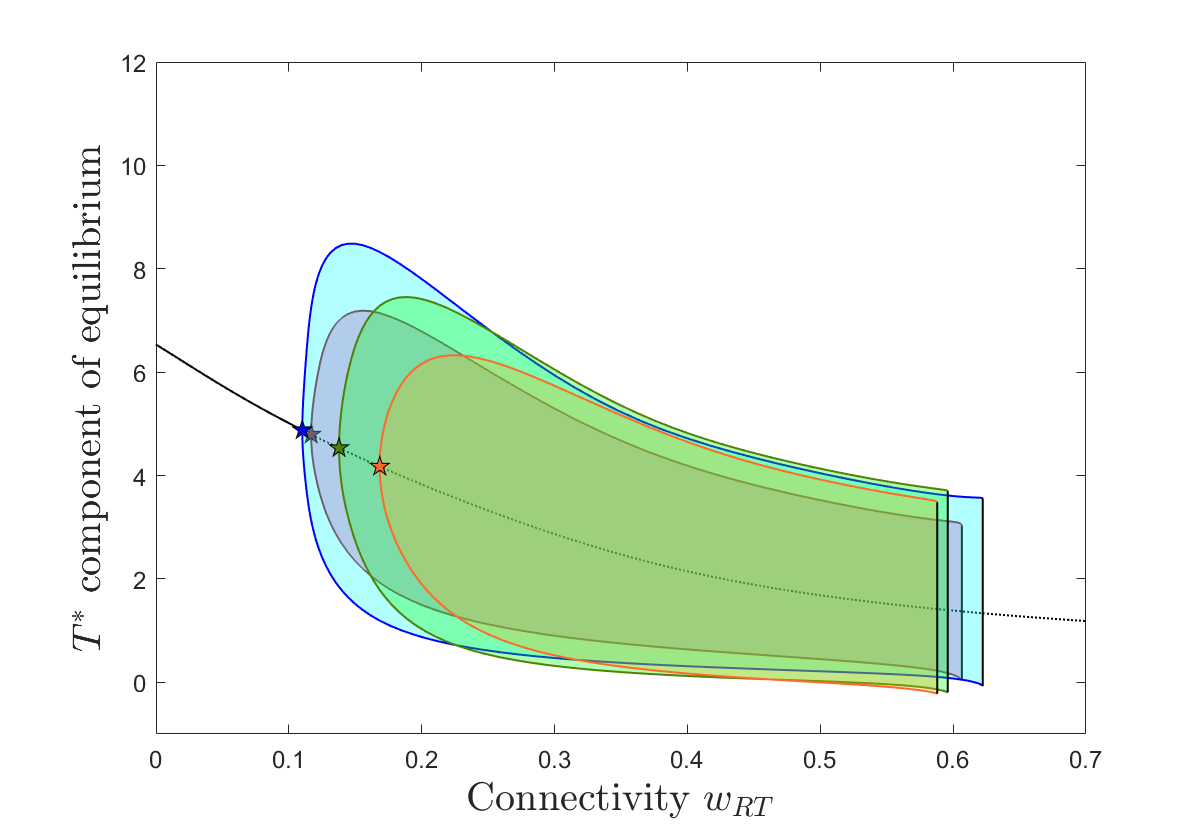}
\includegraphics[width=0.24\linewidth]{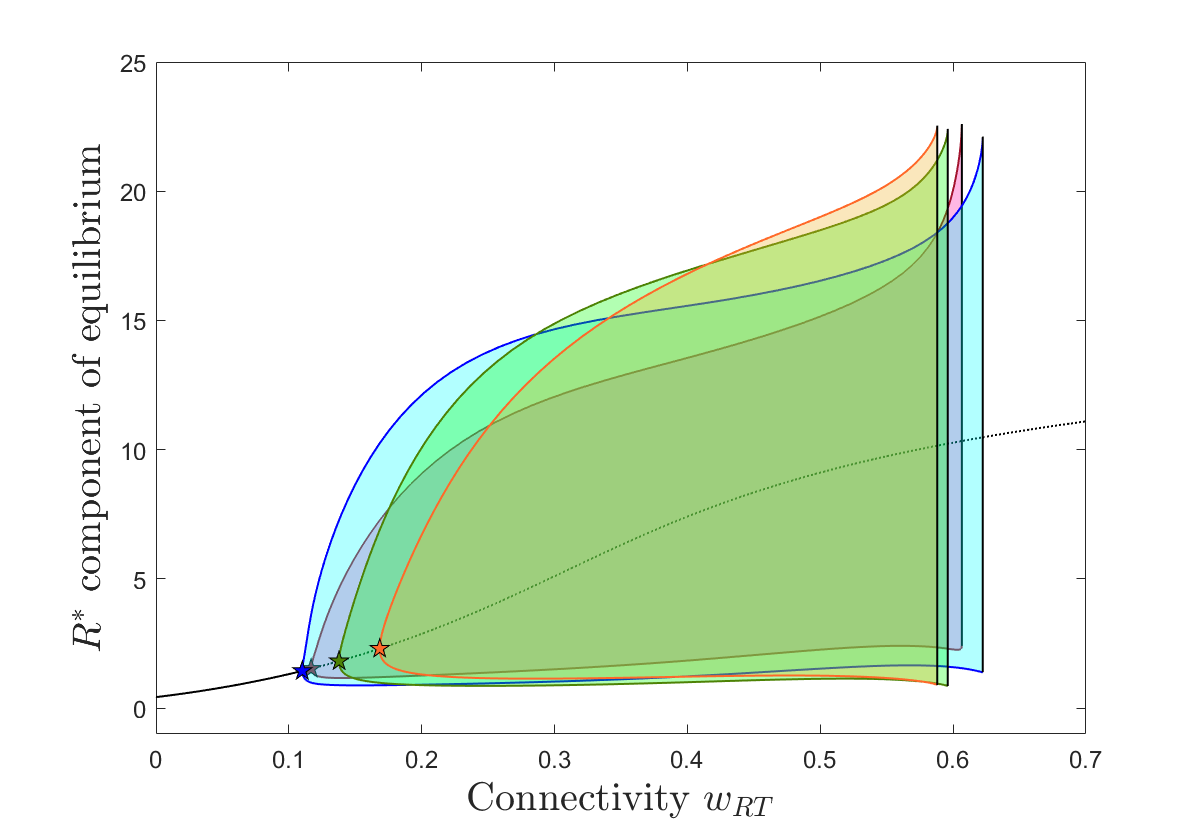}

\includegraphics[width=0.24\linewidth]{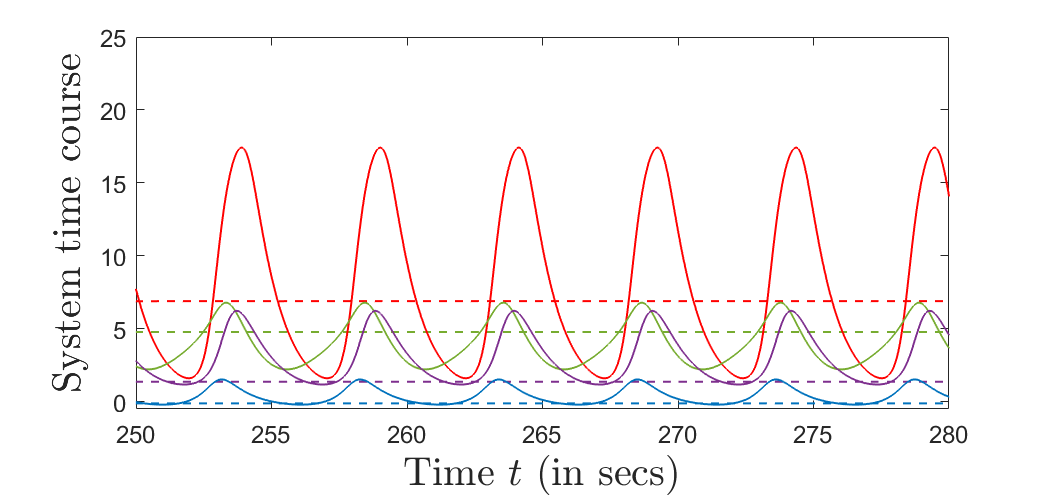}
\includegraphics[width=0.24\linewidth]{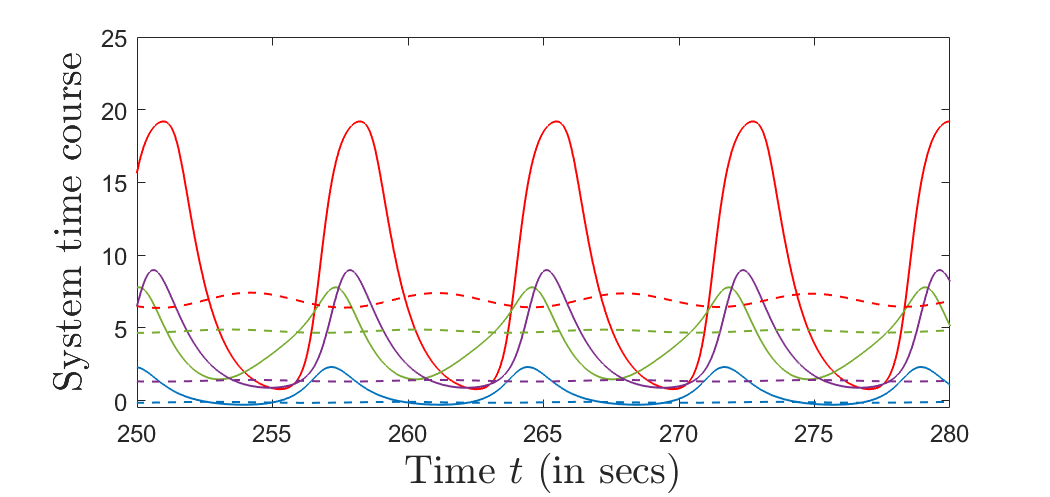}
\includegraphics[width=0.24\linewidth]{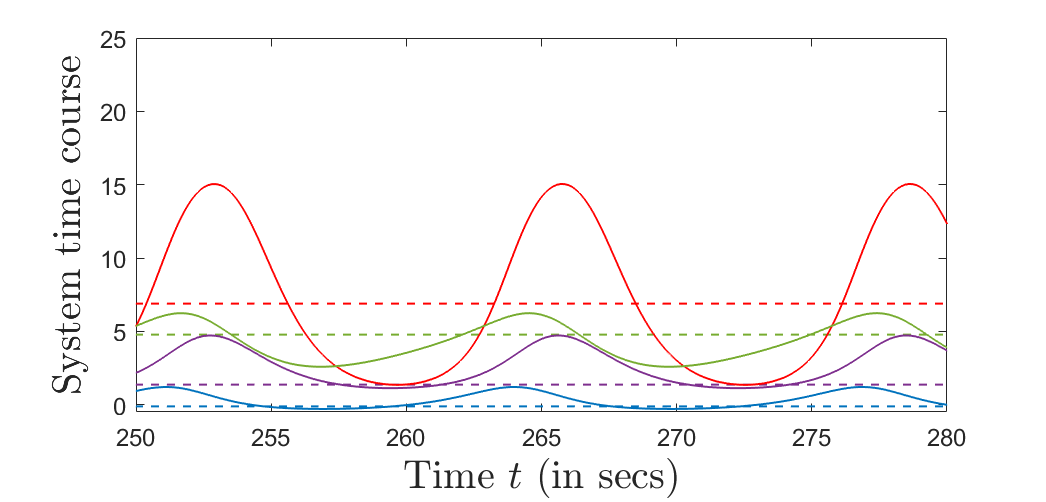}
\includegraphics[width=0.24\linewidth]{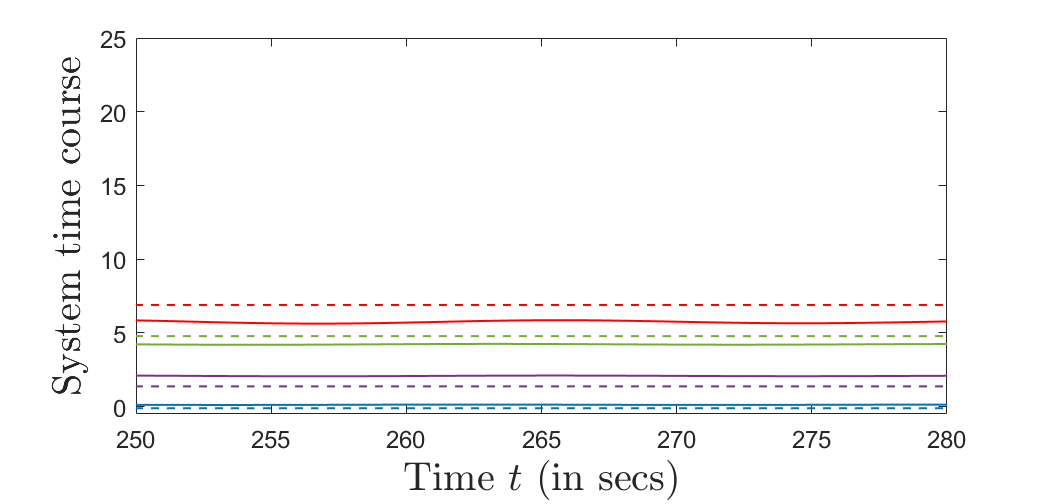}

\vspace{1cm}
\includegraphics[width=0.24\linewidth]{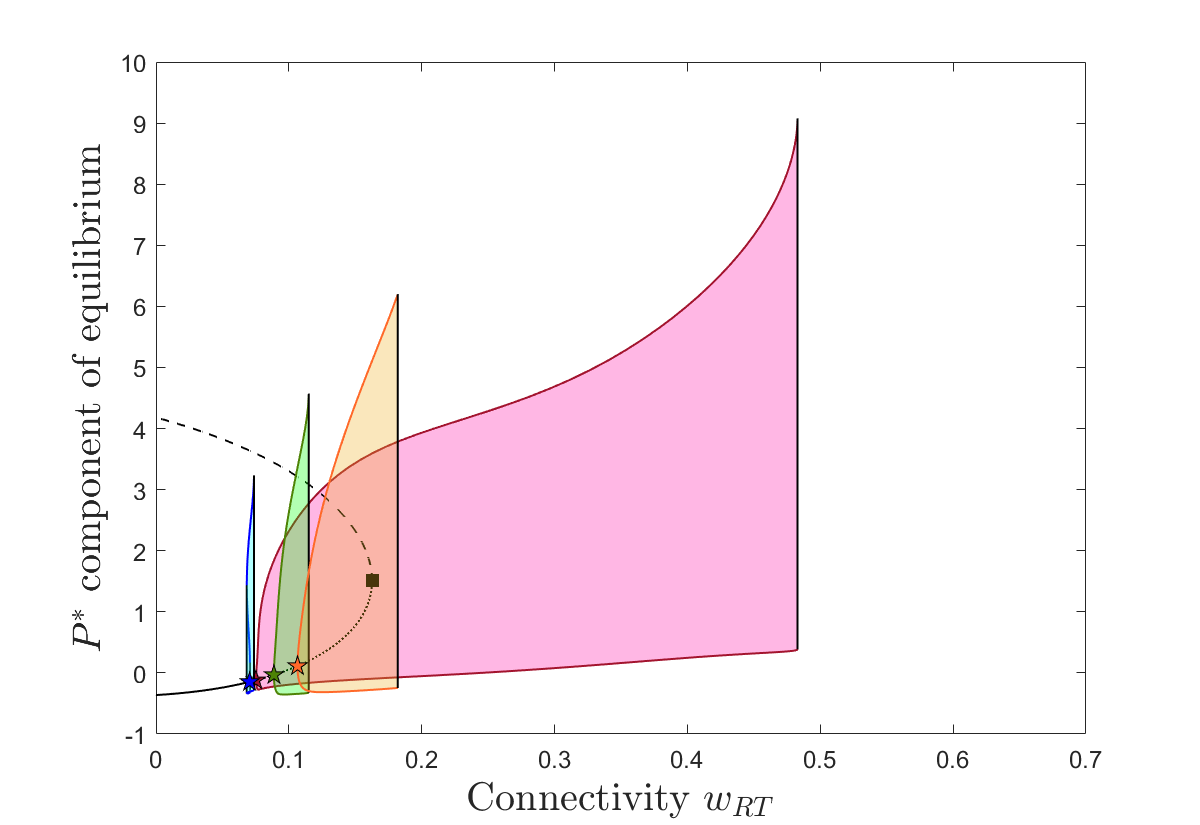}
\includegraphics[width=0.24\linewidth]{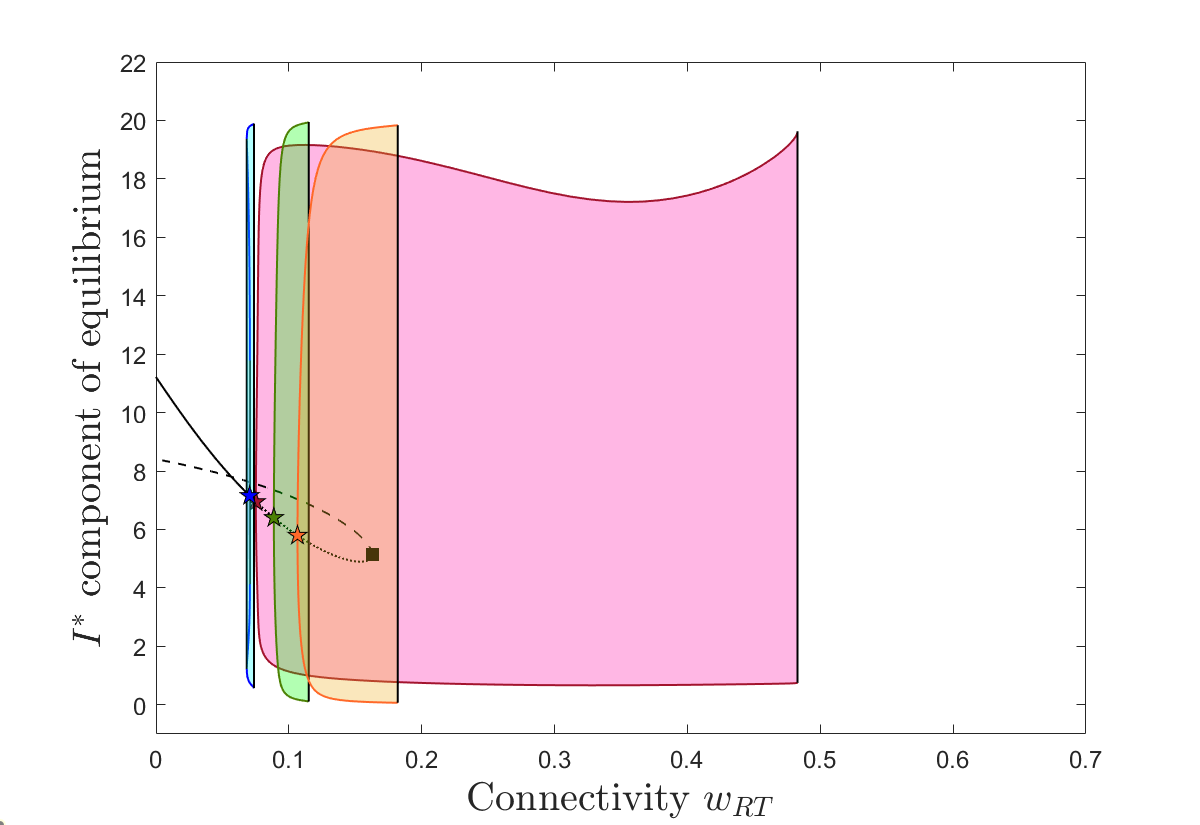}
\includegraphics[width=0.24\linewidth]{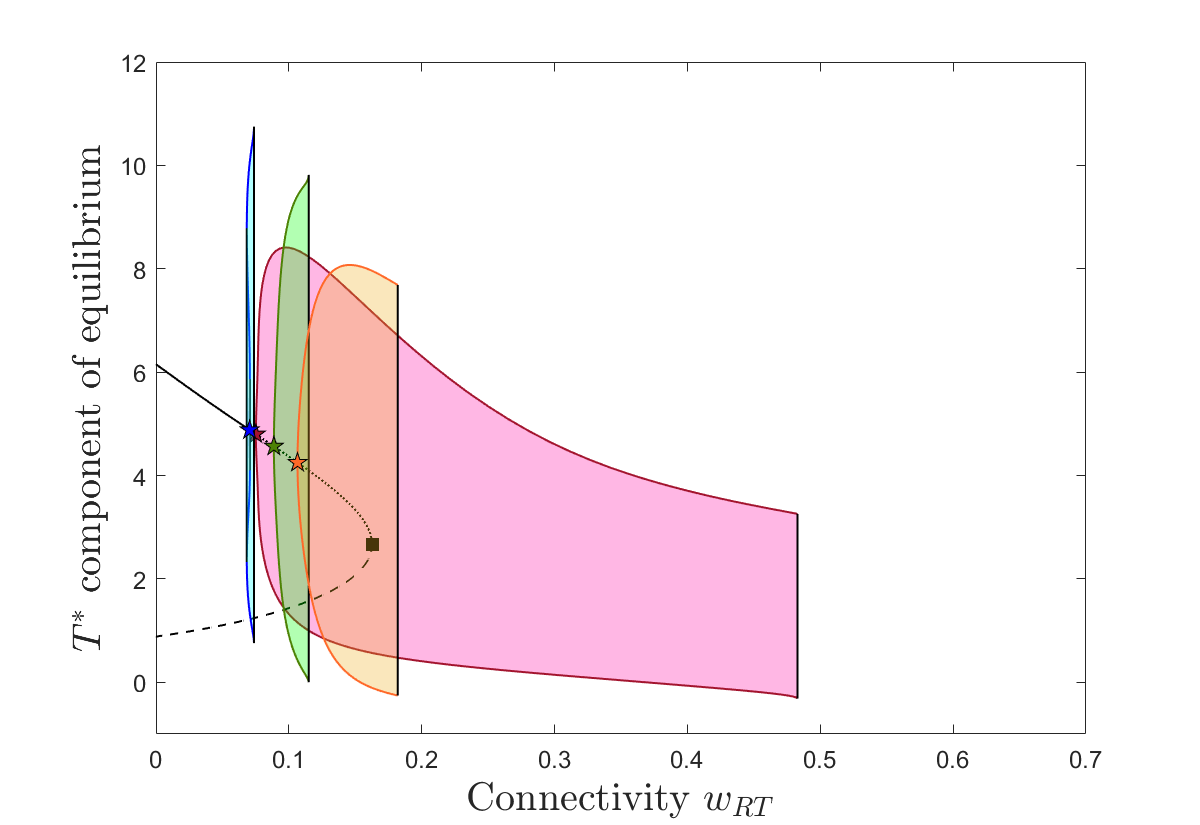}
\includegraphics[width=0.24\linewidth]{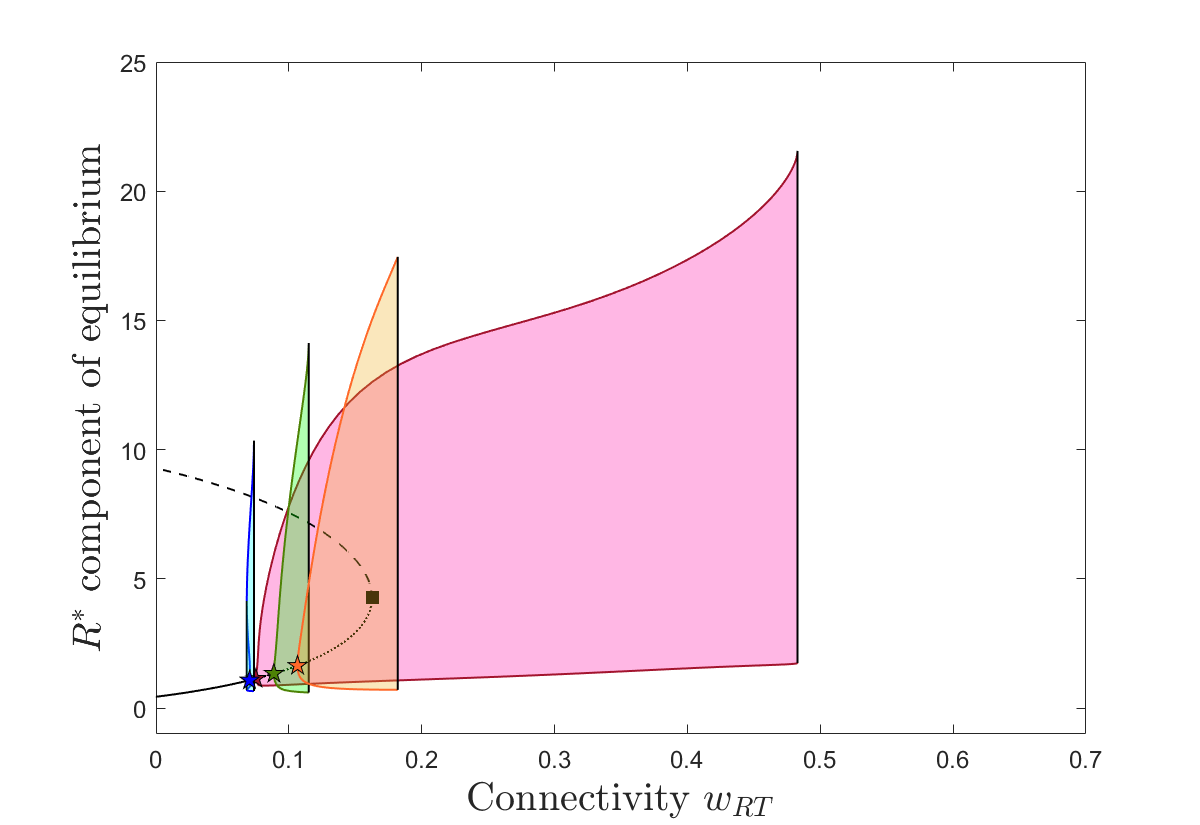}

\includegraphics[width=0.24\linewidth]{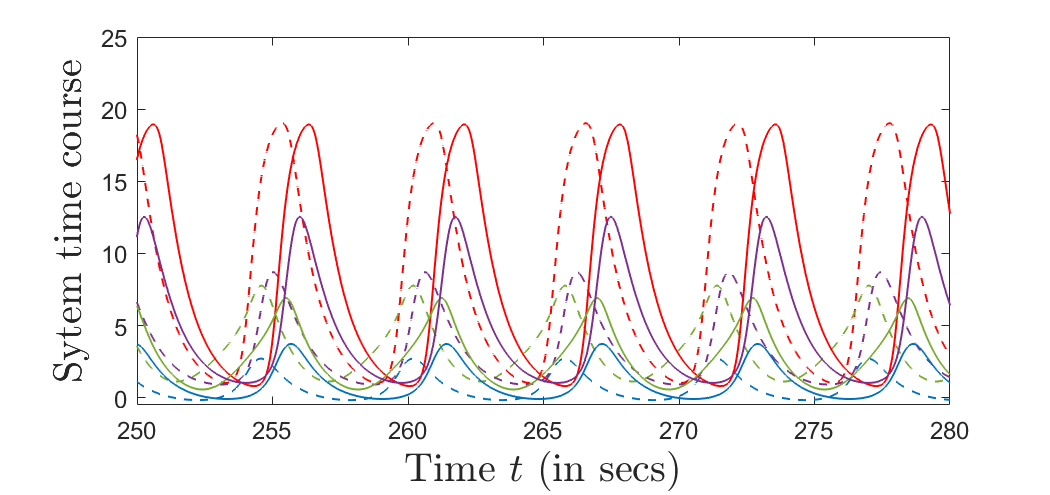}
\includegraphics[width=0.24\linewidth]{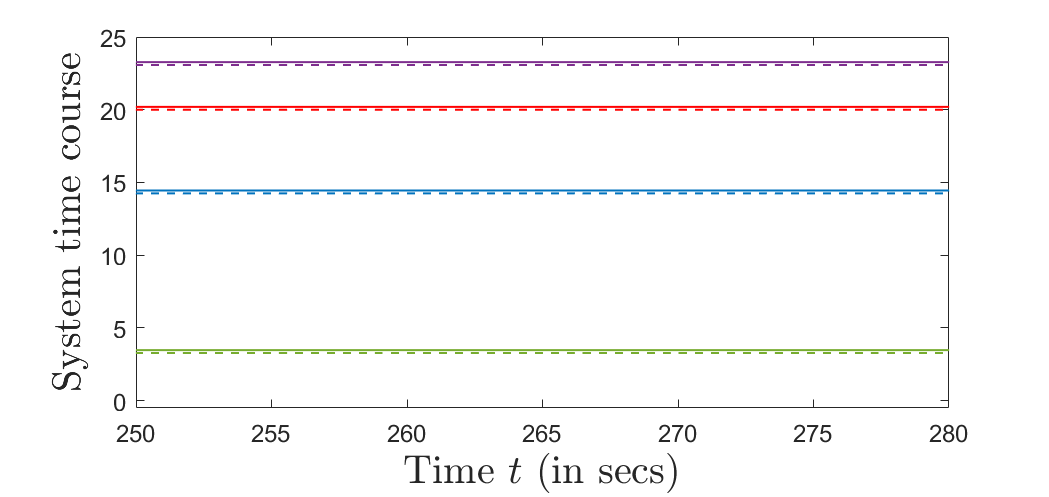}
\includegraphics[width=0.24\linewidth]{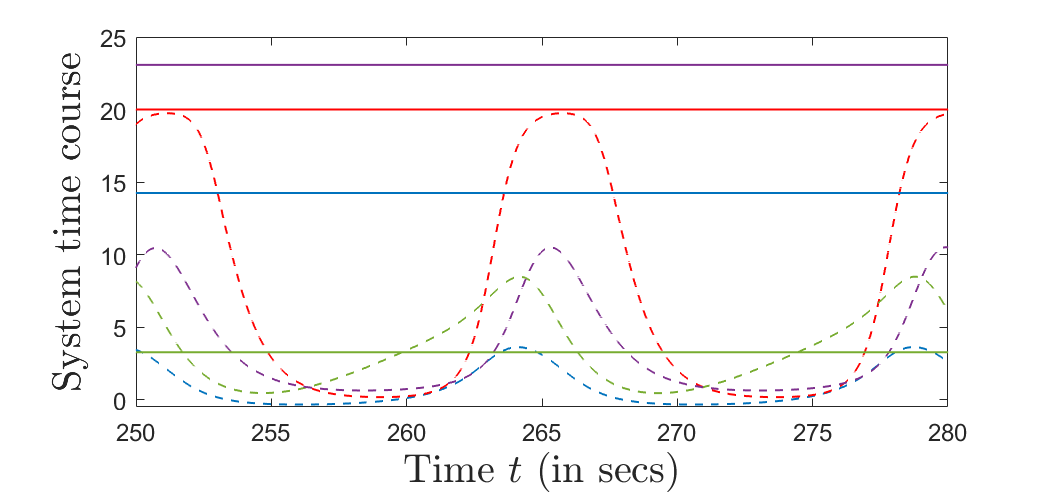}
\includegraphics[width=0.24\linewidth]{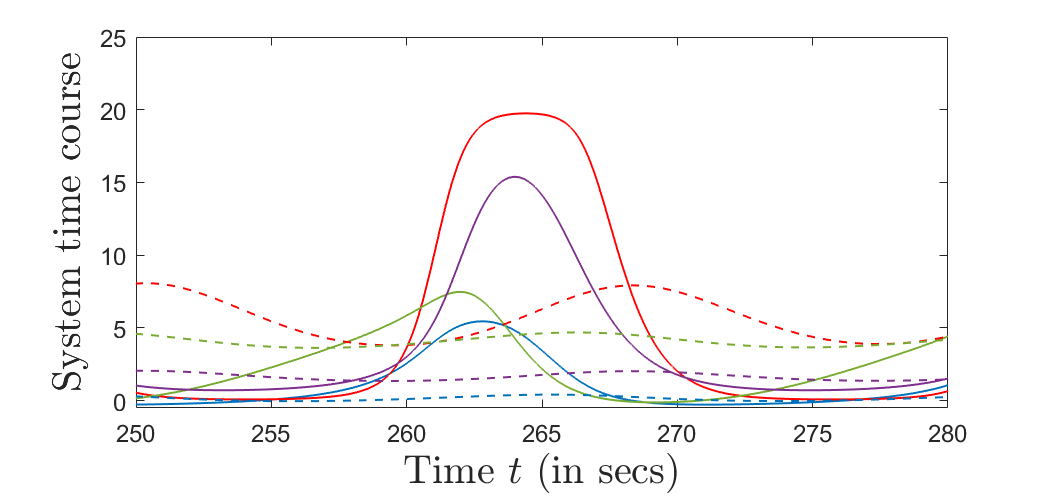}

\end{center}

\caption{\small \emph{{\bf Dependence on $w_{RT}$ for different $w_{TR}$ values and different delays $\rho$.}} {\bf Top:} $w_{TR}=0.6$. {\bf Bottom:} $w_{TR}=0.8$. Different diagrams stand for different average delays: $\rho=0.5$ (pink); $\rho=1$ (blue); $\rho=3$ (green); $\rho=5$ (orange). The temporal panels underneath illustrate a sample solution for $w_{RT}=0.15$ (solid curves) and $w_{RT}=0.1$ (dashed curves), for four different sample delays in each case (left to right): $\rho=0.5$. $\rho=1$, $\rho=3$, $\rho=5$. The system components are coded by color as follows: $P$ (blue), $I$ (red); $T$ (green) and $R$ (purple). For these simulations, $w_{PP}=1.1$, $w_{RR}=0.2$ and all other parameters were fixed to their baseline tabulated values.}
\label{wRT_T_R}
\end{figure}
\end{landscape}

Figure~\ref{wRT_T_R} flips the view, considering bifurcations with respect to $w_{RT}$, for two levels of TRN-to-relay inhibition: a lower value $w_{TR}=0.6$, and a higher value $w_{TR}=0.8$. For $w_{TR}=0.6$, the dependence of the oscillatory window on the average delay is monotone. As $\rho$ increases, the Hopf onset shifts overall toward larger values of $w_{RT}$, and the terminal LPC moves slightly to the left. Thus, for this lower $w_{TR}$, increasing the delay produces an overall contraction of the oscillatory window, primarily by progressively removing its low-$w_{RT}$ portion. This progression changes dramatically at the higher $w_{TR}=0.8$. For $\rho=0.5$, the oscillatory window is broad, extending approximately from $w_{RT}=0.07$ to $w_{RT}=0.49$. When the delay is increased to $\rho=1$, this region collapses to a very narrow interval near $w_{RT}=0.07$--$0.08$. For larger delays, oscillations reappear within windows that shift progressively to the right. The dependence on $\rho$ is therefore strongly nonmonotone in this regime: increasing the delay first almost eliminates stable oscillations and then restores them within a different range of relay-to-TRN coupling strengths. As before, the temporal panels underneath illustrate the consequences of these changes on the duty cycle of the oscillations. In particular, increasing the delay can suppress and subsequently restore rhythmic activity by moving the oscillatory window across the connectivity landscape, while simultaneously stretching the timescale of the surviving cycles.

Having established that $\rho$ contributes to reshaping the oscillatory boundary and rhythms across the $(w_{TR},w_{RT})$ connectivity landscape in nontrivial ways, we next want to focus in more depth on analyzing the implications of increasing the delay for a fixed connectivity profile. This will help us better understand to what extent the dynamics of a system can be reshaped by simply increasing the delay in responses, without changing the architecture or connectivity strengths. This analysis aims to look beyond transitions in and out of cycling behavior, but also consider the biological plausibility of the oscillation, and its physiological meaning -- through interpreting its baseline, amplitude, frequency and duty cycle.

\begin{figure}[h!]
\begin{center}
\includegraphics[width=\textwidth]{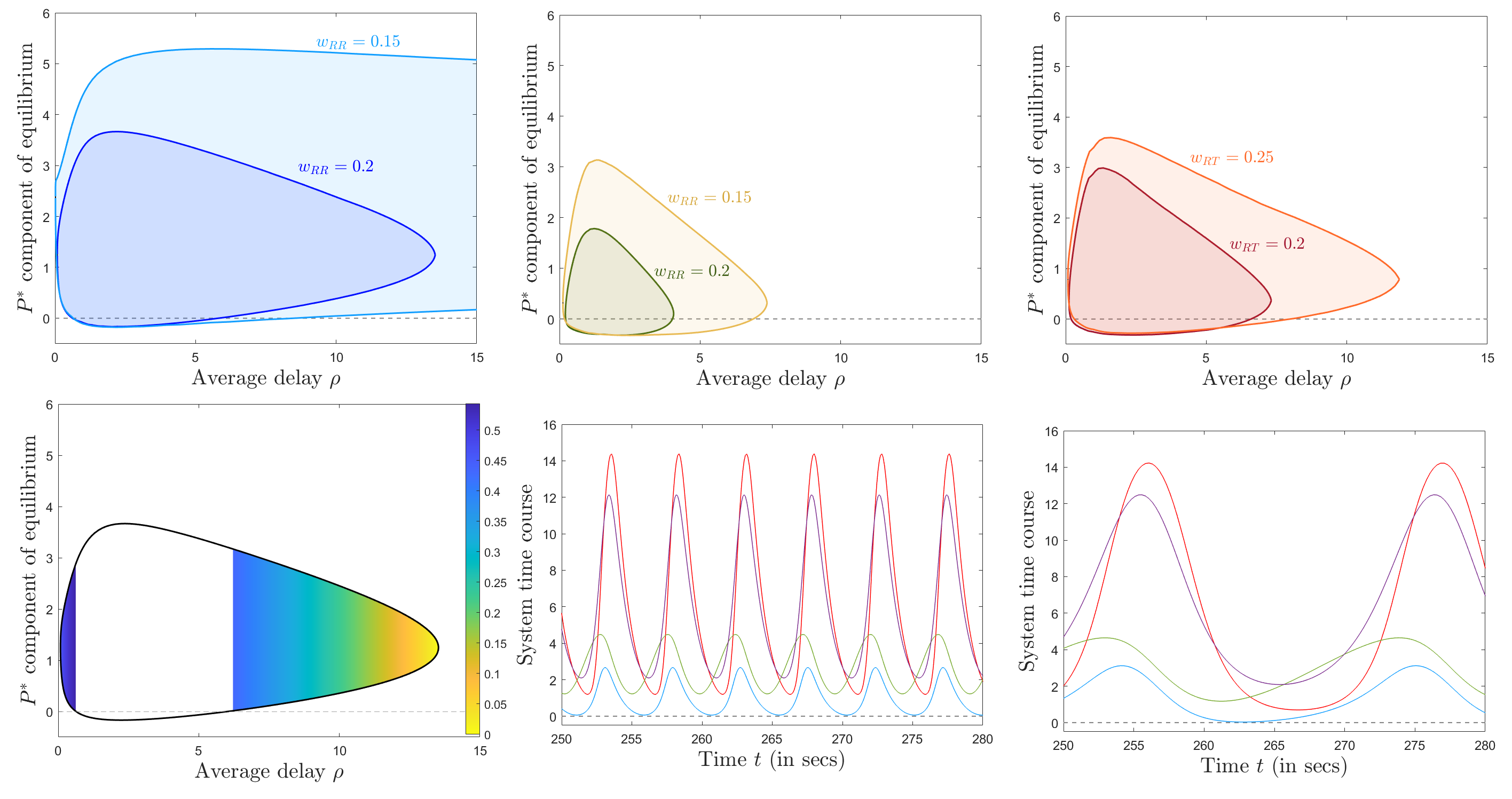}
\end{center}
\caption{\small \emph{{\bf Evolution of oscillations with increasing the average weak Gamma delay $\rho$.} On the top row, the evolution of the $P$ component of the stable cycle is illustrated as $\rho$ increases,  for different connectivity landscapes. {\bf A.} $w_{TR}=0.5$,
$w_{RT}=0.4$ and two different values of $w_{RR}$: $w_{RR}=0.15$ (cyan) and $w_{RR}=0.2$ (deep blue); {\bf B.}  $w_{TR}=0.5$,
$w_{RT}=0.2$ and two different values of $w_{RR}$: $w_{RR}=0.15$ (yellow) and $w_{RR}=0.2$ (green);{\bf C.}  $w_{TR}=0.6$,
$w_{RR}=0.2$ and two different values of $w_{RT}$: $w_{RT}=0.25$ (orange) and $w_{RT}=0.2$ (brown). All other parameters were set to their baseline values in Table~\ref{tab:noncoupling_refs} and~\ref{tab:coupling_refs}. A dotted gray line marks the boundary $P=0$ of the biological range of oscillations. The bottom panels detail the behavior of the blue cycles in the first panel (for $w{TR}=0.5$, $w_{RT}=0.4$, $w_{RR}=0.2$. The color of the shading encodes the rhythm and physiological meaning of the oscillations from deep blue (spindle range) to deep yellow (infra-slow range), with the transient showing the intermediate rhythms between them. The cycle is shown in white when it is outside of the biological range ($P$ enters the negative domain, force-stopping oscillations). The two subsequent panels show actual solutions of the system for one delay value sampled in the spindle range ($\rho=0.5$) and one sampled in the infra-slow range ($\rho=6.5$). All components of the system are shown, color coded as follows: $P$ (blue); $I$ (red); $T$ (green); $R$ (purple).}}
\label{fig:delay_rho}
\end{figure}

Figure~\ref{fig:delay_rho} provides a complementary view of the effect of the average weak Gamma delay $\rho$, by following the geometry and biological admissibility of the stable periodic orbit itself as $\rho$ is varied. The top row compares this dependence across several closely related thalamo-reticular connectivity landscapes, allowing the effects of $w_{RR}$, $w_{RT}$, and $w_{TR}$ to be separated. Although the detailed amplitude and extent of the oscillatory branch change substantially with connectivity, the overall picture remains one in which delay and coupling strength act jointly to determine not only whether stable oscillations exist, but also whether those oscillations remain within the biologically admissible region.

The first two panels isolate the effect of reticular self-inhibition $w_{RR}$ at two different fixed values of the relay-to-reticular coupling $w_{RT}$. In the top left panel, for fixed $w_{TR}=0.5$ and $w_{RT}=0.4$, decreasing $w_{RR}$ from $0.2$ to $0.15$ produces a pronounced increase in the amplitude of the $P$ oscillation and substantially extends the stable periodic branch toward larger values of $\rho$. Notice that, for $w_{RR}=0.15$, the cycle persists throughout essentially the entire delay interval displayed. The same qualitative effect is visible in the top middle panel, where $w_{TR}=0.5$ and $w_{RT}=0.2$: reducing $w_{RR}$ from $0.2$ to $0.15$ again increases the cycle amplitude and extends the range of delays over which oscillations persist. However, both branches in this panel terminate at considerably smaller values of $\rho$ than their counterparts in the left panel. Thus, weakening reticular self-inhibition generally favors larger and more persistent oscillations, but the magnitude of this effect depends strongly on the background relay-to-reticular excitation. In particular, $w_{RR}$ does not act independently, but modulates the dynamics of a loop whose effectiveness is already conditioned by $w_{RT}$.

The top right panel directly illustrates this latter dependence. Here $w_{TR}=0.6$ and $w_{RR}=0.2$ are fixed, while $w_{RT}$ is increased from $0.2$ to $0.25$. The stronger relay-to-reticular coupling produces both a larger oscillation amplitude and a substantial extension of the periodic branch toward larger delays. The complementary effect of $w_{TR}$ can be seen by comparing the $w_{RT}=0.2$, $w_{RR}=0.2$ curves in the top middle and right panels. Increasing $w_{TR}$ from $0.5$ to $0.6$ enlarges the oscillations and extends their persistence in $\rho$. Taken together, these comparisons reinforce the interpretation suggested by the previous bifurcation diagrams: $w_{TR}$ and $w_{RT}$ operate as a coupled thalamo-reticular control pair. Strengthening either side of this reciprocal loop can substantially enlarge the delay range over which stable rhythmic activity is supported, whereas $w_{RR}$ regulates this activity through internal suppression of the reticular population.

The bottom panels examine in greater detail one representative case (from the top left panel), namely
$w_{TR}=0.5$, $w_{RT}=0.4$, and $w_{RR}=0.2$. They show that mathematical persistence of the stable cycle does not imply biological admissibility throughout the entire branch. At short delays, the oscillations fall within the spindle-compatible regime; at substantially larger delays, the same periodic branch reaches an infra-slow regime. Between these two ranges, however, the minimum of the $P$ component becomes negative. We therefore treat this intermediate portion as biologically inadmissible even though the periodic orbit remains mathematically stable. In this sense, the intervening ``no-go'' region is not a region without oscillations, but a region without biologically viable oscillations.

An important feature of this inadmissible interval is that the violation of positivity is relatively small: the lower envelope of the cycle remains close to the boundary $P=0$ rather than cutting more deeply into the negative domain. Consequently, the precise entry into and exit from the no-go region are sensitive to the connectivity landscape. Relatively modest changes in $w_{RR}$, $w_{RT}$, or $w_{TR}$ can shift the points at which the periodic orbit crosses the positivity boundary, thereby moving, narrowing, or enlarging the biologically inaccessible interval. The comparisons in the top row of Figure~\ref{fig:delay_rho} illustrate precisely this sensitivity. Thus, the existence of an intermediate inadmissible range appears to be a robust feature of the weak Gamma dynamics considered here, but its exact boundaries are not universal parameter thresholds.

From a dynamical perspective, this organization suggests a potential mechanism by which distributed delays can act as a selector of collective timescale rather than simply slowing a single rhythm continuously. For sufficiently short $\rho$, delayed feedback remains rapid enough to coordinate the recurrent thalamo-reticular interactions associated with the fast oscillatory regime. As $\rho$ increases, this coordination is disrupted before the distributed feedback has acquired the temporal scale required to support the much slower organization of the infra-slow regime. The stable periodic branch therefore passes through a biologically inadmissible interval rather than providing a physiologically meaningful continuum between the two rhythms. At still larger delays, a second admissible operating regime emerges. The two time courses shown in Figure~\ref{fig:delay_rho} for $\rho=0.5$ and $\rho=6.5$ illustrate the markedly different temporal organizations reached on the two sides of this interval.

This separation is consistent with the distinct physiological roles associated with these rhythms during NREM sleep. Spindles are rapidly coordinated thalamocortical events, whereas approximately $0.02$ Hz infra-slow fluctuations organize spindle occurrence and sleep fragility over much longer timescales \cite{lazar2019infraslow,lecci2017coordinated,watson2018cognitive}. At the Wilson--Cowan population scale, the larger values of $\rho$ required for the second regime should therefore not be interpreted as literal axonal or synaptic transmission delays. Rather, in the weak Gamma formulation, $\rho$ represents the characteristic timescale over which past population activity contributes to the recurrent feedback, and can therefore be interpreted as an effective integration or memory timescale of the population-level dynamics. Values of $\rho$ on the order of several to tens of seconds are consequently not intrinsically incompatible with the infra-slow regime. An approximately $0.02$Hz oscillation has a period on the order of $50$ seconds, and the physiological processes implicated in organizing infra-slow NREM dynamics likewise operate over substantially longer timescales than individual synaptic or axonal events. Thus, values such as $\rho=6$, $10$, or even larger should be understood as representing a distributed population history extending over a substantial fraction of an infra-slow cycle, rather than an implausibly long point-to-point neural transmission delay. In this interpretation, the large effective delays required by the model are consistent with the slow temporal organization of the processes associated with infra-slow activity.

A sufficiently extended population history can therefore provide the phase organization required for slow collective activity, whereas intermediate temporal integration may be poorly matched to either function: too slow to maintain the faster coordinated regime, yet insufficient to establish the broader infra-slow organization. In this way, the same underlying corticothalamic architecture can support two very different functional timescales, with the distributed delay selecting between them and the connectivity parameters controlling the accessibility and boundaries of the corresponding regimes.

\subsection{Attractors and transitions for the system with discrete delays}


Having established that, under weak Gamma distributed delays, both the cycling regimes and the properties of the oscillations depend significantly and nontrivially on the interplay between $\rho$ and the connectivity landscape, we next turn to discrete delays. Discrete delays have been extensively studied and are often the default choice in delayed neural-population models, since a single fixed lag provides a particularly direct and computationally tractable representation of non-instantaneous feedback~\cite{KaslikEtAl2022,KaslikEtAl2024}. At the same time, the actual temporal-integration profile implemented biologically remains unknown and may not be universal: different kernels may operate across individuals or brain networks, or within the same network under different states, tasks, or contexts. This possibility is further motivated by our previous results showing that the kernel itself can qualitatively reshape the accessible dynamics and the transitions between them. In particular, discrete delays produced broader oscillatory domains and richer bifurcation sequences, including transitions among periodic, quasi-periodic, and chaotic behavior~\cite{KaslikEtAl2022,KaslikEtAl2024}. The following section examines which features of the weak Gamma results persist under discrete delays and what additional dynamical complexity emerges. The numerical bifurcation analysis was carried out using MatCont 7.6~\cite{Dhooge2008MatCont}. For the discrete-delay system, the delayed state was approximated by a 10-stage linear chain, corresponding to an Erlang approximation of the discrete-delay kernel with mean $\rho$, thereby yielding a finite-dimensional ODE system suitable for numerical continuation.

\begin{figure}[h!]
\begin{center}
\includegraphics[width=0.5\textwidth]{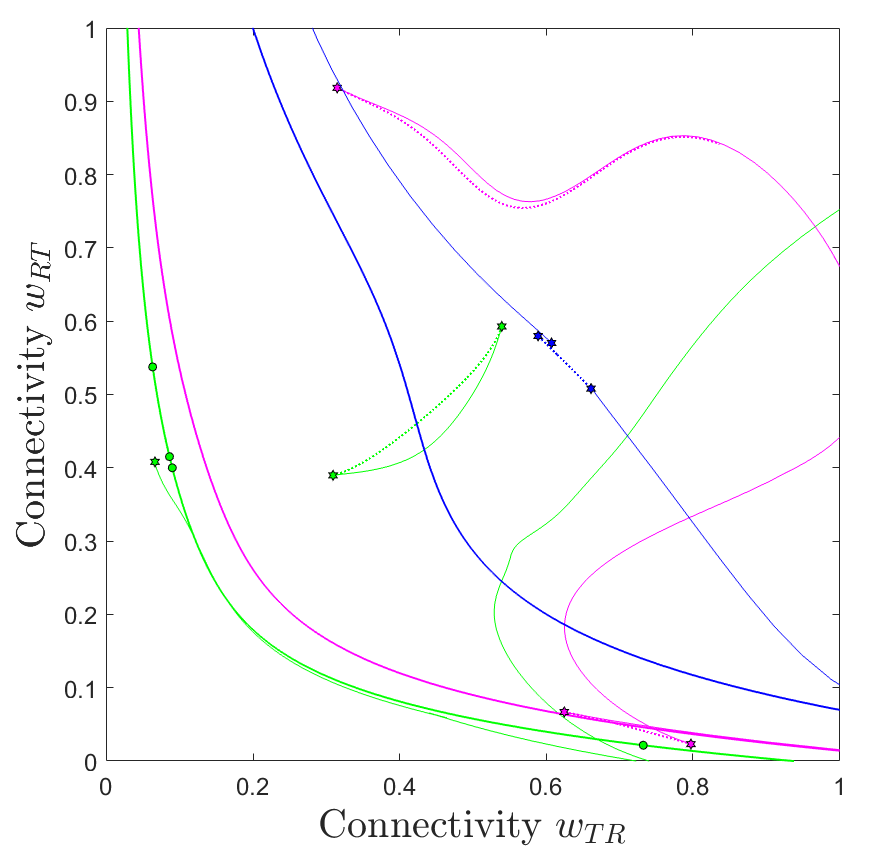}
\caption{\small \emph{{\bf Regions of stable oscillations in the $(w_{TR},w_{RT})$ parameter plane for three discrete delay values:} $\rho=0.1$ (blue), $\rho=0.5$ (pink), and $\rho=1$ (green). In each case, the Hopf curve associated with the onset of oscillations is shown as a thick line in the corresponding color, while the LPC or PD curves marking their termination are shown as thin lines in the same color. For $\rho=1$ (green), stable cycles are not generated directly at the Hopf curve, but instead emerge through an LPC mechanism, as clarified in Figure~\ref{bif_wTR_for_multiple_rho}. A second relevant LPC branch is therefore also shown in thin green, since it contributes to delimiting the region of stable oscillations in this case.}}
\label{codim2_discrete}
\end{center}
\end{figure}

To obtain a broad picture of how the system's dynamics change with the discrete delay $\rho$, Figure~\ref{codim2_discrete} shows the corresponding bifurcation structure in the $(w_{TR},w_{RT})$ plane. This can be readily compared to the corresponding illustration for weak Gamma delays in Figure~\ref{wTR_wRT}. This principal equilibrium branch can undergo multiple Hopf bifurcations. The larger number of Hopf points is not surprising for a system with discrete delays, nor is the substantially richer bifurcation structure that emerges compared with the weak Gamma case. For sufficiently small delays, however, the basic organization identified for the weak Gamma kernel remains recognizable. For $\rho=0.1$ and $\rho=0.5$, stable cycles are generated directly at the first Hopf point and persist until they lose stability through an LPC or period-doubling bifurcation. For $\rho=1$, access to this ``principal'' stable oscillatory regime is already mediated by a more complicated LPC structure, as will be clarified by the one-parameter continuations below. Thus, at short discrete delays, the onset and termination of this principal stable cycle can still be described approximately by a Hopf onset and an LPC/PD offset, much as in the weak Gamma system, although this organization occurs over considerably smaller values of $\rho$. As the delay increases, this relatively simple mechanism breaks down and the organization of the stable oscillatory regimes becomes progressively more complex. These bifurcation boundaries describe only the existence and stability of this principal cycle family and therefore provide only the first layer of the analysis. As in the weak Gamma case, we must additionally determine over which parameter ranges these cycles are biologically admissible; how their amplitude, frequency, and duty cycle vary with the parameters and what these changes imply physiologically; and whether the discrete-delay system supports additional stable cycles or other attractors.

Figure~\ref{bif_wTR_for_multiple_rho} clarifies in greater detail the transitions into and out of oscillatory behavior summarized in Figure~\ref{codim2_discrete}, by fixing the same representative values of $\rho$ and constructing one-parameter bifurcation diagrams with respect to $w_{TR}$ (for fixed $w_{RT}=0.4$). The complexity of the bifurcation structure increases rapidly as the delay is increased. In particular, progressively more Hopf bifurcations appear along the principal equilibrium branch. For each Hopf point detected by the MatCont~7p6 continuation, we followed the corresponding periodic-orbit branch through successive LPC and period-doubling (PD) bifurcations, continuing it until we were reasonably confident that no additional interval of full stability would be recovered. Stable cycles are shown in solid cyan shading with thick contour curves, irrespective of their biological relevance. Their positivity and the possible physiological interpretation of the attainable rhythms will be considered separately below. Unstable portions of the cycle extensions are included for dynamical context and are represented using thin contours and transparent shading in separate colors, to distinguish them more clearly.

For $\rho=0.1$, a single supercritical Hopf bifurcation occurs within the biological range of $w_{TR}$, giving rise to a stable cycle that persists until an LPC, where the branch collides with an unstable cycle and disappears. The cycle remains positive in all four components throughout its stable existence, up to approximately $w_{TR}\sim0.78$.

For $\rho=0.5$, the principal equilibrium branch undergoes two supercritical Hopf bifurcations. The first generates a stable cycle, which persists to approximately $w_{TR}\sim0.9$ before terminating at an LPC. By the time the second Hopf point is reached, however, the equilibrium already possesses unstable directions inherited from the previous stability loss. Consequently, although this Hopf is also supercritical, the cycle generated there is unstable, since it inherits transverse instability from the equilibrium.

\begin{figure}[h!]
\begin{center}
\includegraphics[width=0.9\textwidth]{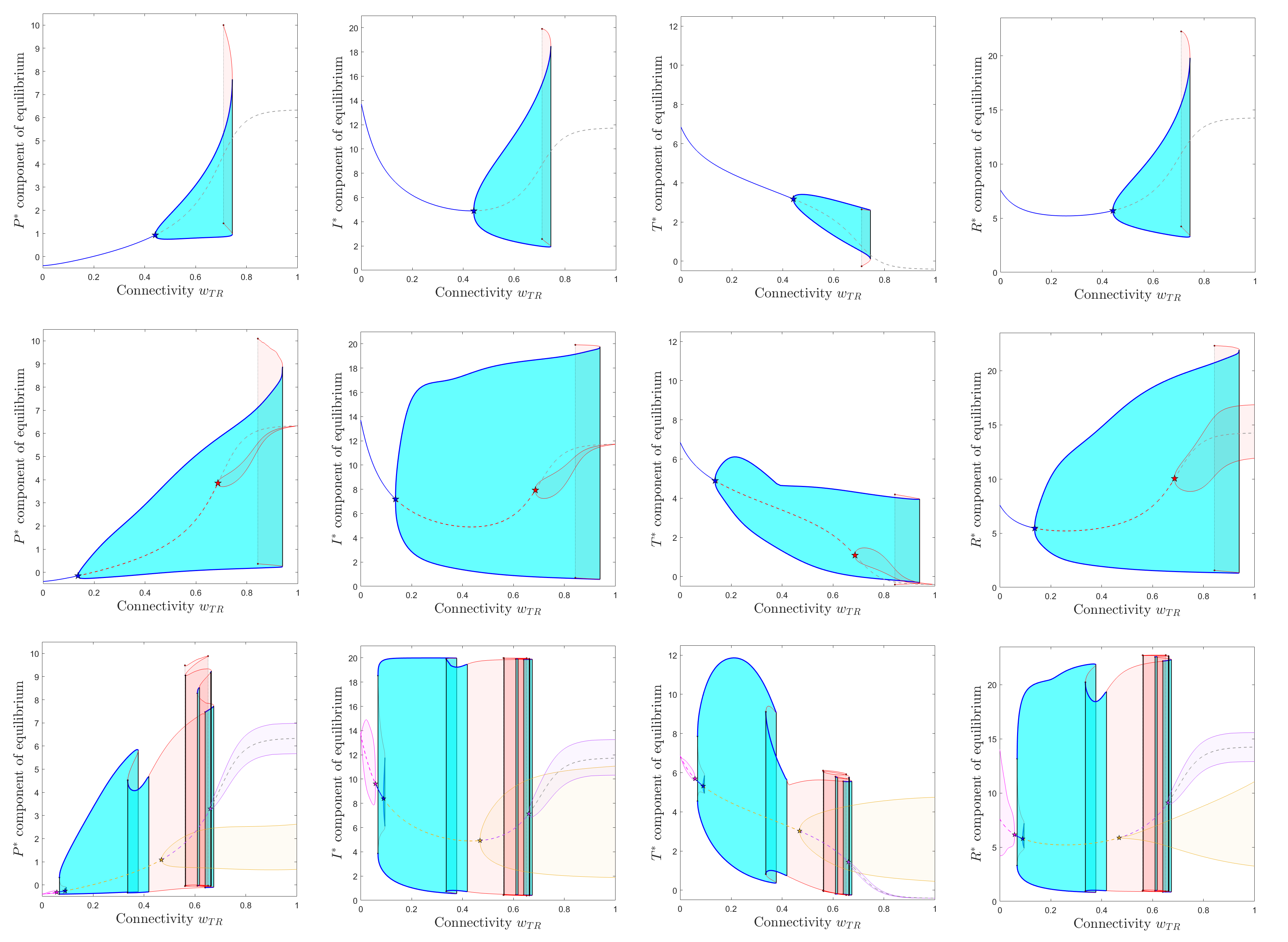}
\caption{\small \emph{{\bf Effect of delay on dependence of $w_{TR}$,} for fixed $w_{RT}=0.4$. The bifurcation diagram with respect to the connectivity $w_{TR}$ is sketched for three discrete delay values, as follows: $\rho=0.1$ (top row); $\rho=0.5$ (middle row); $\rho=1$ (bottom row). Each panel shows the evolution of an equilibrium branch, which undergoes supercritical Hopf bifurcations (marked with stars), and the limit cycles that are generated at these Hopf points. As before, stable cycles are shown in solid cyan, and unstable cycles (identified through the values of the Floquet multipliers and of the equilibrium eigenvalues, when applicable) are shown in various transparent shades, to be able to distinguish them from one another. Some of the cycle extensions identified multiple stable intervals, hence the cycle branches were followed until they appeared to have resolved in terms of producing any additional stable windows. The LPC and LCPD (flip) bifurcations are marked as vertical solid lines if they delimit regions of stable from unstable cycles. Other additional LPC bifurcations are shown as dotted vertical lines. Not all cycle bifurcations are shown, to avoid cluttering of the panels. All other parameters were fixed to their table baseline values.}}
\label{bif_wTR_for_multiple_rho}
\end{center}
\end{figure}

For $\rho=1$, the bifurcation structure becomes substantially more intricate. The stable cycle emerges through a bifurcation sequence initiated at the second subcritical Hopf bifurcation; this gives rise to an unstable limit cycle that subsequently reaches an LPC bifurcation, where the periodic-orbit branch folds and becomes stable. As this cycle is continued in $w_{TR}$, it repeatedly loses and regains stability through secondary cycle bifurcations, producing several distinct windows of stable oscillations, shown in cyan in Figure~\ref{bif_wTR_for_multiple_rho}. The intervening unstable portions are shown in various transparent shades (to be able to distinguish them from one another) and were continued until we were reasonably confident that no further recovery of stability would occur.


Stability alone, however, does not guarantee biological admissibility. A closer examination shows that all of the stable cycles obtained for $\rho=1$ at this parameter set enter the negative domain in the $P$ component and therefore fail the positivity requirement of the model. In our interpretation, this loss of positivity provides the effective biological termination of the oscillatory regime, even when the mathematical cycle itself remains stable. Thus, among the discrete-delay values examined here, biologically admissible stable oscillations are restricted to the smaller delays. For $\rho=0.1$, the stable cycle remains positive throughout essentially its entire oscillatory window, approximately $0.44 \lesssim w_{TR} \lesssim 0.74$. For $\rho=0.5$, the positive stable window is approximately the same (excluding small intervals near onset and termination, where $P$ and $T$ take negative values, respectively). 

\begin{figure}[h!]
\begin{center}
\includegraphics[width=\textwidth]{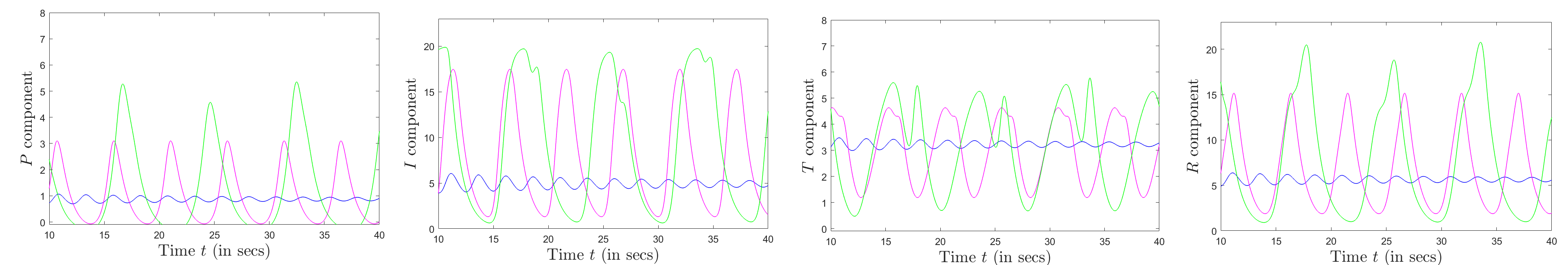}
\includegraphics[width=\textwidth]{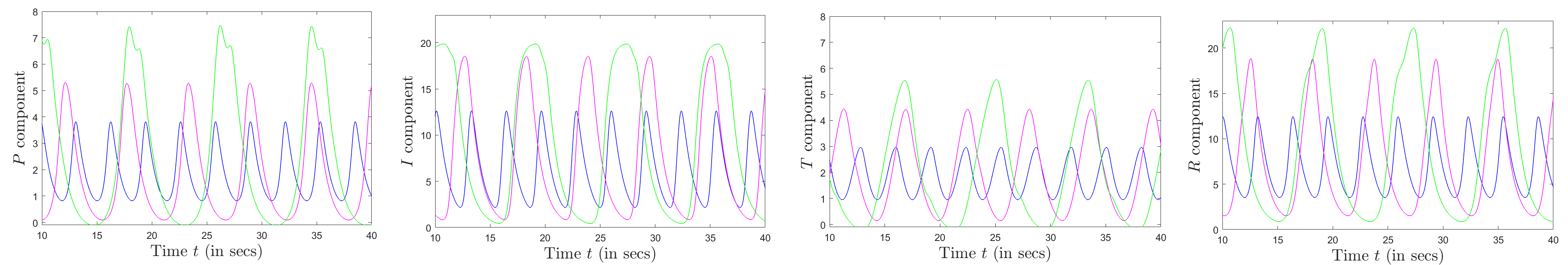}
\caption{\small \emph{{\bf Time evolutions of the system components for two thalamo-reticular connectivity levels:} $w_{TR}=0.44$ (top panels) and $w_{TR}=0.65$ (bottom panels). Each panel illustrates one component, for three delay values: $\rho=0.1$ (blue curve), $\rho=0.5$ (pink curve) and $\rho=1$ (green curve). The relay-TRN connectivity was fixed to $w_{RT}=0.4$. All other parameters were fixed to their table baseline values.}}
\label{solutions}
\end{center}
\end{figure}

To better contextualize, the four components of an example temporal solution are shown in Figure~\ref{solutions}, for $w_{TR}=0.44$ and for $w_{TR}=0.65$ (at which all three values of $\rho$ exhibit stable oscillations). These further illustrate how both $\rho$ and $w_{TR}$ shape the resulting rhythms. For $\rho=0.1$, increasing $w_{TR}$ from $0.44$ to $0.65$ strengthens the oscillations and brings their amplitudes closer to the spindle-compatible ranges, particularly for the more strongly recruited $I$ and $R$ populations. At $\rho=0.5$, the bursts become broader and larger; increasing $w_{TR}$ strengthens them further, but can also push some population firing levels beyond the ranges that provide the closest spindle match. By $\rho=1$, the oscillations are slower and larger still, and the $P$ and $T$ components become slightly negative. Importantly, this dependence differs qualitatively from that obtained with the weak Gamma kernel, where increasing $\rho$ separated a short-delay spindle regime from a long-delay infra-slow regime through an intermediate ``no-go'' region of non-admissible oscillations. The discrete-delay model does not show the same delay-driven separation of timescales. We return to the implications of this difference between delay kernels in the Discussion.

\begin{figure}[h!]
\begin{center}
\includegraphics[width=0.9\textwidth]{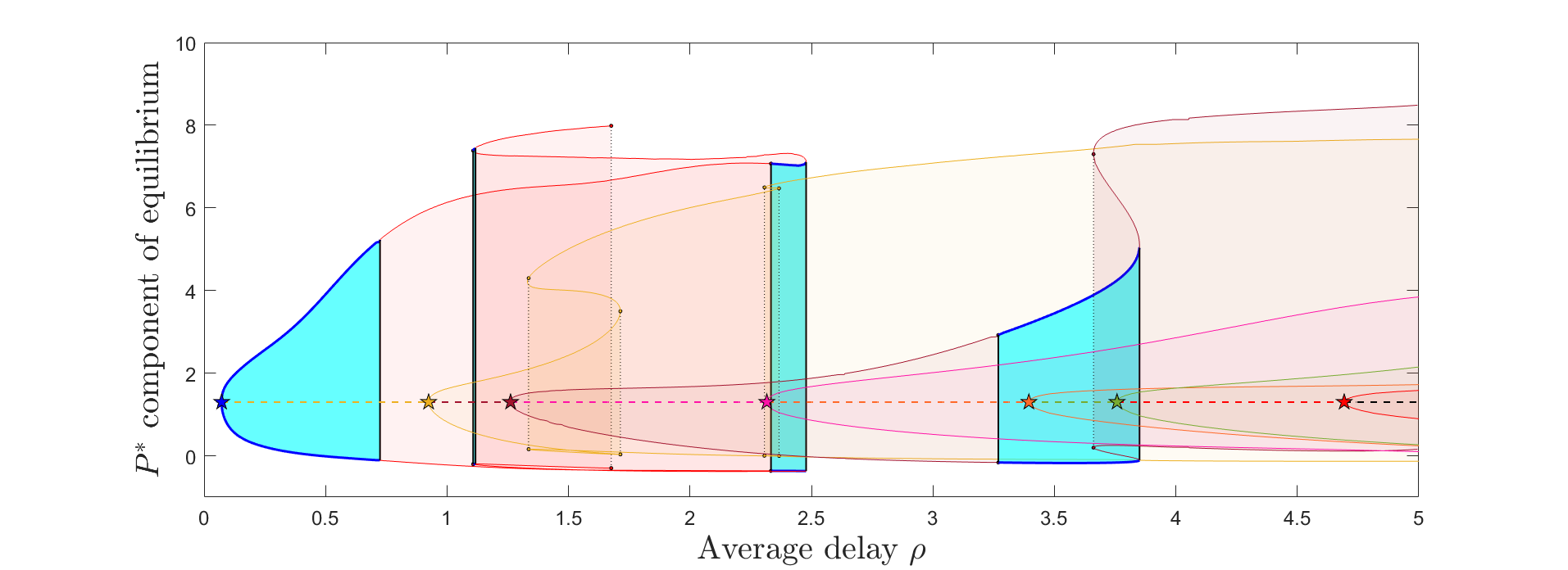}
\includegraphics[width=0.9\textwidth]{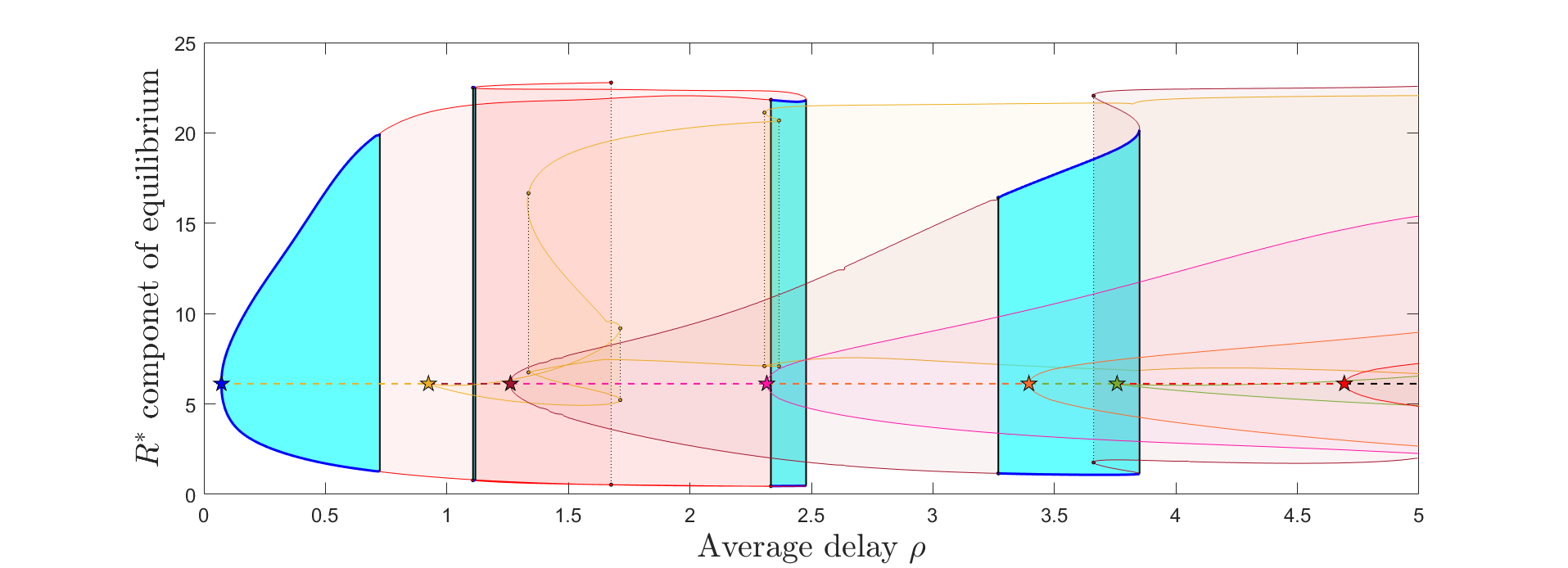}
\caption{\small \emph{{\bf Effect of discrete delay on oscillatory regimes.} The bifurcation diagram with respect to the average discrete delay $\rho$ is sketched for fixed connectivities $w_{TR}=0.5$ and $w_{RT}=0.4$. The panels show from the perspective of the cortical pyramidal component $P$ (top) and of the reticular nucleus $R$ (bottom) the evolution of the primary equilibrium branch with increasing $\rho$. The branch undergoes supercritical Hopf bifurcations (marked with stars), and the limit cycles that are generated at these Hopf points. As before, stable cycles are shown in solid cyan, and unstable cycles are shown in various transparent shades. The cycle branches were followed until they appeared to have resolved in terms of producing any additional stable windows. The LPC and LCPD (flip) bifurcations are marked as vertical solid lines if they delimit regions of stable from unstable cycles. Other additional LPC bifurcations are shown as dotted vertical lines. Not all cycle bifurcations are shown, to avoid cluttering of the panels.}}
\label{bif_rho_for_wTR_0_5_wRT_0_4}
\end{center}
\end{figure}

The three values of $\rho$ considered above provide only discrete snapshots of the dependence on delay. To examine this dependence more continuously, we next construct a one-parameter bifurcation diagram with respect to $\rho$, fixing the representative connectivity pair $w_{TR}=0.5$ and $w_{RT}=0.4$ (the same values used in the weak Gamma analysis). Figure~\ref{bif_rho_for_wTR_0_5_wRT_0_4} confirms the existence of an interval of small discrete delays for which stable oscillations are biologically admissible. At the same time, it shows that additional windows of stable oscillatory behavior can occur at larger values of $\rho$; however, these cycles extend into the negative domain and therefore do not satisfy the positivity requirement of the model.

Interestingly, the loss of biological admissibility in these higher-$\rho$ oscillatory windows is generally mild: the cycles typically enter the negative domain only slightly, and only in a single component. This suggests that these regimes may lie very close to the boundary of biological admissibility, rather than representing fundamentally unrealistic dynamics. In particular, relatively small changes in one or more of the secondary model parameters---many of which have been fixed throughout the present analysis at their baseline values from Tables~\ref{tab:noncoupling_refs} and~\ref{tab:coupling_refs} may be sufficient to shift these cycles entirely into the positive domain. Thus, the additional stable oscillatory windows identified at larger discrete delays may be viewed as latent biologically admissible regimes, accessible under modest changes in the broader parameter configuration. A systematic exploration of this possibility is important, but falls beyond the scope of the present work, whose primary aim is a broad comparison of the dynamical effects produced by different delay formulations. A more comprehensive investigation of these higher-delay oscillatory regimes, including their dependence on the remaining model parameters, is currently being pursued in conjunction with further refinements of the model architecture, as discussed in the section on limitations and future work.

\section{Discussion}

\subsection{Specific comments on the model}

In this paper, we studied the dynamics of a corticothalamic-reticular circuit (CTRC) under three temporal formulations: the original system without delays, the same system with weak Gamma distributed delays, and then with discrete delays. The delay-free analysis first established how the connectivity architecture organizes access to oscillatory behavior, while the delayed systems allowed us to ask how this underlying landscape is reshaped when temporal integration is introduced. Across the three formulations, the results point to a hierarchy of control in which recurrent cortical excitation largely determines whether oscillatory behavior is accessible, while the thalamo-reticular couplings provide finer control over where oscillations occur and what form they take.

An important mathematical feature of both delayed formulations is that the equilibria themselves are unchanged by the introduction of delay. For any fixed connectivity configuration, the equilibrium locations are therefore inherited directly from the delay-free system; what the temporal kernel changes is their stability and the bifurcation structure organized around them. Delays can consequently create or remove access to oscillatory regimes, shift their onset and termination, and generate substantially different periodic-orbit structures without altering the underlying fixed points. This separation is useful conceptually: connectivity determines the equilibrium landscape, while the temporal structure of feedback can profoundly reorganize the dynamics supported around it.

In the delay-free system, recurrent pyramidal excitation $w_{PP}$ acts as an important access parameter: increasing $w_{PP}$ can destabilize the equilibrium and open a bounded oscillatory window, while corticothalamic excitation of the relay population, $w_{TP}$, primarily shifts the location of this window. Once the cortical background is appropriately positioned, the balance between $w_{TR}$ and $w_{RT}$ becomes particularly important. Increasing $w_{TR}$ can drive the system into oscillations through a Hopf bifurcation and later out of them through an LPC, but the location, width, amplitude and biological viability of this window depend strongly on $w_{RT}$ and on the cortical operating point. Conversely, changing $w_{RT}$ can move the system between stable equilibria, stable cycles and, for some parameter combinations, more complicated dynamical regimes. These results motivated our focus on the thalamo-reticular pair $(w_{TR},w_{RT})$ in the delayed analysis: neither coupling acts as an isolated control parameter, and the relevant aspect is their balance within the broader CTRC loop.

This dependence becomes even clearer for weak Gamma distributed delays. In the $(w_{TR},w_{RT})$ plane, changing the average delay $\rho$ does not simply translate the oscillatory domain uniformly. Instead, its effect depends strongly on the underlying connectivity. For relatively large $w_{RT}$, increasing $\rho$ progressively compresses the oscillatory region: the Hopf boundary moves toward larger $w_{TR}$ while the terminal LPC moves toward smaller $w_{TR}$. In this regime, stronger delay therefore requires stronger TRN-to-relay coupling to initiate oscillations, while at the same time reducing how far that coupling can be increased before oscillations terminate. At lower $w_{RT}$, however, this dependence becomes nonmonotone. Moderate increases in $\rho$ can initially broaden the available $w_{TR}$ window, before further increases reverse the trend and again compress it. Thus, the influence of delay cannot be separated from the excitatory return from relay to TRN: the same change in $\rho$ can either facilitate or suppress oscillations depending on the connectivity background.

The complementary dependence on $w_{RT}$ reinforces this picture. The effect of increasing the delay depends strongly on the underlying strength of the reciprocal relay--TRN couplings: in some connectivity regimes, increasing $\rho$ progressively contracts the oscillatory window, whereas in others it can first suppress stable oscillations and then allow them to re-emerge within a different range of $w_{RT}$. Thus, delay does not act as a simple additive perturbation to the connectivity landscape. Rather, $w_{TR}$ and $w_{RT}$ define the dynamical background on which temporal integration operates, so that changing the connectivity can qualitatively alter the effect of increasing $\rho$.

The connectivity parameters also shape the temporal form of the rhythms once oscillations are present. Across the weak Gamma simulations, changes in $w_{TR}$ and $w_{RT}$ alter not only whether the system cycles, but also its amplitude, frequency and duty cycle. For a fixed delay, moving through the oscillatory window can therefore move the system between rhythms with very different temporal profiles; conversely, at fixed connectivity, increasing $\rho$ can slow and broaden the oscillations substantially. This joint dependence is important physiologically, since entry into a stable cycle alone is not sufficient to identify a spindle-like or infra-slow regime. The rhythm must also occupy an appropriate range of population activities and temporal organization.

Reticular self-inhibition $w_{RR}$ provides an additional control on the oscillatory dynamics, but its effect depends on the strength of the reciprocal relay--TRN loop. Reducing $w_{RR}$ weakens the inhibitory feedback acting directly on the reticular population, generally allowing larger oscillatory excursions and permitting stable cycles to persist over a wider range of delays. The extent of this effect, however, depends on how strongly changes in reticular activity are transmitted to the relay population through $w_{TR}$ and fed back to the TRN through $w_{RT}$. 

Against this detailed connectivity dependence, one of the most interesting results of the weak Gamma model is the persistence of a common higher-level organization. Across the connectivity configurations examined, short delays support biologically admissible spindle-like activity, whereas sufficiently long delays support a distinct infra-slow regime. Between them lies a broad interval in which the periodic orbit may remain mathematically stable but becomes biologically inadmissible because one population enters the negative domain. The delay therefore does not simply slow a spindle continuously into an infra-slow oscillation. Instead, it selects between two distinct functional operating regimes, with a biological ``no-go'' interval separating them. Connectivity controls where these regimes occur, how broad they are, and how strongly the populations are recruited, but does not erase this overall organization.

This separation is especially interesting because spindle and infra-slow rhythms play different roles in NREM sleep. The former reflect relatively rapid coordinated thalamo-cortical activity, whereas infra-slow fluctuations organize spindle-rich and spindle-poor periods over much longer timescales. Within the weak Gamma model, the same circuit can support both forms of organization without continuously passing through every intermediate rhythm. The inadmissible interval effectively partitions the available dynamics into two functional modes. In this sense, temporal integration acts not merely as a delay in transmission but as a selector of collective timescale.

The discrete-delay formulation preserves some aspects of this picture while departing from it in important ways. At short delays, it can still support stable spindle-like activity, and the resulting rhythms remain strongly shaped by the thalamo-reticular connectivity. As $\rho$ increases, however, the bifurcation structure becomes substantially richer, with multiple Hopf bifurcations and repeated losses and recoveries of periodic-orbit stability through LPC and period-doubling bifurcations. Stable oscillatory windows can therefore disappear and reappear throughout the parameter range, rather than remaining organized around one simple oscillatory corridor.

This richer structure does not reproduce the same clean delay-driven separation between spindle and infra-slow activity observed for weak Gamma integration. This does not necessarily make the discrete formulation biologically irrelevant. Indeed, many of the additional stable cycles lie only slightly outside the biologically admissible domain, often becoming weakly negative in only one population. They therefore represent a reservoir of nearby dynamics that could potentially become admissible under modest changes in parameters that were held fixed here. The difference is instead one of organization and robustness: under the parameterization considered in this study, weak Gamma integration naturally produces well-separated functional regimes, whereas the discrete delay exposes a larger repertoire whose biological realization appears to depend more strongly on finer parameter adjustment.

This distinction is consistent with our broader view that there is no reason to assume that one delay kernel should operate universally across neural systems, or even within a single system under all physiological conditions. Different temporal-integration profiles may be appropriate for different networks, states or functional demands. For the CTRC system considered here, however, the weak Gamma kernel provides a particularly economical organization of the two sleep-related timescales of interest, while the discrete formulation reveals greater dynamical flexibility and a substantial amount of latent oscillatory structure. We return below to the broader implications of this contrast and to the possibility that different temporal kernels may themselves form part of the functional repertoire of neural systems.

\subsection{General comments and significance}

The results above suggest a broader interpretation of temporal integration in the corticothalamic system. The TRN-centered circuit is already known to participate in dynamics spanning very different timescales. At the faster end, reciprocal interactions between TRN and thalamic relay populations are central to the generation of sleep spindles, with cortical feedback further shaping and propagating these events~\cite{halassa2011selective,krosigk1993cellular,mak2017coordination}. At a much slower scale, approximately $0.02$ Hz fluctuations during NREM sleep organize the occurrence and clustering of spindles and contribute to alternating periods of sleep continuity and fragility~\cite{lazar2019infraslow,lecci2017coordinated,watson2018cognitive}. Thus, the spindle and infra-slow regimes considered here should not be viewed as two versions of the same oscillation. Rather, they represent distinct levels of temporal organization that coexist within the same broader corticothalamic system.

From this perspective, the transition obtained with weak Gamma delays is physiologically plausible not because a spindle is expected to slow continuously until it becomes an infra-slow rhythm, but almost for the opposite reason. The model provides a mechanism by which the same connectivity architecture can support two distinct functional modes while preventing a biologically meaningful continuum between them. Short temporal integration favors spindle-like activity, whereas sufficiently extended integration supports a much slower network-level organization. Between these regimes, the periodic solution persists mathematically but leaves the biologically admissible state space. The resulting ``no-go'' interval therefore provides a dynamical separation between two functions that operate on very different timescales.

This separation should not be interpreted as a rigid boundary in parameter space. In the weak Gamma model, as in several of the discrete-delay regimes, loss of admissibility generally occurs through relatively small negative values in a single population. Consistently, changing the connectivity background can substantially shift the onset and termination of the no-go region, and can widen or narrow both the spindle-like and infra-slow windows. The robustness of the weak Gamma result therefore lies less in the precise location of these boundaries than in the persistence of the overall organization: across the connectivity configurations examined here, the two physiologically meaningful regimes remain separated by an intermediate range that is not biologically viable. The connectivity parameters determine where that separation occurs, while the temporal kernel helps determine its overall form.

Such an organization may itself be advantageous. Neural circuits must remain capable of changing state, but physiological function may also require protection against arbitrary transitions into every dynamical state that the underlying network can mathematically support. In the weak Gamma system, changes in temporal integration provide access to two relevant operating regimes, while the intervening loss of admissibility constrains the routes between them. The same anatomical circuit can therefore be reused for functions expressed on very different timescales without requiring a separate architecture for each. In this sense, temporal integration may provide an economical mechanism for state-dependent reorganization while preserving a degree of separation between functional modes.

The discrete delay system suggests a different balance. Its mathematical bifurcation structure is considerably richer, with multiple Hopf points, repeated losses and recoveries of cycle stability, and, at larger delays, oscillations with more complicated temporal profiles, including the multi-peaked structure seen for $\rho=1$. Importantly, however, this should not be equated with a correspondingly larger repertoire of physiologically available rhythms. Many of these stable-cycle windows enter the negative domain and therefore remain biologically inadmissible under the parameter values considered here. What the discrete system reveals instead is a larger collection of nearby dynamical possibilities, some of which lie only slightly beyond the biological boundary and could potentially become admissible under relatively small changes in other parameters.

This distinction suggests a possible tradeoff between robustness and dynamical flexibility. The weak Gamma formulation organizes the dynamics comparatively cleanly around two physiologically interpretable regimes. The discrete formulation, by contrast, places the system near a greater variety of alternative oscillatory states, but access to many of them depends more strongly on the surrounding parameter configuration. Such proximity may be useful: changes in connectivity, external input, or physiological state could potentially recruit behaviors that are not accessible in the baseline configuration. At the same time, an operating landscape containing many nearby stability windows and more complicated rhythms may be more sensitive to perturbation than one in which the dominant functional regimes are more clearly separated.

This possibility is particularly interesting in the thalamocortical system because the circuitry involved in normal spindle generation is also capable of supporting abnormal synchronized activity. Changes in thalamic and corticothalamic interactions have been associated with pathological oscillatory regimes, including epileptic thalamocortical dynamics~\cite{pinault2003cellular,Sheeba2008}. We therefore raise, as a hypothesis rather than a conclusion of the present study, the possibility that the temporal organization of feedback may contribute not only to the repertoire of rhythms available to a circuit, but also to its robustness against transitions into atypical dynamics. A broadly distributed temporal kernel such as the weak Gamma form considered here may favor stronger segregation of physiological operating regimes, whereas a more sharply concentrated temporal response may place the system closer to additional and potentially more complicated oscillatory states. The richer bifurcation structure and unusual cycle shapes observed in the discrete delay system are consistent with this possibility, but should not themselves be interpreted as signatures of pathology.

Conversely, it would be premature to identify weak Gamma integration as a uniquely ``healthy'' temporal kernel, or discrete delays as intrinsically pathological. We have argued previously that temporal integration profiles need not be universal, and that different kernels may effectively operate in different neural networks, individuals, physiological states, tasks, or contexts. The present results instead suggest that different kernels may provide different dynamical advantages. A more broadly distributed temporal response may favor robust separation between a limited number of functional regimes, whereas a more concentrated response may leave additional dynamical possibilities nearby, at the cost of greater dependence on the precise operating point of the network. Which organization is preferable may itself depend on the function being performed.

More generally, these results shift the question away from identifying a single ``correct'' representation of neural delay and toward asking what different forms of temporal integration allow a network to do. In the present CTRC model, weak Gamma integration provides a particularly economical mechanism for separating spindle-like activity from infra-slow organization, while discrete delays expose a more intricate surrounding dynamical landscape whose biological accessibility depends on finer parameter adjustment. Whether robust temporal dispersion is preferentially associated with normal physiological regulation, and whether changes in the temporal-integration profile can increase susceptibility to atypical or pathological rhythms, remain open questions. Addressing them will require greater physiological detail in the model, together with empirical constraints on how corticothalamic networks actually integrate activity over time.

\subsection{Limitations and future work}

The present model was deliberately formulated at the Wilson-Cowan population level, which provides a tractable framework for identifying bifurcations, mapping oscillatory regimes, and examining how connectivity and temporal integration interact across a relatively large parameter space. This level of description is well suited to the system-level questions addressed here, but compresses many cellular and within-population mechanisms into effective variables and coupling parameters. This limitation is particularly relevant for the TRN. The $R$ node represents the mean activity of a population whose internal architecture is itself dynamically important: TRN neurons exhibit intrinsic bursting, communicate through both chemical inhibition and electrical coupling, and may organize into spatially structured patterns of synchronization. None of this internal organization can be represented explicitly by a single Wilson-Cowan variable.

A natural next step is therefore to introduce cellular resolution selectively within the TRN, rather than replacing the entire population model by a uniformly more detailed description. In our previous work, we studied networks of reduced Rinzel-Golomb neurons representing TRN cells, including inhibitory interactions and structured gap-junctional coupling. We plan to incorporate such a network directly within the present corticothalamic architecture, replacing the single $R$ node by a heterogeneous population of reduced conductance-based neurons, coupled through all-to-all chemical inhibition together with clustered electrical coupling. The cortical excitatory and inhibitory populations and the thalamic relay population can initially remain at the Wilson-Cowan level. The resulting hybrid model would therefore connect cellular and population descriptions within the same CTRC circuit, allowing dynamics generated by the internal organization of the TRN to interact directly with the larger corticothalamic loop.

An important part of this extension will be to establish a meaningful translation between the two modeling scales. The signal transmitted from a Wilson-Cowan population to the cellular TRN network must be related to the inputs received by individual Rinzel-Golomb neurons, while the heterogeneous activity of the TRN network must in turn be coarse-grained into an effective population signal that can interact with the remaining Wilson-Cowan nodes. Developing this correspondence will allow us to ask not only whether the hybrid model reproduces the macroscopic regimes identified here, but also under what conditions a Wilson--Cowan node provides an adequate reduction of a heterogeneous neuronal network, and when microscopic mechanisms such as intrinsic bursting, gap-junctional organization, and cellular heterogeneity qualitatively alter the population-level bifurcation structure. The goal is therefore to preserve the Wilson-Cowan description and an overall envelope, but also introduce cellular detail specifically where mechanisms operating below the population scale are likely to be essential.

The delay formulations considered here are also intentionally idealized. Weak Gamma and discrete kernels provide two mathematically clear and dynamically distinct ways of representing temporal integration, but real corticothalamic pathways need not share a single delay distribution or even the same effective temporal profile across connections. Different pathways may have different characteristic delays and degrees of temporal dispersion, and these properties may themselves vary with physiological state. Future work can therefore extend the present analysis to heterogeneous and pathway-specific kernels. The multiscale formulation may eventually provide an additional route toward this problem: rather than prescribing the effective temporal kernel entirely at the population level, one may ask whether particular forms of distributed temporal integration emerge naturally from heterogeneous cellular, synaptic, and network dynamics.

Although the parameter ranges and relative coupling strengths were informed by the experimental and modeling literature, their specific numerical values remain phenomenological rather than being directly estimated from empirical data. A further limitation is that, in order to make the connectivity-delay interactions interpretable, many secondary parameters were held fixed at their baseline values while a smaller number of influential couplings were explored systematically. The resulting boundaries of biological admissibility should therefore not be interpreted as universal. This is especially important because, for both delay formulations, some periodic solutions cross into the negative domain only slightly and often in a single component. Relatively modest changes in the surrounding parameter configuration can consequently move the boundaries of the admissible and inadmissible regions. In the weak Gamma system, this may shift the extent of the spindle, ``no-go,'' and infra-slow windows without necessarily destroying their broader organization; in the discrete system, it may render some of the presently near-admissible stable windows fully positive. A more extensive exploration of these secondary parameter dependencies is therefore warranted.

Relatedly, positivity provides a necessary biological constraint for the current Wilson-Cowan formulation, but the behavior of trajectories close to zero also reflects the limits of the coarse population description. A small negative value in a mean-field variable is mathematically inadmissible and must be excluded here, but it does not by itself specify what cellular mechanism would terminate or reorganize the corresponding activity in a biological circuit. The hybrid model should help clarify this distinction by allowing some of the population-level boundaries identified here to be examined in terms of explicit neuronal recruitment, synchrony, and firing dynamics.

Finally, the present analysis is primarily deterministic and focuses on asymptotic attractors and their bifurcations. Physiological sleep spindles are finite, state-dependent events whose initiation and termination are influenced by ongoing fluctuations, transient inputs, and neuromodulatory state. Incorporating noise, transient perturbations, and slowly varying external drive would therefore provide a natural subsequent step, allowing the bifurcation structures identified here to serve as an organizing framework for understanding how the system actually moves between regimes. Together with the multiscale TRN extension, this would allow future work to connect cellular mechanisms, population dynamics, temporal integration, and state-dependent transitions within a common framework.

\section*{Appendix A}

\begin{thm}[Persistence of stability for small relative delays]
\label{thm:small-q-stability}
Suppose that the equilibrium $X^*$ is linearly asymptotically
stable for the delay-free system \eqref{eq.model.no.delay}. For \(\operatorname{Re}z\geq0\), consider $G(z):=\left(      zI+D-K\right)^{-1}K $.
Denote
\[
    r_K(z)
    :=
    \begin{cases}
    \displaystyle
    \frac{1}{\rho_{\mathrm{sp}}(G(z))},
    &
    \rho_{\mathrm{sp}}(G(z))>0,
    \\[3mm]
    +\infty,
    &
    \rho_{\mathrm{sp}}(G(z))=0.
    \end{cases}
\]
where $\rho_{\mathrm{sp}}$ denotes the spectral radius,
and \begin{equation}\label{eq.q0}
       q_0
    :=
    \inf_{\substack{
       z\in\mathbb{C},\ \operatorname{Re}z\geq0,\ z\neq0\\
       r_K(z)\leq2
    }}
    \frac{r_K(z)}{|z|},
\end{equation}
with the convention that the infimum of the empty set is \(+\infty\).

Then \(q_0\in(0,\infty]\), and for every
\(
    0\leq q<q_0
\)
the characteristic equation \eqref{eq.scaled-characteristic-equation} has no roots in the closed right half-plane.
Consequently, the equilibrium remains asymptotically stable for all
sufficiently small relative delays.
\end{thm}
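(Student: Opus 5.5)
The plan is to factor the scaled characteristic matrix around the delay-free one and turn the existence of a root into a spectral-radius inequality. Since $X^\ast$ is linearly asymptotically stable for \eqref{eq.model.no.delay}, all eigenvalues of $K-D$ lie in the open left half-plane. Hence $zI+D-K$ is invertible for every $z$ with $\operatorname{Re}z\geq0$, and $G(z)$ is well defined and continuous (indeed analytic) there. For such $z$ I will write
\[
zI+D-\widehat h(qz)K=(zI+D-K)+\bigl(1-\widehat h(qz)\bigr)K=(zI+D-K)\Bigl(I+\bigl(1-\widehat h(qz)\bigr)G(z)\Bigr).
\]
Thus $\widetilde\Delta(z;q)=0$ with $\operatorname{Re}z\geq0$ holds iff $-1/(1-\widehat h(qz))$ is an eigenvalue of $G(z)$, which requires $1-\widehat h(qz)\neq0$. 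This in turn forces
\[
\bigl|1-\widehat h(qz)\bigr|\geq \frac{1}{\rho_{\mathrm{sp}}(G(z))}=r_K(z),
\]
and $\rho_{\mathrm{sp}}(G(z))>0$. The case $z=0$ is excluded directly: $\widehat h(0)=1$ gives $\widetilde\Delta(0;q)=\det(D-K)\neq0$.

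Next I bound $|1-\widehat h(w)|$ for $\operatorname{Re}w\geq0$, using that $h\geq0$ is normalized with unit mean; this covers both the weak Gamma and the discrete kernel.
\begin{itemize}
\item First, $|\widehat h(w)|\leq\int_0^\infty h(s)|e^{-ws}|\,ds\leq1$, so $|1-\widehat h(w)|\leq2$.
\item Second, $|1-e^{-ws}|=\bigl|\int_0^1 ws\,e^{-tws}\,dt\bigr|\leq|w|s$ when $\operatorname{Re}w\geq0$. Integrating against $h$ gives $|1-\widehat h(w)|\leq|w|\int_0^\infty s h(s)\,ds=|w|$.
\end{itemize}
Combining the two bounds with the previous paragraph, a root $z\neq0$ in the closed right half-plane requires both $r_K(z)\leq2$ and $r_K(z)\leq q|z|$. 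The second condition means $q\geq r_K(z)/|z|\geq q_0$ by \eqref{eq.q0}. Hence for $0\leq q<q_0$ there are no such roots; for $q=0$ this is just the hypothesis.

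It remains to show $q_0>0$, which I expect to be the main technical point. Since $G(z)=(zI+D-K)^{-1}K$ is continuous on the closed right half-plane and satisfies $\|G(z)\|=O(1/|z|)$ as $|z|\to\infty$, there is a constant $M$ with $\rho_{\mathrm{sp}}(G(z))\leq\|G(z)\|\leq M$ on that half-plane, so $r_K(z)\geq1/M$. Moreover, $r_K(z)\leq2$ forces $\|G(z)\|\geq\rho_{\mathrm{sp}}(G(z))\geq1/2$. By the decay of $\|G\|$, this confines $z$ to a bounded set $|z|\leq R$. On the admissible set in \eqref{eq.q0} we therefore get $r_K(z)/|z|\geq1/(MR)>0$, and if that set is empty then $q_0=+\infty$.

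Finally, to pass from "no roots in the closed right half-plane" to asymptotic stability, I would invoke the standard spectral theory for each kernel. For the weak Gamma kernel the system is a finite-dimensional ODE (via the linear chain \eqref{eq.model.weak.gamma}), and its characteristic function is a polynomial after clearing denominators, so there are finitely many roots. For the discrete delay, \eqref{eq.model.discrete.delay} is of retarded type, so only finitely many roots lie in any right half-plane $\operatorname{Re}z>-\varepsilon$. In both cases the absence of roots with $\operatorname{Re}z\geq0$ gives a negative spectral abscissa and hence exponential stability. The one point I would check carefully is that nonnegativity of $h$ is genuinely used in the bound $|\widehat h(w)|\leq1$; it holds for both kernels considered in the paper.
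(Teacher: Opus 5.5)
Your proposal is correct and follows essentially the same route as the paper. The factorization $zI+D-\widehat h(qz)K=(zI+D-K)\bigl[I+(1-\widehat h(qz))G(z)\bigr]$, the kernel bound $|1-\widehat h(w)|\le\min\{|w|,2\}$, and the resulting contradiction $q\ge r_K(z)/|z|\ge q_0$ are exactly the paper's steps. Two points go slightly beyond it: your proof of $q_0>0$ via the global bound $\|G(z)\|\le M$ and the radius $R$ (giving $q_0\ge 1/(MR)$) is a somewhat cleaner version of the paper's compact-annulus argument, and your closing step from a root-free closed right half-plane to asymptotic stability makes explicit what the paper leaves implicit.
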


\begin{proof}
Let us consider 
\(
    M(z,q)
    :=
    zI+D-\widehat h(qz)K
\)
and
\(
    M_0(z)
    :=
    zI+D-K.
\)
As $\sigma(K-D)$ is included in the open left half-plane, it follows that \(M_0(z)\) is
invertible for every \(z\) with \(\operatorname{Re}z\geq0\).

We first derive two estimates for the Laplace transform. For
\(\operatorname{Re}\xi\geq0\), we have
\[
    |\widehat h(\xi)|
    \leq
    \int_0^\infty
    |e^{-\xi s}|h(s)\,ds
    \leq
    \int_0^\infty h(s)\,ds
    =
    1.
\]
Moreover,
\[
    1-e^{-\xi s}
    =
    \xi s\int_0^1e^{-\theta\xi s}\,d\theta,
\]
and hence
\[
    |1-e^{-\xi s}|
    \leq
    |\xi|s
    \qquad
    \text{for }\operatorname{Re}\xi\geq0.
\]
As $h$ has unit mean, we obtain
\[
    |1-\widehat h(\xi)|
    =
    \left|
       \int_0^\infty
       \left(
          1-e^{-\xi s}
       \right)h(s)\,ds
    \right|    
    \leq
    |\xi|
    \int_0^\infty s h(s)\,ds
    =
    |\xi|.
\]
Together with \(|\widehat h(\xi)|\leq1\), this gives
\begin{equation}
    |1-\widehat h(\xi)|
    \leq
    \min\{|\xi|,2\}.
\label{eq:kernel-transform-bound}
\end{equation}
Let us assume, by contradiction, that for some \(q<q_0\) there exists a
characteristic root \(z\) with
\(
    \operatorname{Re}z\geq0.
\)
We first notice that \(z\neq0\), as
\(
    M(0,q)
    =
    D-K
\)
is nonsingular. 
Defining
\(
    \delta(z,q)
    :=
    1-\widehat h(qz)
\), we have 
\[
    M(z,q) =
    zI+D-K
    +
    \left(
       1-\widehat h(qz)
    \right)K
    =
    M_0(z)
    \left[
       I+\delta(z,q)G(z)
    \right].
\]
Since \(M_0(z)\) is invertible and \(\det M(z,q)=0\), it follows
that \(I+\delta(z,q)G(z)\)
is singular. Consequently, there exists
\(\mu\in\sigma(G(z))\) such that
\[
    1+\delta(z,q)\mu=0.
\]
Therefore,
\begin{equation}
    |\delta(z,q)|
    =
    \frac{1}{|\mu|}
    \geq
    \frac{1}{\rho_{\mathrm{sp}}(G(z))}
    =
    r_K(z).
\label{eq:structured-radius-inequality}
\end{equation}
On the other hand, \eqref{eq:kernel-transform-bound} yields
\begin{equation}
    |\delta(z,q)|
    \leq
    \min\{q|z|,2\}.
\label{eq:delta-bound}
\end{equation}
Equations \eqref{eq:structured-radius-inequality} and
\eqref{eq:delta-bound} imply
\(
    r_K(z)\leq2
\)
and
\(
    r_K(z)\leq q|z|.
\)
Hence
\[
    q
    \geq
    \frac{r_K(z)}{|z|}
    \geq
    q_0,
\]
which contradicts \(q<q_0\). Hence, no characteristic root can lie in the
closed right half-plane.

It remains to verify that \(q_0\in(0,\infty]\). Since \(K-D\) is Hurwitz,
\(G(z)\) is continuous on the closed right half-plane. Moreover,
\[
    G(z)
    =
    \frac{1}{z}
    \left(
       I-\frac{K-D}{z}
    \right)^{-1}K,
\]
which leads to
\[
    \rho_{\mathrm{sp}}(G(z))
    =
    O\!\left(\frac{1}{|z|}\right)
    \qquad
    \text{as }|z|\to\infty.
\]
Therefore \(r_K(z)\to+\infty\) as \(|z|\to\infty\) and the constraint \(r_K(z)\le2\) confines \(z\) to a bounded set. Since \(G\) is bounded near \(z=0\), \(r_K(z)\) is bounded below by a positive constant there, and therefore \(r_K(z)/|z|\to+\infty\) as \(z\to0\). Consequently, every minimizing sequence with finite objective can be restricted to a compact annulus \(\varepsilon\le|z|\le R\), on which \(r_K(z)/|z|\) has a strictly positive lower bound.

Hence, the infimum in \eqref{eq.q0} is either taken over an
empty set, in which case \(q_0=+\infty\), or over a compact set bounded
away from zero, in which case it is strictly positive.
\end{proof}

\begin{thm}[Instability for large relative delays]
\label{thm:large-delay}
For the equilibrium $X^*$, define the limiting characteristic function
\[
\Delta_\infty(u)
:=
\det\!\left(D-\widehat h(u)K\right).
\]
and suppose that it has a simple zero \(u_+\) in the open right
half-plane:
\[
\Delta_\infty(u_+)=0,
\qquad
\Delta_\infty'(u_+)\neq 0,
\qquad
\operatorname{Re}u_+>0.
\]
Then there exist \(Q>0\) and a characteristic root \(z(q)\) of \eqref{eq.scaled-characteristic-equation}, defined for
\(q\geq Q\), such that
\[
z(q)
=
\frac{u_+}{q}
+
O\!\left(\frac{1}{q^2}\right)
\qquad\text{as }q\to\infty.
\]
In particular, \(\operatorname{Re}z(q)>0\) for all sufficiently large \(q\),
and hence the equilibrium is unstable for all \(q\geq Q\).
\end{thm}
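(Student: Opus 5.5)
The plan is to rescale the spectral variable so that the large-$q$ limit becomes a regular perturbation. Set $u = qz$ and $\varepsilon = 1/q$. Then \eqref{eq.scaled-characteristic-equation} becomes
\[
F(u,\varepsilon) := \det\!\bigl(\varepsilon u I + D - \widehat h(u)K\bigr) = 0 ,
\]
and $F(u,0)=\Delta_\infty(u)$. A root $u(\varepsilon)$ of $F(\cdot,\varepsilon)$ gives a characteristic root $z(q)=u(1/q)/q$. So it suffices to continue the simple zero $u_+$ of $\Delta_\infty$ to small $\varepsilon\ge0$, with $u(\varepsilon)=u_+ + O(\varepsilon)$. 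This yields $z(q)=u_+/q+O(1/q^2)$, and hence $\operatorname{Re} z(q) = (\operatorname{Re}u_+ + O(1/q))/q >0$ for $q\ge Q$.

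Regularity comes first. Since $\operatorname{Re}u_+>0$, the Laplace transform $\widehat h$ is holomorphic on the open right half-plane: the integral converges absolutely there, and differentiation under the integral is justified by $\int s e^{-\operatorname{Re}(u)s}h(s)\,ds<\infty$. In the weak Gamma case $\widehat h(u)=(1+u)^{-1}$, and in the discrete case $\widehat h(u)=e^{-u}$, both entire away from $u=-1$. Hence $F$ is holomorphic in $u$ on a closed disc $\overline{B}(u_+,r)\subset\{\operatorname{Re}u>0\}$. It is also a polynomial in $\varepsilon$, and in fact entire in $\varepsilon$ viewed as a complex parameter. In particular $F$ is $C^1$ jointly, with $\partial_u F(u_+,0)=\Delta_\infty'(u_+)\neq0$.

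The holomorphic implicit function theorem then gives a unique holomorphic branch $u(\varepsilon)$ near $\varepsilon=0$ with $u(0)=u_+$. Its derivative is $u'(0)=-\partial_\varepsilon F(u_+,0)/\Delta_\infty'(u_+)$, which yields the $O(\varepsilon)$ bound. To avoid the implicit function theorem one can use Rouché's theorem instead. Choose $r$ so that $u_+$ is the only zero of $\Delta_\infty$ in $\overline B(u_+,r)$. Since $|F(u,\varepsilon)-\Delta_\infty(u)|\le C\varepsilon$ uniformly on the disc (the determinant is multilinear and all entries are bounded), for small $\varepsilon$ there is exactly one zero in the disc. Applying the same argument on the shrinking circle $|u-u_+|=C'\varepsilon$, using the lower bound $|\Delta_\infty(u)|\ge \tfrac12|\Delta_\infty'(u_+)|\,|u-u_+|$ near $u_+$, gives $|u(\varepsilon)-u_+|=O(\varepsilon)$.

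The only delicate point is ensuring that the estimates hold uniformly for general kernels $h$. This is why the disc is kept strictly inside the open right half-plane, where $|\widehat h|\le1$ and $\widehat h$ is analytic. With that, $Q$ is chosen so that $1/Q$ lies below the Rouché threshold and $C/Q<\operatorname{Re}u_+$. Instability of the equilibrium for $q\ge Q$ then follows from the existence of the root $z(q)$ with positive real part, by the standard linearized instability principle for the retarded or distributed-delay system.
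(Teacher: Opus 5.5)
Your proposal is correct and follows the paper's proof: the same substitution $u=qz$, $\varepsilon=1/q$, the identification $F(u,0)=\Delta_\infty(u)$ with $\widehat h$ holomorphic near $u_+$ because $\operatorname{Re}u_+>0$, and the implicit function theorem at the simple zero, giving $u(\varepsilon)=u_++O(\varepsilon)$ and hence $z(q)=u_+/q+O(1/q^2)$. Your holomorphy check for $\widehat h$ and the Rouch\'e-based alternative only spell out details that the paper leaves implicit.
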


\begin{proof}
For $q>0$, denoting $\varepsilon=q^{-1}$ and
$u=qz$, the characteristic equation \eqref{eq.scaled-characteristic-equation}
is equivalent to
\[
F(u,\varepsilon)
:=
\det\!\left(
\varepsilon u I+D-\widehat h(u)K
\right)
=0.
\]
Since \(\operatorname{Re}u_+>0\), the Laplace transform \(\widehat h\) is
holomorphic in a neighborhood of \(u_+\). Moreover,
\[
F(u,0)=\Delta_\infty(u).
\]
By the implicit function theorem, there exists a branch
\(u(\varepsilon)\), defined for sufficiently small \(\varepsilon\geq0\),
such that
\[
u(0)=u_+,
\qquad
F(u(\varepsilon),\varepsilon)=0,
\qquad
u(\varepsilon)=u_++O(\varepsilon).
\]
Returning to \(z=\varepsilon u\) gives
\[
z(q)
=
\frac{1}{q}u\!\left(\frac{1}{q}\right)
=
\frac{u_+}{q}
+
O\!\left(\frac{1}{q^2}\right).
\]
Since \(\operatorname{Re}u_+>0\), it follows that
\(\operatorname{Re}z(q)>0\) for all sufficiently large \(q\). Therefore,
the equilibrium is eventually unstable.
\end{proof}

\begin{rem}
\label{rem:large-delay-kernels}
Let
$A=D^{-1}K.
$
For the weak Gamma kernel,
\[
\Delta_\infty(u)
=
\det(D)\,
\det\!\left(I-\frac{1}{1+u}A\right).
\]
Consequently, if \(A\) has a simple eigenvalue \(\mu\) satisfying
\(
\operatorname{Re}\mu>1,
\)
then \(u_+=\mu-1\) satisfies the hypotheses of
Theorem~\ref{thm:large-delay}.

For the discrete kernel,
\[
\Delta_\infty(u)
=
\det(D)\,
\det\!\left(I-e^{-u}A\right).
\]
Hence, if \(A\) has a simple eigenvalue \(\mu\) with
\(
|\mu|>1,
\)
then the corresponding roots
\[
u_k
=
\log|\mu|
+
i\bigl(\arg\mu+2k\pi\bigr),
\qquad k\in\mathbb Z,
\]
lie in the open right half-plane, and
Theorem~\ref{thm:large-delay} applies.
\end{rem}

\color{black}
\clearpage
\bibliographystyle{plain}
\bibliography{references}

\end{document}